\providecommand*{\shuffle}{%
  \mathbin{\mathpalette\shuffle@{}}%
}
\newcommand*{\shuffle@}[2]{%
  \sbox0{$#1\vcenter{}$}%
  \kern .15\ht0 
  \rlap{\vrule height .25\ht0 depth 0pt width 2.5\ht0}%
  \raise.1\ht0\hbox to 2.5\ht0{%
    \vrule height 1.75\ht0 depth -.1\ht0 width .17\ht0 %
    \hfill
    \vrule height 1.75\ht0 depth -.1\ht0 width .17\ht0 %
    \hfill
    \vrule height 1.75\ht0 depth -.1\ht0 width .17\ht0 %
  }%
  \kern .15\ht0 
}
\DeclareMathOperator{\Pow}{Pow}
\DeclareMathOperator{\N}{\mathbb{N}}
\DeclareMathOperator{\Z}{\mathbb{Z}}
\newcommand{\qbin}[3][q]{\binom{#2}{#3}_{\!\! #1}}
\declaretheorem[numberwithin=section]{theorem}
\declaretheorem[sibling=theorem]{lemma,corollary,proposition}
\declaretheorem[sibling=theorem,style=definition]{example,definition,remark}
\declaretheorem{claim}
\declaretheoremstyle[
    headfont=\normalfont\itshape, 
    bodyfont = \normalfont,
    qed=$\blacksquare$, 
    headpunct={:}]{claimproofstyle} 
\declaretheorem[name={Proof of claim}, style=claimproofstyle, unnumbered]{claimproof}
\crefname{equation}{}{}
\title{$q$-Parikh Matrices and $q$-deformed binomial coefficients of words}
\author{Antoine Renard}
\author{Michel Rigo\thanks{ORCID 0000-0001-7463-8507}}
\author{Markus A. Whiteland\thanks{Supported by the FNRS Research grant 1.B.466.21F, ORCID 0000-0002-6006-9902.}}
\affil{Department of Mathematics, University of Li\`ege, Li\`ege, Belgium}
\affil{\texttt{\{antoine.renard,m.rigo,mwhiteland\}@uliege.be}}
\date{}
\begin{document}
\maketitle

\begin{abstract}
 We have introduced a $q$-deformation, i.e., a polynomial in $q$ with natural coefficients, of the binomial coefficient of two finite words $u$ and $v$ counting the number of occurrences of $v$ as a subword of $u$.
 In this paper, we examine the $q$-deformation of Parikh matrices as introduced by E{\u{g}}ecio{\u{g}}lu in 2004. 

 Many classical results concerning Parikh matrices generalize to this new framework: Our first important observation is that the elements of such a matrix are in fact $q$-deformations of binomial coefficients of words.
 We also study their inverses and as an application, we obtain new identities about $q$-binomials.

 For a finite word $z$ and for the sequence $(p_n)_{n\ge 0}$ of prefixes of an infinite word, we show that the polynomial sequence $\qbin{p_n}{z}$ converges to a formal series.
 We present links with additive number theory and $k$-regular sequences. In the case of a periodic word $u^\omega$, we generalize a result of Salomaa: the sequence $\qbin{u^n}{z}$ satisfies a linear recurrence relation with polynomial coefficients.
Related to the theory of integer partition, we describe the growth and the zero set of the coefficients of the series associated with $u^\omega$.

Finally, we show that the minors of a $q$-Parikh matrix are polynomials with natural coefficients and consider a generalization of Cauchy's inequality. We also compare $q$-Parikh matrices associated with an arbitrary word with those associated with a canonical word $12\cdots k$ made of pairwise distinct symbols.
\end{abstract}

\section{Introduction}

In a recent paper \cite{RRW}, we have introduced $q$-deformations of binomial coefficients of words.
\begin{definition}\label{def:qbin}
  The {\em $q$-binomial coefficient} of two words over a finite alphabet $A$ is a polynomial of $\N[q]$ recursively defined as follows. For all words $u$,
  $v \in A^*$ and letters $a,b \in A$, we consider a $q$-deformation of Pascal's identity:
  \begin{equation}
    \label{eq:recdef}
    \qbin{u}{\varepsilon} = 1, \quad \qbin{\varepsilon}{v} = 0 \text{ if }v \neq \varepsilon, \quad \text{and } \quad \qbin{ua}{vb} = \qbin{u}{vb}\cdot q^{|vb|} + \delta_{a,b}\qbin{u}{v}.
  \end{equation}
\end{definition}
As usual when dealing with $q$-deformations of a combinatorial quantity, letting $q$ tend to~$1$ gives back the classical binomial coefficient $\binom{u}{v}\in\mathbb{N}$ of the words $u$ and $v$ counting the number of occurrences of $v$ as a subword of $u$. See, for instance, \cite{Lothaire1997} for a survey. As an example, for $u=u_1\cdots u_6=abaaba$ over the alphabet $\{a,b\}$, we have
\[\qbin{u}{ba}=q^6+q^5+q^3+1\text{ and }\binom{u}{ba}=4\]
because $u_2u_3=u_2u_4=u_2u_6=u_5u_6=ba$ and there are no other occurrence of $ba$. It is quite evident that the polynomial $q^6+q^5+q^3+1$ contains more information about the occurrences of $ba$ in $u$ than its evaluation at $1$, which tells us only about the appearance of four subwords $ba$. Indeed, for example the constant term $1$ implies that $ba$ is a suffix of $u$.

\subsection{Parikh matrices}
Let $k$ be the size of the alphabet~$A$. For convenience (in particular, for indexing matrix entries), we identify $A$ with $\{1,\ldots,k\}$. 
When considering binomial coefficients of words, it is a quite natural question to look at the so-called Parikh matrices introduced more than twenty years ago by Mateescu {\em et al.} \cite{MateescuSalomaa2001}. Associated with a finite word~$u\in A^*$, it is an upper triangular matrix $M(u)$ of size $(k+1)\times (k+1)$ whose elements are binomial coefficients $\binom{u}{v}$ for words $v$ of the form $i\, (i+1)\cdots j$, $1\le i\le j\le k$. In particular, on the second diagonal $(M_{i,i+1})$, one finds the $k$~coefficients~$\binom{u}{d}$ for the letters $d$ in the alphabet~$A$. So the second diagonal encodes what is called the {\em Parikh vector} or {\em abelianization} of $u$.

There is a vast literature on Parikh matrices. See, for instance, \cite{MateescuSalomaa2001,MateescuSalomaa2004,SalomaaCauchy2003}. A particular question that has attracted a lot of attention is the {\em $M$-ambiguity problem} or, injectivity problem, that is: characterize words with the same Parikh matrix \cite{FosseRichomme,SalomaaCriteria2010,WenSubramanian2018}. It can be related to the famous {\em reconstruction problem}: does the knowledge of binomial coefficients $\binom{u}{v}$ for some words~$v$ permit to uniquely determine the word~$u$? See, for instance, \cite{Fleischmann2021,richomme2023reconstructing}.

\c{S}erb\u{a}nu\c{t}\u{a} has generalized Parikh matrices to matrices induced by a word~$z=z_1\cdots z_n$~\cite{Serbanuta2004}. These matrices have size $(|z|+1)\times (|z|+1)$ and they contain elements of the form $\binom{u}{v}$ for words~$v$ of the form $z_i\, z_{i+1}\cdots z_j$, $1\le i\le j\le |z|$. Taking $z=12\cdots k$ leads back to the definition given by Mateescu {\em et al}.

\subsection{$q$-Parikh matrices}
Having at our disposal our own variation \cref{eq:recdef} of binomial coefficients of words, we then wanted to define the corresponding $q$-deformation of Parikh matrices. Several authors have already tackled this question by immediately considering a convenient $q$-deformation of Parikh matrices. This approach has led to a $q$-analogue of the $M$-ambiguity problem. However it was not linked nor aimed to provide any $q$-analogue to the binomial coefficients of words. See, for instance, \cite{BeraMahalingam2016,Egecioglu}


On the other hand, prior to \cite{Egecioglu}, E{\u{g}}ecio{\u{g}}lu has introduced, for $d\in A=\{1,\ldots,k\}$ and $j\ge 0$, an infinite family of $(k+1)\times(k+1)$ matrices $\mathcal{M}_{d,j}$ with $1$'s on the diagonal and the only other non-zero element is $q^j$ in position $(d,d+1)$ on the second diagonal  \cite{Eeciolu2004}. Multiplying together these matrices gives what we consider to be the appropriate way of introducing $q$-deformations of Parikh matrices. For instance, with $A=\{1,2,3\}$, the matrices $\mathcal{M}_{1,j}$, $\mathcal{M}_{2,j}$ and $\mathcal{M}_{3,j}$ are
\begin{table}[h!tb]
  \[
{\small\left(
\begin{array}{cccccc}
 1 & q^j & 0 & 0 \\
 0 & 1 & 0 & 0  \\
 0 & 0 & 1 & 0  \\
 0 & 0 & 0 & 1  \\
\end{array}
\right)},\ {\small\left(
\begin{array}{cccccc}
 1 & 0 & 0 & 0 \\
 0 & 1 & q^j & 0  \\
 0 & 0 & 1 & 0  \\
 0 & 0 & 0 & 1  \\
\end{array}
\right)},\ {\small\left(
\begin{array}{cccccc}
 1 & 0 & 0 & 0 \\
 0 & 1 & 0 & 0  \\
 0 & 0 & 1 & q^j  \\
 0 & 0 & 0 & 1  \\
\end{array}
\right)}. 
\]
  \caption{Three E{\u{g}}ecio{\u{g}}lu's matrices for $A=\{1,2,3\}$.}
  \label{tab:ege}
\end{table}

Indeed, with his definition which differs from the one found in \cite{Egecioglu}, the entries of the upper triangular matrices are $q$-binomial coefficients $\qbin{u}{i\cdots j}$ up to multiplication by some well-understood powers of $q$ depending on the position within the matrix, for $1\le i\le j\le k$. They can therefore be linked to \cref{def:qbin} and our combinatorial object of interest.

The present paper is thus the occasion to shed some new light on E{\u{g}}ecio{\u{g}}lu's work \cite{Eeciolu2004}. We believe that a reason why the latter research work has not been much explored could be that, with such a definition, the problem of $M$-ambiguity does not arise. Indeed, E{\u{g}}ecio{\u{g}}lu has already noticed that two distinct words have different associated matrices: it is an immediate result on $q$-binomials that the knowledge of $\qbin{u}{d}$ provides the exact positions of the letter~$d$ within $u$ (see \cref{thm:powers}). Consequently, such a $q$-deformation does not provide any hint to the classical injectivity problem of Parikh matrices. Our aim is thus to explore the connections with our $q$-deformation of binomial coefficients of words.

\subsection{Our contributions}
In \cref{sec:2}, similarly to \c{S}erb\u{a}nu\c{t}\u{a} we extend E{\u{g}}ecio{\u{g}}lu's definition to matrices induced by a word~$z$.
We first use basic linear algebra techniques to obtain direct analogues of classical results on Parikh matrices: 
we discuss the significance of the entries of these matrices and their inverse with respect to $q$-binomial coefficients of words. Up to a multiplication by a convenient power of~$q$, the elements in the upper part of such a matrix are of the form $\qbin{u}{z_i\cdots z_j}$ for $1\le i\le j\le |z|$.

For results about inverses, we work under the assumption that the fixed word~$z$ has no factor $aa$  made of two identical letters. With \cref{rem:no_aa} we see why the inverse has a more intricate form when $z$ does not fulfill this assumption. We also get that the inverse of a $q$-Parikh matrix associated with~$z$ is directly related (up to some reverse operation and sign alternation) to the $q$-Parikh matrix associated with the reversal~$\widetilde{z}$.

As an application of our first results on $q$-Parikh matrices and their inverses, we obtain new relations on those $q$-binomials. For instance, we obtain identities such as
\[
\qbin{u}{a}\qbin{u}{cb}+q^2\qbin{u}{abc}=
  \qbin{u}{c}\qbin{u}{ab}+q^2\qbin{u}{cba},
  \]
  where $u$ is a word and $a,b,c$ are letters such that $a\neq b$ and $b\neq c$. \cref{pro:general_relation2} generalizes this relation for more than three letters with the restriction that adjacent letters are distinct.

\cref{sec:convergence} takes a different perspective, this time introducing new objects of an algebraic nature. It is quite evident that with any left-infinite word~$\mathbf{x}$, if $p_n$ is the prefix of length $n$ of~$\mathbf{x}$, then the sequence $n\mapsto\qbin{p_n}{z}$ converges to a formal power series $\mathfrak{s}_{\mathbf{x},z}$ in $\mathbb{N}[[q]]$. If the infinite word~$\mathbf{x}$ is $k$-automatic, then $\mathfrak{s}_{\mathbf{x},z}$ is shown to be $k$-regular. In particular, our considerations make it possible to express certain sequences ({\tt A133009} from OEIS) encountered in additive number theory, respectively, on the number of representations of an integer $n$ as the sum of two \emph{odious} or \emph{evil} numbers \cite{allouche2022additive,ChenWang,Dombi,ErdosIV}, in the framework of $q$-binomial coefficients.

Let $u,z$ be finite words. Salomaa has shown that the integer sequence $n\mapsto\binom{u^n}{z}$ satisfies a linear recurrence relation with constant integer coefficients \cite{Salomaa2008}. For this specific case, it is therefore natural to ask what more can be said about the sequence of polynomials $n\mapsto\qbin{u^n}{z}$. As a generalization of Salomaa's result, we show that it satisfies a linear recurrence relation over
$\N[q]$.

Finally for a periodic infinite word $\mathbf{u}=\cdots uuu$, we have a precise description of the growth order of the coefficients of the series $\mathfrak{s}_{\mathbf{u},z}(q)$. It will be observed by classical arguments that these coefficients also satisfy a linear recurrence relation with integer coefficients. We show that $n\mapsto [q^n]\mathfrak{s}_{\mathbf{u},z} $ vanishes periodically and those conditions are prescribed by a well understood arithmetic relation. If it does not vanish then it is in $\Theta(n^{|z|-1})$. Interestingly, this study also makes connection with the extensively studied problems of partition of integers into distinct parts \cite{Knessl1990}. 

In the last section, with \cref{cor:new_expression} we express a $q$-Parikh matrix associated with an arbitrary word~$z$ as a $q$-Parikh matrix (of the same dimension) associated with a canonical word of the form $12\cdots |z|$. To that end, we introduce a morphism $\sigma_z$ encoding the positions of a letter occurring in~$z$. In particular, such a word $12\cdots |z|$ satisfies the assumption of \cref{thm:inverse2} which permits us to also provide an expression for the inverse. 
Then we consider minors of $q$-Parikh matrices. Interestingly, they are polynomials with non-negative integer coefficients. This is an alternative explanation of the fact that the inverse of a $q$-deformed Parikh matrix has entries in $\pm\mathbb{N}[q]$. Finally, we discuss a $q$-analogue of what Salomaa calls the Cauchy inequality \cite{SalomaaCauchy2003}: for all words $x,y,z,w$, the polynomial
  \[\qbin{xy}{w}\qbin{yz}{w}-\qbin{xyz}{w}\qbin{y}{w}\]
  has non-negative integer coefficients.

  All these results tend to show that our definition of a $q$-deformation of Parikh matrices is the right one, since we generalize results from several different papers \cite{Serbanuta2004,MateescuSalomaa2001,SalomaaCauchy2003,Salomaa2008}.
  
\section{$q$-deformation of Parikh matrices}\label{sec:2}

We generalize E{\u{g}}ecio{\u{g}}lu's definition using \c{S}erb\u{a}nu\c{t}\u{a}'s approach as described in the introduction. We get matrices associated with a word~$z$ and whose entries are polynomials in~$q$.

\begin{definition}\label{def:ourdef}
   Let $z=z_1\cdots z_\ell$ be a word and $A$ be the alphabet of $z$, i.e., the set of letters occurring in $z$.
   For $d\in A$ and $j\ge 0$, we let $\mathcal{M}_{d,j}$ denote the upper triangular matrix  having $1$'s on the diagonal and the only non-zero elements above the diagonal are $(\mathcal{M}_{d,j})_{i,i+1}=q^j$ for all $i$ such that $z_i=d$. We now define the map \[\mathcal{P}_z:A^*\to (\mathbb{N}[q])^{(|z|+1)\times (|z|+1)}, \ w_\ell w_{\ell-1}\cdots w_1 w_0\mapsto \mathcal{M}_{w_\ell,\ell}\cdots \mathcal{M}_{w_0,0}.\]
To avoid cumbersome notation, we do not refer to $z$ when it is clear from the context. For a word~$w$, we say that $\mathcal{P}_z(w)$ is the {\em $q$-Parikh matrix} of $w$ induced by $z$.
\end{definition}

\begin{example}\label{exa:vfirst} With $A=\{1,2,3\}$, $z=12231$, the matrices $\mathcal{M}_{1,j}$, $\mathcal{M}_{2,j}$ and $\mathcal{M}_{3,j}$ are  
\[
{\small\left(
\begin{array}{cccccc}
 1 & q^j & 0 & 0 & 0 & 0 \\
 0 & 1 & 0 & 0 & 0 & 0 \\
 0 & 0 & 1 & 0 & 0 & 0 \\
 0 & 0 & 0 & 1 & 0 & 0 \\
 0 & 0 & 0 & 0 & 1 & q^j \\
 0 & 0 & 0 & 0 & 0 & 1 \\
\end{array}
\right)},\ 
{\small\left(
\begin{array}{cccccc}
 1 & 0 & 0 & 0 & 0 & 0 \\
 0 & 1 & q^j & 0 & 0 & 0 \\
 0 & 0 & 1 & q^j & 0 & 0 \\
 0 & 0 & 0 & 1 & 0 & 0 \\
 0 & 0 & 0 & 0 & 1 & 0 \\
 0 & 0 & 0 & 0 & 0 & 1 \\
\end{array}
\right)},\ 
{\small\left(
\begin{array}{cccccc}
 1 & 0 & 0 & 0 & 0 & 0 \\
 0 & 1 & 0 & 0 & 0 & 0 \\
 0 & 0 & 1 & 0 & 0 & 0 \\
 0 & 0 & 0 & 1 & q^j & 0 \\
 0 & 0 & 0 & 0 & 1 & 0 \\
 0 & 0 & 0 & 0 & 0 & 1 \\
\end{array}
\right)}\]
Letting $w=1212312$, we find $\mathcal{P}_z(w)=\mathcal{M}_{1,6}\mathcal{M}_{2,5}\mathcal{M}_{1,4}\mathcal{M}_{2,3}\mathcal{M}_{3,2}\mathcal{M}_{1,1}\mathcal{M}_{2,0}$ as 
\[
{\small \left(
\begin{array}{cccccc}
 1 & q^6+q^4+q & q^{11}+q^9+q^7+q^6+q^4+q & q^{14}+q^{11}+q^9+q^7 & q^{16} & q^{17} \\
 0 & 1 & q^5+q^3+1 & q^8+q^5+q^3 & q^{10} & q^{11} \\
 0 & 0 & 1 & q^5+q^3+1 & q^7+q^5 & q^8+q^6 \\
 0 & 0 & 0 & 1 & q^2 & q^3 \\
 0 & 0 & 0 & 0 & 1 & q^6+q^4+q \\
 0 & 0 & 0 & 0 & 0 & 1 \\
\end{array}
\right)}
\]
\end{example}
The $q$-deformation version of \cite[Thm.~13]{Serbanuta2004} is given below. For an integer $r$, we let $\mathsf{s}(r)$ denote the sum of the first~$r$ integers, i.e., $r(r+1)/2$.
\begin{theorem}\label{thm:first}
  Let $z$ be a word of length $\ell\ge 1$ whose alphabet is $A$. 
   Let $w\in A^*$. The corresponding $(\ell+1)\times(\ell+1)$ $q$-Parikh matrix is such that  
  \begin{itemize}
  \item $(\mathcal{P}_z(w))_{i,j} = 0$, for all $1\le j < i \le \ell+1$,
  \item $(\mathcal{P}_z(w))_{i,i} = 1$, for all $1\le i \le \ell+1$.
  \item Let $r\in\{1,\ldots,\ell\}$. For all $1\le i \le \ell-r+1 $, $(\mathcal{P}_z(w))_{i,i+r} =q^{\mathsf{s}(r-1)} \qbin{w}{z_i\, z_{i+1} \cdots z_{i+r-1}}$. 
\end{itemize}
\end{theorem}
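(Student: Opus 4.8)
The plan is to prove the statement by induction on $|w|$. The first two bullet points are immediate, since a product of upper-triangular matrices with $1$'s on the diagonal has the same shape, and $\mathcal{P}_z(\varepsilon)$ is the identity, which matches $q^{\mathsf{s}(r-1)}\qbin{\varepsilon}{z_i\cdots z_{i+r-1}}=0$. So all the content is in the third bullet. The natural move is to peel off the last letter of $w$, writing $w=w''a$; then $\mathcal{P}_z(w)$ is the product of the $\mathcal{M}$'s coming from $w''$ times $\mathcal{M}_{a,0}$, and \cref{eq:recdef} is precisely a statement about appending a letter on the right of the ``long'' word. The one wrinkle is that the left factor above is not $\mathcal{P}_z(w'')$: all of its second subscripts are larger by~$1$. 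I would therefore prove the slightly stronger claim that carries a shift $s\ge 0$: for $w=w_{m-1}\cdots w_0$, the matrix $\mathcal{M}_{w_{m-1},\,m-1+s}\cdots\mathcal{M}_{w_0,\,s}$ has $(i,i+r)$-entry $q^{\mathsf{s}(r-1)+rs}\qbin{w}{z_iz_{i+1}\cdots z_{i+r-1}}$, the theorem being the case $s=0$. Peeling the last letter then replaces the remaining product by the one for $w''$ with shift $s+1$, so the induction closes.

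For the inductive step one writes $\mathcal{M}_{a,s}=I+q^sN$, where $N$ is the $0/1$ matrix carrying $1$'s precisely in the positions $(p,p+1)$ with $z_p=a$; since $N$ only feeds the superdiagonal, the $(i,i+r)$-entry of $B\mathcal{M}_{a,s}$ (with $B$ the shifted product for $w''$) is $B_{i,i+r}+\delta_{a,z_{i+r-1}}q^sB_{i,i+r-1}$. Substituting the inductive hypothesis for the two entries of $B$ and simplifying the exponents via $\mathsf{s}(r-1)-\mathsf{s}(r-2)=r-1$, one obtains $q^{\mathsf{s}(r-1)+rs}\bigl(q^r\qbin{w''}{z_i\cdots z_{i+r-1}}+\delta_{a,z_{i+r-1}}\qbin{w''}{z_i\cdots z_{i+r-2}}\bigr)$, and the bracketed expression is exactly $\qbin{w''a}{z_i\cdots z_{i+r-1}}=\qbin{w}{z_i\cdots z_{i+r-1}}$ by \cref{eq:recdef} (with $|z_i\cdots z_{i+r-1}|=r$). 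The cases $r=1$ and $i+r=\ell+1$ need a separate glance but raise no difficulty.

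I expect the only genuine obstacle to be the bookkeeping: keeping straight the matrix indices $i$ and $i+r$, the positions inside $z$, the positions inside $w$, and the shift $s$, and checking that the power of $q$ multiplying the $\delta$-term cancels exactly to $q^0$. Two alternative routes are worth recording. One can avoid the shift parameter by peeling the \emph{first} letter of $w$ instead, using $\mathcal{P}_z(aw')=\mathcal{M}_{a,|w'|}\mathcal{P}_z(w')$ (no reindexing), at the cost of invoking the ``prefix'' companion of \cref{eq:recdef}, namely $\qbin{aw'}{v}=\qbin{w'}{v}+[v\text{ starts with }a]\,q^{|w'|-|v|+1}\qbin{w'}{v'}$ with $v'$ the length-$(|v|-1)$ suffix of $v$, itself a short consequence of \cref{eq:recdef}. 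Alternatively, one can expand the matrix product outright: its $(i,i+r)$-entry equals $\sum q^{t_1+\cdots+t_r}$ over index sequences $m-1\ge t_1>\cdots>t_r\ge 0$ with $w_{t_s}=z_{i+s-1}$, that is, $\sum q^{r|w|-(p_1+\cdots+p_r)}$ over the occurrences $p_1<\cdots<p_r$ of $z_i\cdots z_{i+r-1}$ in $w$, which one then matches with the analogous sum-over-occurrences formula for $\qbin{w}{z_i\cdots z_{i+r-1}}$ coming from \cite{RRW}.
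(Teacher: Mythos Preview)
Your argument is correct. Your primary route differs from the paper's in one deliberate choice: you peel off the \emph{rightmost} letter of $w$ and compensate for the resulting index shift by strengthening the induction hypothesis with the parameter~$s$, which lets you invoke the defining recursion~\cref{eq:recdef} directly. The paper instead peels off the \emph{leftmost} letter, exactly the alternative you record: it writes $\mathcal{P}_z(dw)=\mathcal{M}_{d,|w|}\mathcal{P}_z(w)$, so no shift parameter is needed, but it then requires the prefix-extension identity
\[
\qbin{dw}{d\,z_{i+1}\cdots z_{i+r-1}}=\qbin{w}{d\,z_{i+1}\cdots z_{i+r-1}}+q^{|w|-(r-1)}\qbin{w}{z_{i+1}\cdots z_{i+r-1}},
\]
which the paper obtains from the combinatorial description in \cref{thm:powers} rather than from~\cref{eq:recdef} alone. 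So the trade-off is exactly as you diagnosed: your route keeps the $q$-binomial input minimal (only~\cref{eq:recdef}) at the cost of carrying the auxiliary parameter~$s$; the paper's route keeps the matrix bookkeeping minimal at the cost of importing~\cref{thm:powers}. Your third alternative, the direct expansion of the product as a sum over occurrences, is not pursued in the paper but would also go through via~\cref{thm:powers}.
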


Before proceeding with the proof, we consider an example to illustrate this statement. The matrix obtained in \cref{exa:vfirst} is indeed
\[
{\small \begin{pmatrix}
  1&\qbin{w}{1}&q\qbin{w}{12}&q^3\qbin{w}{122}&q^6\qbin{w}{1223}&q^{10}\qbin{w}{12231}\\
  0&1&\qbin{w}{2}&q\qbin{w}{22}&q^3\qbin{w}{223}&q^6\qbin{w}{2231}\\
  0&0&1&\qbin{w}{2}&q\qbin{w}{23}&q^3\qbin{w}{231}\\
  0&0&0&1&\qbin{w}{3}&q\qbin{w}{31}\\
  0&0&0&0&1&\qbin{w}{1}\\ 0&0&0&0&0&1\\
\end{pmatrix}}.
\]

We will often make use of the following combinatorial interpretation of the $q$-binomials (and in particular in the proof of \cref{thm:first}), we recall the following statement for the sake of completeness.
\begin{theorem}[{\cite[Thm.~4.1]{RRW}}]\label{thm:powers} Let $u=u_n\cdots u_1$ and $v=v_k\cdots v_1$ be words. Then
  \[
  \qbin{u}{v}=\sum_{Y \in A_{n,k}} q^{{\mathsf{s}}(Y)-\frac{k(k+1)}{2}}
  \]
  where $A_{n,k} = \{ n\ge y_k > \cdots > y_1\ge 1 \mid u_{y_k}\cdots u_{y_1} = v \}$ and for all $Y\in A_{n,k}$, ${\mathsf{s}}(Y):=\sum_{y\in Y} y$.
\end{theorem}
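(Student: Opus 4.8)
The plan is to verify that the right-hand side, viewed as a function of the pair $(u,v)$, satisfies the defining recursion \cref{eq:recdef} together with its initial conditions; the claim then follows by uniqueness, i.e., by induction on $|u|$. Write $n=|u|$, $k=|v|$ and set $f(u,v):=\sum_{Y\in A_{n,k}}q^{\mathsf{s}(Y)-k(k+1)/2}$. For the base cases, if $v=\varepsilon$ then $k=0$ and $A_{n,0}$ consists of the single empty selection, for which $\mathsf{s}(Y)=0=k(k+1)/2$, so $f(u,\varepsilon)=q^0=1$ for every $u$; and if $v\neq\varepsilon$ while $u=\varepsilon$ then $n=0<k$, no decreasing tuple $n\ge y_k>\cdots>y_1\ge 1$ exists, so $A_{0,k}=\emptyset$ and $f(\varepsilon,v)=0$. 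These match the first two clauses of \cref{def:qbin}.

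For the inductive step, assume $u=u_n\cdots u_1$ and $v=v_k\cdots v_1$ are both nonempty, and peel off the rightmost letters $u_1$ and $v_1$. Write $u=u'u_1$ with $u'=u_n\cdots u_2$ and $v=v'v_1$ with $v'=v_k\cdots v_2$; reading $u'$ (resp. $v'$) as a standalone word of length $n-1$ (resp. $k-1$), its right-counted positions are $1,\ldots,n-1$, so the letter sitting at position $y\ge 2$ in $u$ sits at position $y-1$ in $u'$. I would then partition each occurrence $Y=\{y_k>\cdots>y_1\}\in A_{n,k}$ of $v$ in $u$ according to whether its smallest index satisfies $y_1\ge 2$ or $y_1=1$. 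The map $y\mapsto y-1$ is a bijection between the occurrences of the first type and the full set $A_{n-1,k}$ of occurrences of $v$ in $u'$; since all $k$ indices drop by one, $\mathsf{s}$ decreases by exactly $k=|v|$, so these occurrences contribute $q^{|v|}f(u',v)$. Occurrences of the second type can only exist when $u_1=v_1$, in which case removing the index $y_1=1$ and applying $y\mapsto y-1$ to the remaining $k-1$ indices gives a bijection with $A_{n-1,k-1}$, the occurrences of $v'$ in $u'$.

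The exponent bookkeeping in this last case is the one point that needs care, and it is where the normalization $k(k+1)/2$ earns its place: the discarded index contributes $1$ to $\mathsf{s}$ while the remaining $k-1$ indices each drop by one, so $\mathsf{s}(Y)=\mathsf{s}(Y')+1+(k-1)=\mathsf{s}(Y')+k$, and subtracting the two normalizations gives $k(k+1)/2-(k-1)k/2=k$; the $+k$ and $-k$ cancel, so $\mathsf{s}(Y)-k(k+1)/2=\mathsf{s}(Y')-(k-1)k/2$ and this part contributes exactly $\delta_{u_1,v_1}\,f(u',v')$. Combining the two cases yields $f(u,v)=q^{|v|}f(u',v)+\delta_{u_1,v_1}f(u',v')$, which is precisely \cref{eq:recdef} applied to $\qbin{u'u_1}{v'v_1}$. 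By the induction hypothesis $f(u',v)=\qbin{u'}{v}$ and $f(u',v')=\qbin{u'}{v'}$, so $f(u,v)=\qbin{u}{v}$ and the induction closes. The main obstacle is purely the position-shift and normalization accounting just described; once the bijections are set up correctly, the two-term split mirrors the $q$-Pascal recursion term by term.
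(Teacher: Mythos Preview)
Your argument is correct. You verify that the right-hand side satisfies the initial conditions and the $q$-Pascal recursion \cref{eq:recdef}, splitting occurrences according to whether the rightmost selected position is $y_1=1$ or $y_1\ge 2$; the exponent bookkeeping (the shift by $k$ versus the change in normalization $k(k+1)/2\to (k-1)k/2$) is handled correctly.

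As for comparison: the present paper does not actually give a proof of this statement. It is merely recalled from \cite[Thm.~4.1]{RRW} for later use. Your induction on $|u|$ via the defining recursion is the natural and essentially unique elementary route to this combinatorial interpretation, so there is nothing substantive to contrast.
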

 It is often convenient to express the above result as follows. Any specific occurrence of $v$ as a subword of $u$ contributes to $\qbin{u}{v}$ with a term $q^\alpha$ where $\alpha$ is the sum over all letters of $v$ of the number of letters at the right of them and not being part of that specific occurrence of the subword~$v$. In other words, if
 $Y = \{n \geq y_{k} > \cdots > y_1 \geq 1\} \in A_{n,k}$ corresponds to an occurrence of $v$ in $u$, then $\alpha = \sum_{i=1}^k (y_i - i)$.

 \begin{example}
  Since the argument will be repeated many times, we give an example. Consider the word $u=12\underline{2}1\underline{1}21\underline{2}1=12\underline{2}x\underline{1}y\underline{2}z$ and the subword $v=212$. For the highlighted occurrence of $v$ in $u$, we get a term $q^\alpha$ with $\alpha=|xyz|+|yz|+|z|=|x|+2|y|+3|z|=8$. Indeed, to the right of the leftmost~$2$ of $v$, we see $xyz$ in $u$ (not taking into account the suffix $12$ of $v$). To the right of $1$ in $v$, we see $yz$. Finally, to the right of the rightmost~$2$ of $v$, we see $z$. Another way of obtaining the same counting is to notice that $|x|$ will be counted once because of the prefix $2$ of $v$ to its left. Then $|y|$ will be counted twice because of the prefix $21$ of $v$ to its left, and finally $|z|$ will be counted three times because $v$ is entirely to its left.
 \end{example}

\begin{proof}[Proof of \cref{thm:first}]
   Proceed by induction on the length of $w$. The result trivially holds if $|w|=0,1$. Now assume that the result holds for words of length at most $n$ and consider the word $dw$ of length $n+1$ where $d\in A$ and $|w|=n$. We have that
  \[M=\mathcal{P}_z(dw)=\mathcal{M}_{d,|w|}\mathcal{P}_z(w).\]
  Let $r\in\{1,\ldots,\ell\}$ and $1\le i \le \ell-r+1$. If $z_i\neq d$, then $M_{i,i+r}=[\mathcal{P}_z(w)]_{i,i+r}$ and we may apply the induction hypothesis.

  Now, if $z_i=d$, then
  \[M_{i,i+r}=[\mathcal{P}_z(w)]_{i,i+r}+q^{|w|}[\mathcal{P}_z(w)]_{i+1,i+r}.\]
  By the induction hypothesis, we get
  \[M_{i,i+r}=q^{\mathsf{s}(r-1)} \qbin{w}{z_i\, z_{i+1} \cdots z_{i+r-1}} + q^{|w|}q^{\mathsf{s}(r-2)} \qbin{w}{z_{i+1}\cdots z_{i+r-1}}.\]
  The conclusion follows from the fact that
  \[\qbin{dw}{dz_{i+1}\cdots z_{i+r-1}}=\qbin{w}{dz_{i+1}\cdots z_{i+r-1}}+q^{|w|-(r-1)}\qbin{w}{z_{i+1}\cdots z_{i+r-1}},\]
  which can be deduced using \cref{thm:powers}.
\end{proof}

We recall the following definition.
\begin{definition}
    The \emph{Hadamard product} of two $m\times p$ matrices $A$ and $B$ is the matrix $A\odot B$ whose elements are defined by
    \[
    [A\odot B]_{i,j} = A_{i,j}\cdot B_{i,j}, \quad 1\leq i\leq m, 1\leq j\leq p.
    \]
    Otherwise stated, the Hadamard product of two matrices corresponds to the element-wise product of these matrices.
\end{definition}

\cref{thm:first} permits us to see that $\mathcal{P}_z(w)$ can be expressed as the Hadamard product of two upper triangular matrices, one made of $q$-binomials $\qbin{w}{z_i\cdots z_{i+r-1}}$ and one containing powers of $q$.

\begin{remark}
  In the context of (classical) Parikh matrices, Mateescu's original definition is a special case of \c{S}erb\u{a}nu\c{t}\u{a}'s one. The situation is similar in our context. 
  With the word $z=12\cdots k$ made of all letters of an alphabet, the matrices $\mathcal{M}_{d,j}$ of \cref{def:ourdef} are the ones introduced by E{\u{g}}ecio{\u{g}}lu~\cite{Eeciolu2004}. See, for instance, \cref{tab:ege}.
\end{remark}

\begin{definition}\label{def:Ek}
  In the special case described in the above remark, we get back to E{\u{g}}ecio{\u{g}}lu's considerations. It deserves a particular notation. We let $\mathcal{E}_k:A^*\to(\mathbb{N}[q])^{(k+1)\times (k+1)}$ denote the map $\mathcal{P}_z$ when $z=12\cdots k$.  
\end{definition}

\begin{remark}
The function $\mathcal{P}_z$ is not a morphism from $A^*$ to $\N[q]^{(|z|+1)\times(|z|+1)}$. For instance, $\mathcal{P}_{12}(12).\mathcal{P}_{12}(2)\neq \mathcal{P}_{12}(1).\mathcal{P}_{12}(22)$. But as already observed in \cite{Eeciolu2004}, for all~$z$, $\mathcal{P}_z$ is injective: if $w=w_\ell\cdots w_0\in A^*$ then $\qbin{w}{d}$ has a non-zero term $q^j$ if and only if $w_j=d$. So the second diagonal of $\mathcal{P}_z(w)$ completely determines $w$.
\end{remark}

\subsection{Inverse of $q$-Parikh matrices}
In the case where $z=12\cdots k$, there is a link between the inverse of a Parikh matrix of a word $u$ and the Parikh matrix of its reversal $\widetilde{u}$, see \cite[Thm.~3.2]{MateescuSalomaa2001}. This relationship still holds for (classical) Parikh matrices induced by a word $z=z_1\cdots z_\ell$, where $z_i\neq z_{i+1}$ for all $i\in\{1,\ldots,\ell-1\}$, see \cite[Thm.~16]{Serbanuta2004}. 

For $q$-Parikh matrices, we obtain a similar result. Note that in the next statement, $M_{i,j}(1/q)$ means that the polynomial entry $M_{i,j}$ is evaluated at $1/q$. This is also the reason why we explicitly write a symbol $\cdot$ limiting the scope of this evaluation. As a special case, note that the word $z=12\cdots k$ trivially satisfies the assumption of the theorem.
\begin{theorem}\label{thm:inverse2}
Let $z=z_1\cdots z_\ell$ be a word whose alphabet is $A$ and such that $z_i\neq z_{i+1}$, for all $1\le i<\ell$.
  Let $u\in A^*$. Let $N=(\mathcal{P}_z(u))^{-1}$ be the inverse of the $q$-Parikh matrix of $u$ and $M=\mathcal{P}_z(\widetilde{u})$ be the $q$-Parikh matrix of the reversal of $u$, 
  \begin{itemize}
  \item $N_{i,j} = 0$, for all $1\le j < i \le \ell+1$,
  \item $N_{i,i} = 1$, for all $1\le i \le \ell+1$.
  \item With $1\le i<j \le \ell+1$, $N_{i,j} =(-1)^{i+j} q^{(j-i)(|u|-1)} \cdot M_{i,j}\left(\frac1q\right)$. 
  \end{itemize}
\end{theorem}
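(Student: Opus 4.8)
The plan is to exploit the hypothesis $z_i\neq z_{i+1}$ in order to put every elementary factor $\mathcal M_{d,j}$ into a trivially invertible form, and then to recognise the claimed transformation of entries as an automorphism of the group of unitriangular matrices, so that the identity $N=T(M)$ can be checked factor by factor.

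First I would write $E_{i,j}$ for the matrix units and, for a letter $d\in A$, set $E_d=\sum_{i:\,z_i=d}E_{i,i+1}$, so that $\mathcal M_{d,j}=I+q^{j}E_d$. Because $z$ has no factor $aa$, the indices $i$ with $z_i=d$ are never consecutive, so $E_d^{\,2}=0$; hence $\mathcal M_{d,j}$ is invertible with $\mathcal M_{d,j}^{-1}=I-q^{j}E_d$. This is exactly the place where the hypothesis $z_i\neq z_{i+1}$ is used, and it is why, without it, the inverse takes a more intricate shape. The assertions about the diagonal and the lower triangle of $N$ are then immediate, since $\mathcal P_z(u)$ is a product of unitriangular matrices and such matrices form a group.

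Next I would unwind the definitions into product form. Writing $u=u_{m-1}\cdots u_0$ with $m=|u|$, \cref{def:ourdef} gives $\mathcal P_z(u)=\mathcal M_{u_{m-1},m-1}\cdots\mathcal M_{u_0,0}$, hence
\[
N=\mathcal P_z(u)^{-1}=\prod_{k=0}^{m-1}\mathcal M_{u_k,k}^{-1}=\prod_{k=0}^{m-1}\bigl(I-q^{k}E_{u_k}\bigr),
\]
the product taken in increasing order of $k$. Since $\widetilde u=u_0u_1\cdots u_{m-1}$, reading off \cref{def:ourdef} with the superscripts reversed gives
\[
M=\mathcal P_z(\widetilde u)=\prod_{k=0}^{m-1}\bigl(I+q^{\,m-1-k}E_{u_k}\bigr),
\]
again in increasing order of $k$. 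Getting this index/exponent bookkeeping right — in particular the substitution $k\leftrightarrow m-1-k$ in the powers of $q$ — is the one point that needs genuine care.

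Finally I would introduce the map $T$ on $(\ell+1)\times(\ell+1)$ matrices over the Laurent ring $\mathbb Z[q,q^{-1}]$ defined by $T(X)=D^{-1}\overline X D$, where $\overline X$ denotes $X$ with $q$ replaced by $1/q$ in every entry and $D=\operatorname{diag}\bigl(1,-q^{\,m-1},(-q^{\,m-1})^{2},\dots,(-q^{\,m-1})^{\ell}\bigr)$; explicitly, $[T(X)]_{i,j}=(-1)^{i+j}q^{(j-i)(m-1)}\,X_{i,j}(1/q)$. Being the composite of a ring automorphism (induced by $q\mapsto 1/q$) with a conjugation, $T$ is multiplicative on matrices, so it can be applied factor by factor. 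A one-line computation gives $T\bigl(I+q^{\,m-1-k}E_{u_k}\bigr)=I+q^{-(m-1-k)}(-q^{\,m-1})E_{u_k}=I-q^{k}E_{u_k}$, so applying $T$ to the product expansion of $M$ reproduces exactly the product expansion of $N$; therefore $N=T(M)$, which is precisely the displayed formula $N_{i,j}=(-1)^{i+j}q^{(j-i)(|u|-1)}\cdot M_{i,j}(1/q)$ (the case $u=\varepsilon$ being trivial). The only real obstacles are the verification that $E_d^{\,2}=0$, i.e. that the no-$aa$ hypothesis really does trivialise the inverses, and keeping the two product expansions mutually consistent; once those are in place, the conjugation-plus-substitution viewpoint makes the matching automatic.
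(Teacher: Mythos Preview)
Your proof is correct and takes a genuinely different route from the paper. The paper proceeds by induction on $|u|$: it writes $\mathcal P_z(dw)^{-1}=\mathcal P_z(w)^{-1}\mathcal M_{d,|w|}^{-1}$, applies the induction hypothesis to $\mathcal P_z(w)^{-1}$, and then verifies the claimed formula entry by entry using \cref{thm:first} together with the recurrence \eqref{eq:recdef} for $q$-binomials (splitting into the cases $j\neq d+1$ and $j=d+1$). Your argument, by contrast, never touches the $q$-binomial description of the entries: you stay at the level of the product definition, observe that $E_d^2=0$ gives the elementary inverses, and then recognise the transformation $X\mapsto\bigl((-1)^{i+j}q^{(j-i)(m-1)}X_{i,j}(1/q)\bigr)_{i,j}$ as the composite of the ring automorphism $q\mapsto 1/q$ with conjugation by $D=\operatorname{diag}\bigl((-q^{m-1})^{i-1}\bigr)$, hence multiplicative. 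This reduces the whole theorem to the single-factor check $T(I+q^{\,m-1-k}E_{u_k})=I-q^{k}E_{u_k}$, which is immediate. Your approach is cleaner and more conceptual, and it makes transparent why the formula has exactly the shape it does (diagonal conjugation forces the $(j-i)$-dependence of the exponent, the ring automorphism accounts for the $1/q$); the paper's inductive approach is more hands-on and, in exchange, displays explicitly how the $q$-binomial recurrence interacts with the matrix structure, which feeds into the later applications such as \cref{pro:general_relation2}.
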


\begin{proof}
  For the sake of readability, we assume that $z=12\cdots k$: we get simpler notation but the arguments are the same. Moreover, we will highlight where the assumption on $z$ plays a role.
  
  Proceed by induction on the length of $u$. If $|u|=1$, then $u=d$ for some $d\in\{1,\ldots,k\}$ and $M=\mathcal{P}(\widetilde{u})=\mathcal{M}_{d,0}$, $N=(\mathcal{M}_{d,0})^{-1}$. Except on the diagonal, the only non-zero element of $N$ is $N_{d,d+1}=-1$ and the statement holds true.

  Assume that the result holds for words of length at most $n$ and consider the word $dw$ of length $n+1$ where $d\in\{1,\ldots,k\}$ and $|w|=n$. We have
  \begin{equation}
    \label{eq:productmat}
    (\mathcal{P}(dw))^{-1}=(\mathcal{M}_{d,|w|}\mathcal{P}(w))^{-1}=(\mathcal{P}(w))^{-1}\mathcal{M}_{d,|w|}^{-1}.
  \end{equation}
  Except on the diagonal, the only non-zero element of $\mathcal{M}_{d,|w|}^{-1}$ is \[\left[\mathcal{M}_{d,|w|}^{-1}\right]_{d,d+1}=-q^{|w|}.\]
  For an arbitrary word $z$ satisfying the assumption of the statement, a similar observation holds. For $d\in A$ and $j\ge 0$, the inverse of $\mathcal{M}_{d,j}$ is an upper triangular matrix with the same structure: It has $1$'s on the diagonal and the only non-zero elements above the diagonal are $(\mathcal{M}_{d,j})_{i,i+1}=-q^j$ for all $i$ such that $z_i=d$. This follows from the fact that on the second diagonal two non-zero elements are not consecutive. We discuss the general situation in \cref{rem:no_aa}.
  
  Let $1\le i<j \le k+1$. Assume first $j\neq d+1$, then considering the product \cref{eq:productmat} and using the induction hypothesis, we get
  \[[(\mathcal{P}(dw))^{-1}]_{i,j}=[(\mathcal{P}(w))^{-1}]_{i,j}=(-1)^{i+j} q^{(j-i)(|w|-1)} \cdot [\mathcal{P}(\widetilde{w})]_{i,j}\left(\tfrac1q\right)\]
  On the other hand, by \cref{thm:first}
  \[[\mathcal{P}(\widetilde{w})]_{i,j}=q^{\mathsf{s}(j-i-1)}\qbin{\widetilde{w}}{i\cdots(j-1)}\]
  and from the definition of $q$-binomials
  \begin{equation}
    \label{eq:qbin}
     \qbin{\widetilde{dw}}{i\cdots(j-1)}=\qbin{\widetilde{w}d}{i\cdots(j-1)}=q^{j-i}\qbin{\widetilde{w}}{i\cdots(j-1)}+\underbrace{\delta_{d,j-1}}_{=0}\qbin{\widetilde{w}}{i\cdots (j-2)}.
  \end{equation}
  Consequently,
  \[[\mathcal{P}(\widetilde{dw})]_{i,j}=q^{\mathsf{s}(j-i-1)} q^{j-i}\qbin{\widetilde{w}}{i\cdots(j-1)}=q^{j-i}[\mathcal{P}(\widetilde{w})]_{i,j}.\]
  Hence
  \[[\mathcal{P}(\widetilde{dw})]_{i,j}\left(\tfrac1q\right)=\left(\tfrac1q\right)^{j-i}[\mathcal{P}(\widetilde{w})]_{i,j}\left(\tfrac1q\right) \]
  and 
  \[[\mathcal{P}(\widetilde{w})]_{i,j}\left(\tfrac1q\right)=q^{j-i}\cdot [\mathcal{P}(\widetilde{dw})]_{i,j}\left(\tfrac1q\right)\]
  and the conclusion follows.

  Now consider the case $j=d+1$,  the product \cref{eq:productmat} provides us with two terms and we make again use of the induction hypothesis, $[(\mathcal{P}(dw))^{-1}]_{i,j}$ is equal to 
  \begin{equation}
    \label{eq:again}
    -q^{|w|}(-1)^{i+j-1} q^{(j-1-i)(|w|-1)} \cdot [\mathcal{P}(\widetilde{w})]_{i,j-1}\left(\tfrac1q\right)
   +(-1)^{i+j} q^{(j-i)(|w|-1)} \cdot [\mathcal{P}(\widetilde{w})]_{i,j}\left(\tfrac1q\right).
  \end{equation}
   For convenience, we rewrite this as
    \[[(\mathcal{P}(dw))^{-1}]_{i,j}=(-1)^{i+j} q^{(j-i)(|w|-1)+1} \cdot [\mathcal{P}(\widetilde{w})]_{i,j-1}\left(\tfrac1q\right)
  +(-1)^{i+j} q^{(j-i)(|w|-1)} \cdot [\mathcal{P}(\widetilde{w})]_{i,j}\left(\tfrac1q\right).\]
  
  By \cref{thm:first} and several times using the trick that $q^{-k}\cdot P(1/q)=(q^k P)(1/q)$, we get
  \begin{align*}
    [(\mathcal{P}(dw))^{-1}]_{i,j}&=(-1)^{i+j} q^{(j-i)(|w|-1)} \cdot
                                      \left[q^{\mathsf{s}(j-i-2)}
                                      \left(q^{-1}\qbin{\widetilde{w}}{i\cdots(j-2)}
                                      +q^{j-i-1}\qbin{\widetilde{w}}{i\cdots(j-1)}\right)\right] \left(\tfrac1q\right)\\
    &=(-1)^{i+j} q^{(j-i)(|w|-1)} \cdot
                                      \left[q^{\mathsf{s}(j-i-2)}
                                      q^{-1}\qbin{\widetilde{dw}}{i\cdots(j-1)}
        \right]\left(\tfrac1q\right)\\
                                  &=(-1)^{i+j} q^{(j-i)(|\widetilde{dw}|-1)} \cdot 
                                      \left[q^{\mathsf{s}(j-i-1)}
                                      \qbin{\widetilde{dw}}{i\cdots(j-1)}
 \right]\left(\tfrac1q\right),
  \end{align*}
  where on the second line, we use \cref{eq:qbin}. We conclude by \cref{thm:first}.
\end{proof}

\begin{remark}\label{rem:no_aa}
The reader may wonder why the previous result is not extended to matrices induced by an arbitrary word. The reason is the following one. If $z$ contains a block of consecutive identical letters~$d$, then the inverse of $\mathcal{M}_{d,j}$ which contains a diagonal block of the form (for instance with a factor $ddd$)
\[{\small\left(
\begin{array}{cccc}
 1 & q^j & 0 & 0 \\
 0 & 1 & q^j & 0 \\
 0 & 0 & 1 & q^j \\
 0 & 0 & 0 & 1 \\
\end{array}
\right)}\]
has a diagonal block
\[{\small\left(
\begin{array}{cccc}
 1 & -q^j & q^{2 j} & -q^{3 j} \\
 0 & 1 & -q^j & q^{2 j} \\
 0 & 0 & 1 & -q^j \\
 0 & 0 & 0 & 1 \\
\end{array}
\right)}\]
and therefore, the expression corresponding to \cref{eq:again} has more than two terms. If there are $t$ consecutive identical letters, then a row of the inverse has up to $t+1$ non-zero entries leading to a sum with $t+1$ terms.
\end{remark}

\subsection{Computing inverse using reversal}
For a square matrix $A \in \Z^{n \times n}$, let $A^{\angle}$ denote its \emph{antitranspose} obtained by mirroring the elements along the antidiagonal; $A^{\angle}_{i,j} = A_{n-j,n-i}$. In the literature on Parikh matrices, the \emph{antitranspose} operation on square matrices appears naturally. \cref{thm:reverse} is an adaptation of \cite[Thm.~17]{Serbanuta2004}.

For the sake of completeness, we recall a result from \cite[Cor.~4.6]{RRW} which we will often use. As usual, for a polynomial $P\in\mathbb{C}[q]$ or a formal power series $P\in\mathbb{C}[[q]]$, we let $[q^i]\, P$ denote the coefficient of the term in $q^i$. 
\begin{proposition}\label{cor:reversal}
  Let $u,v$ be words. We have, for all $i\in\{0,\ldots,|v|(|u|-|v|)\}$, 
   \[ [q^i]\qbin{u}{v}=[q^{|v|(|u|-|v|)-i}] \qbin{\widetilde{u}}{\widetilde{v}}.\]
  Otherwise stated,
\[  q^{|v|(|u|-|v|)} \cdot \qbin[1/q]{\widetilde{u}}{\widetilde{v}}
         = \qbin{u}{v}.\]
\end{proposition}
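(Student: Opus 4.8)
The plan is to prove the identity bijectively, starting from the combinatorial description of $\qbin{u}{v}$ in \cref{thm:powers}. Recall from the reformulation stated just after that theorem that a fixed occurrence of $v$ as a subword of $u$ contributes a monomial $q^{\alpha}$, where $\alpha$ counts, summed over the letters of the occurrence, the number of letters of $u$ strictly to the right of each one that are not themselves part of that occurrence. Writing $u=a_1\cdots a_n$ and $v=b_1\cdots b_k$ from left to right, an occurrence is a set of positions $1\le p_1<\cdots<p_k\le n$ with $a_{p_i}=b_i$, and this count equals $\alpha=\sum_{i=1}^k\bigl((n-p_i)-(k-i)\bigr)$. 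Symmetrically, $\alpha':=\sum_{i=1}^k\bigl((p_i-1)-(i-1)\bigr)=\sum_{i=1}^k(p_i-i)$ counts the non-occurrence letters lying strictly to the \emph{left}. A one-line cancellation in $(n-p_i)-(k-i)+(p_i-i)=n-k$ gives $\alpha+\alpha'=k(n-k)=|v|\,(|u|-|v|)$, independently of the occurrence.

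Next I will use the mirror map $p\mapsto n+1-p$, which takes an occurrence $p_1<\cdots<p_k$ of $v$ in $u$ to the set $n+1-p_k<\cdots<n+1-p_1$; reading $\widetilde u=a_n\cdots a_1$ and $\widetilde v=b_k\cdots b_1$, this is an occurrence of $\widetilde v$ in $\widetilde u$, and the assignment is a bijection between the two occurrence sets. Under this mirror, ``letters strictly to the right, not in the occurrence'' in $u$ become ``letters strictly to the left, not in the occurrence'' in $\widetilde u$; hence the monomial attached by \cref{thm:powers} to the mirrored occurrence in $\qbin{\widetilde u}{\widetilde v}$ has exponent exactly $\alpha'=|v|(|u|-|v|)-\alpha$. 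Summing over all occurrences, $\qbin{\widetilde u}{\widetilde v}=\sum_{\mathrm{occ}}q^{|v|(|u|-|v|)-\alpha}$ while $\qbin{u}{v}=\sum_{\mathrm{occ}}q^{\alpha}$, the two sums being indexed by the same set; comparing coefficients of equal powers yields $[q^i]\qbin{u}{v}=[q^{|v|(|u|-|v|)-i}]\qbin{\widetilde u}{\widetilde v}$. For the ``otherwise stated'' form I will note that $0\le\alpha\le|v|(|u|-|v|)$ for every occurrence (the lower bound is clear, the upper bound is $\alpha=|v|(|u|-|v|)-\alpha'\le|v|(|u|-|v|)$), so both polynomials have degree at most $|v|(|u|-|v|)$; therefore $q^{|v|(|u|-|v|)}\cdot\qbin[1/q]{\widetilde u}{\widetilde v}$ is precisely the polynomial obtained from $\qbin{\widetilde u}{\widetilde v}$ by reversing its list of coefficients, which by the first identity equals $\qbin{u}{v}$. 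The degenerate case $|u|<|v|$ is vacuous since both sides are $0$.

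The only delicate point is bookkeeping. \cref{thm:powers} is phrased with the $u=u_n\cdots u_1$ convention, so I will have to translate between those right-indexed positions $y_i$ and the left-indexed positions $p_i=n+1-y_i$ used for the mirror bijection, and check carefully that the mirror map sends occurrences of $v$ in $u$ onto occurrences of $\widetilde v$ in $\widetilde u$ and that the right-count of a mirrored occurrence is the left-count of the original. I expect this index-chasing, and making the right-count/left-count description precise enough that $\alpha+\alpha'$ telescopes, to be the main obstacle, but it is entirely routine. A purely inductive alternative --- induction on $|u|$ using the prepend rule $\qbin{dw}{dv'}=\qbin{w}{dv'}+q^{|w|-|v'|}\qbin{w}{v'}$ already used in the proofs of \cref{thm:first,thm:inverse2} together with \cref{eq:recdef} applied to $\widetilde w d$ --- collapses to the same computation, so I would present the bijective argument.
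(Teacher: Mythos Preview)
Your bijective argument is correct: the translation between the right-indexed positions of \cref{thm:powers} and your left-indexed $p_i$ is accurate, the identity $\alpha+\alpha'=k(n-k)$ is exactly the telescoping you describe, and the mirror map $p\mapsto n+1-p$ does send occurrences of $v$ in $u$ to occurrences of $\widetilde v$ in $\widetilde u$ with exponent $\alpha'$.

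There is nothing to compare against here, however: the paper does not prove \cref{cor:reversal} but merely recalls it from \cite[Cor.~4.6]{RRW}. Your self-contained proof via \cref{thm:powers} is the natural one and would serve perfectly well in its place.
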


\begin{proposition}\label{prop:reverse}
  Let $z$ be a word of length~$\ell$ and $u$ be a word over the alphabet of $z$. 
 We have for all $i,j\in\{1,\ldots,\ell+1\}$
\[[\mathcal{P}_z(\widetilde{u})]_{i,j} = q^{(j-i)(|u|-1)}\cdot [\mathcal{P}_{\widetilde{z}}(u)]_{\ell+2-j,\ell+2-i}\left(\tfrac1q\right).\]
\end{proposition}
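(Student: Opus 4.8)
The plan is to relate both sides to explicit $q$-binomials via \cref{thm:first} and then to apply the reversal identity of \cref{cor:reversal}. Fix $i,j\in\{1,\dots,\ell+1\}$. If $i=j$ both sides equal $1$, and if $i>j$ both sides vanish, so we may assume $i<j$. Write $r=j-i\ge 1$; the entry $[\mathcal{P}_z(\widetilde{u})]_{i,j}$ sits $r$ places above the diagonal in the matrix induced by $z$, and by \cref{thm:first} it equals $q^{\mathsf{s}(r-1)}\qbin{\widetilde{u}}{z_i z_{i+1}\cdots z_{i+r-1}}$. On the other hand, the entry $[\mathcal{P}_{\widetilde{z}}(u)]_{\ell+2-j,\ell+2-i}$ also lies $r$ places above the diagonal, now in the matrix induced by $\widetilde{z}=z_\ell z_{\ell-1}\cdots z_1$; its starting index is $i'=\ell+2-j$, so the relevant factor of $\widetilde{z}$ is $(\widetilde{z})_{i'}(\widetilde{z})_{i'+1}\cdots(\widetilde{z})_{i'+r-1}$. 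A direct index computation shows $(\widetilde{z})_{\ell+2-j+t}=z_{j-1-t}$, so running $t$ from $0$ to $r-1$ gives exactly the letters $z_{j-1},z_{j-2},\dots,z_i$, i.e. the word $z_i z_{i+1}\cdots z_{i+r-1}$ read backwards. Hence by \cref{thm:first}, $[\mathcal{P}_{\widetilde{z}}(u)]_{\ell+2-j,\ell+2-i} = q^{\mathsf{s}(r-1)}\qbin{u}{\widetilde{z_i\cdots z_{i+r-1}}}$.

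With $w:=z_i z_{i+1}\cdots z_{i+r-1}$ (so $|w|=r$) the claimed identity reduces, after cancelling the common factor $q^{\mathsf{s}(r-1)}$ and evaluating the second expression at $1/q$, to
\[
\qbin{\widetilde{u}}{w} \;=\; q^{(j-i)(|u|-1)}\, q^{-\mathsf{s}(r-1)}\cdot \Bigl(q^{\mathsf{s}(r-1)}\qbin{u}{\widetilde{w}}\Bigr)\!\left(\tfrac1q\right)
\;=\; q^{r(|u|-1)}\cdot \qbin[1/q]{u}{\widetilde{w}},
\]
where I have used the shorthand $q^{-\mathsf{s}(r-1)}\cdot\bigl(q^{\mathsf{s}(r-1)}P\bigr)(1/q)=P(1/q)$. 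Now apply \cref{cor:reversal} with the pair of words $\widetilde{u}$ and $w$: since $\widetilde{\widetilde{u}}=u$ and $\widetilde{w}$ is as above, and $|w|=r$, that proposition gives $\qbin{\widetilde u}{w}=q^{|w|(|\widetilde u|-|w|)}\cdot\qbin[1/q]{u}{\widetilde w}=q^{r(|u|-r)}\cdot\qbin[1/q]{u}{\widetilde w}$. This is almost what we want but the exponent is $r(|u|-r)$ rather than $r(|u|-1)$; the discrepancy $r(|u|-1)-r(|u|-r)=r(r-1)=2\mathsf{s}(r-1)$ must be absorbed by the $q^{\mathsf{s}(r-1)}$ prefactors, which is exactly why those powers of $q$ appear in \cref{thm:first}. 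So the bookkeeping is: the $q^{\mathsf{s}(r-1)}$ in $[\mathcal{P}_z(\widetilde u)]_{i,j}$ contributes $+\mathsf{s}(r-1)$ to the exponent, while evaluating $q^{\mathsf{s}(r-1)}\qbin{u}{\widetilde w}$ at $1/q$ contributes $-\mathsf{s}(r-1)$; combined with $q^{r(|u|-1)}$ and the factor $q^{r(|u|-r)}$ from \cref{cor:reversal}, one checks $2\mathsf{s}(r-1)+r(|u|-r)=r(r-1)+r|u|-r^2=r|u|-r=r(|u|-1)$, giving equality. I would carry this out by first establishing the two applications of \cref{thm:first}, then the index identity for $\widetilde z$, then substituting into \cref{cor:reversal} and doing the exponent arithmetic.

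The main obstacle is purely notational: keeping the three index shifts straight simultaneously — the reversal of $z$ (which relabels positions $i\mapsto \ell+2-i$ and reverses the factor), the antitranspose-style reindexing $\ell+2-j,\ell+2-i$, and the evaluation at $1/q$ which interacts with the power-of-$q$ prefactors from \cref{thm:first}. The cleanest way to avoid sign/exponent errors is to reduce everything to the single scalar identity for $\qbin{\widetilde u}{w}$ displayed above before invoking \cref{cor:reversal}, rather than manipulating matrix entries to the end. No induction is needed here, unlike in the proof of \cref{thm:inverse2}; \cref{thm:first} already does the inductive work. One should also note the range check: $i$ runs in $\{0,\dots,|w|(|u|-|w|)\}$ in \cref{cor:reversal}, i.e. $r(|u|-r)\ge 0$, which holds whenever $\qbin{u}{\widetilde w}\ne 0$ (forcing $r\le|u|$), and when it is zero both sides of the proposition vanish anyway, so there is nothing to prove in that degenerate case.
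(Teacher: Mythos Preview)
Your approach is essentially the paper's: apply \cref{thm:first} to both entries, identify the factor of $\widetilde z$ starting at position $\ell+2-j$ as the reversal of $z_i\cdots z_{j-1}$, and then invoke \cref{cor:reversal} together with the exponent identity $2\mathsf{s}(r-1)+r(|u|-r)=r(|u|-1)$. The final bookkeeping paragraph is correct and is exactly the computation the paper carries out.

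There is, however, a slip in the middle that you should clean up. The displayed ``shorthand'' $q^{-\mathsf{s}(r-1)}\cdot\bigl(q^{\mathsf{s}(r-1)}P\bigr)(1/q)=P(1/q)$ is false: evaluating $q^{\mathsf{s}(r-1)}P$ at $1/q$ already gives $q^{-\mathsf{s}(r-1)}P(1/q)$, so the product is $q^{-2\mathsf{s}(r-1)}P(1/q)$. Consequently the first displayed reduction to $\qbin{\widetilde u}{w}=q^{r(|u|-1)}\qbin[1/q]{u}{\widetilde w}$ is not the right target; had you computed correctly you would have obtained $\qbin{\widetilde u}{w}=q^{r(|u|-r)}\qbin[1/q]{u}{\widetilde w}$ directly, which is precisely \cref{cor:reversal}, and there would be no ``discrepancy'' to explain. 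Your second pass fixes this and reaches the correct conclusion, so the argument as a whole is sound---just delete the faulty first reduction and keep the clean exponent check.
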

\begin{proof}
  From \cref{cor:reversal}, for $j\ge i+1$, 
  \[\qbin{\widetilde{u}}{z_i\cdots z_{j-1}}=q^{(j-i)(|u|-j+i)}\cdot \qbin[1/q]{u}{z_{j-1}\cdots z_i}.\]
  Hence,
  \[q^{\mathsf{s}(j-i-1)}\qbin{\widetilde{u}}{z_i\cdots z_{j-1}}=q^{(j-i)(|u|-j+i)+2{\mathsf{s}(j-i-1)}}\cdot \left(q^{\mathsf{s}(j-i-1)}\qbin{u}{z_{j-1}\cdots z_i}\right)\left(\tfrac1q\right).\]
  Now observe that
  \[(j-i)(|u|-j+i)+2{\mathsf{s}(j-i-1)}=(j-i)|u|-(j-i)^2+2{\mathsf{s}(j-i-1)}\]
  and $(j-i)^2=2{\mathsf{s}(j-i-1)}+j-i$.
\end{proof}
As an example, with $z=23112311$ and $z=123$, we have
\[\mathcal{P}_z(\widetilde{u})=
{\small\left(
\begin{array}{cccc}
 1 & q^7+q^6+q^3+q^2 & q^{11}+q^{10}+q^7+q^6+q^3+q^2 & q^{12}+q^{11} \\
 0 & 1 & q^4+1 & q^5 \\
 0 & 0 & 1 & q^5+q \\
 0 & 0 & 0 & 1 \\
\end{array}
\right)}\]
and
\[\mathcal{P}_{\widetilde{z}}(u)=
{\small\left(
\begin{array}{cccc}
 1 & q^6+q^2 & q^9 & q^{10}+q^9 \\
 0 & 1 & q^7+q^3 & q^{12}+q^{11}+q^8+q^7+q^4+q^3 \\
 0 & 0 & 1 & q^5+q^4+q+1 \\
 0 & 0 & 0 & 1 \\
\end{array}
\right)}.\]
On the three parallels above the diagonal, the exponent of the convenient power of $q$ is given by $1.(|u|-1)=7$, $2.(|u|-1)=14$ and $3.(|u|-1)=21$.

\begin{theorem}\label{thm:reverse}
  Let $z=z_1\cdots z_\ell$ be a word whose alphabet is $A$ and such that $z_i\neq z_{i+1}$, for all $1\le i<\ell$. Let $u\in A^*$. 
 Let $N=(\mathcal{P}_z(u))^{-1}$ be the inverse of the $q$-Parikh matrix of $u$ and $R=\mathcal{P}_{\widetilde{z}}(u)$ be the $q$-Parikh matrix of $u$ but associated with the reversal of $z$, 
 for all $1\le i,j \le \ell+1$,
    \[N_{i,j} =(-1)^{i+j} R_{\ell+2-j,\ell+2-i}.\] 
    In other words, $\mathcal{P}_z(u)^{-1} = ((-1)^{i+j})_{i,j} \odot \mathcal{P}_{\widetilde{z}}(u)^{\angle}$.
\end{theorem}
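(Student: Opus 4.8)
The plan is to derive this as a direct corollary of the two preceding results, \cref{thm:inverse2} and \cref{prop:reverse}, by checking that the auxiliary powers of $q$ they introduce cancel exactly. (One could alternatively redo the induction of \cref{thm:inverse2} verbatim with $\widetilde z$ in place of $\widetilde u$, but combining the two statements is cleaner and reuses work already done.)

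First I would dispose of the degenerate ranges of indices. For $j<i$ both sides vanish: $N$ is upper triangular by \cref{thm:inverse2}, and $R_{\ell+2-j,\ell+2-i}=0$ since then $\ell+2-j>\ell+2-i$ and $R$ is upper triangular by \cref{thm:first}. For $j=i$ one has $N_{i,i}=1=R_{\ell+2-i,\ell+2-i}$ and $(-1)^{i+j}=1$. So it remains to treat $1\le i<j\le \ell+1$.

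In that range, \cref{thm:inverse2} applies (the hypothesis $z_i\neq z_{i+1}$ is exactly the one assumed here) and gives, with $M=\mathcal{P}_z(\widetilde u)$,
\[
N_{i,j} = (-1)^{i+j}\, q^{(j-i)(|u|-1)}\cdot M_{i,j}\!\left(\tfrac1q\right).
\]
On the other hand, \cref{prop:reverse} applied to $z$ and $u$ yields the polynomial identity
\[
M_{i,j} = [\mathcal{P}_z(\widetilde u)]_{i,j} = q^{(j-i)(|u|-1)}\cdot R_{\ell+2-j,\ell+2-i}\!\left(\tfrac1q\right),
\qquad R=\mathcal{P}_{\widetilde z}(u).
\]
Substituting $q\mapsto 1/q$ in this identity (legitimate, as it is an identity of polynomials, equivalently of rational functions) gives $M_{i,j}(1/q) = q^{-(j-i)(|u|-1)}\cdot R_{\ell+2-j,\ell+2-i}$. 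Plugging this back into the first display, the factors $q^{(j-i)(|u|-1)}$ and $q^{-(j-i)(|u|-1)}$ cancel, leaving $N_{i,j}=(-1)^{i+j}R_{\ell+2-j,\ell+2-i}$, which is the assertion. The reformulation is then immediate: since $R$ has size $(\ell+1)\times(\ell+1)$, by definition $R^{\angle}_{i,j}=R_{\ell+2-j,\ell+2-i}$, so the entrywise statement is precisely $\mathcal{P}_z(u)^{-1}=((-1)^{i+j})_{i,j}\odot \mathcal{P}_{\widetilde z}(u)^{\angle}$.

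There is no serious obstacle here; the only thing requiring care is the direction of the $q\mapsto 1/q$ substitution and the matching of the two exponents $(j-i)(|u|-1)$, together with the off-by-one in the antitranspose indices (size $\ell+1$, hence $\ell+2-j$ rather than $\ell+1-j$) — a sign or index slip in exactly these spots is the most likely way to go wrong.
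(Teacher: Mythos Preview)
Your proof is correct and follows essentially the same approach as the paper: both combine \cref{thm:inverse2} and \cref{prop:reverse} and observe that the two factors $q^{(j-i)(|u|-1)}$ cancel after the $q\mapsto 1/q$ substitution. Your treatment is slightly more detailed (you handle the diagonal and below-diagonal cases explicitly and spell out the antitranspose index convention), but the substance is identical.
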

\begin{proof}
  This follows directly from \cref{thm:inverse2,prop:reverse}, for $j\ge i+1$, 
  \[N_{i,j} =(-1)^{i+j} q^{(j-i)(|u|-1)} \cdot \left( q^{(j-i)(|u|-1)} \cdot  [\mathcal{P}_{\widetilde{z}}(u)]_{\ell+2-j,\ell+2-i}\left(\tfrac1q\right)\right)\left(\tfrac1q\right).\]
\end{proof}

\subsection{Some consequences and new relations on $q$-binomials}

\cref{thm:inverse2} has some interesting consequences. To help the reader to grasp the idea, we first consider two particular cases with \cref{eq:firstrel1} and \cref{cor:easy} before stating the main result with \cref{pro:general_relation2}.

Let $u$ be a word and $a,b$ be two distinct letters. Let $A_u$ be the alphabet of $u$. Take a word $z$ containing the elements of $A_u\cup\{a,b\}$ exactly once. Therefore, the assumption of  \cref{thm:inverse2} is satisfied: all letters of $z$ are pairwise distinct. We can moreover assume that $z$ has $ab$ as prefix. 
Computing the element $[\mathcal{P}_z(u).\mathcal{P}_z(u)^{-1}]_{1,3}=0$, with $i<\ell$, we get the identity
\begin{equation}
  \label{eq:firstrel1}
  q\left[\qbin{u}{ab}+\qbin{u}{ba}\right]=\qbin{u}{a}\qbin{u}{b}
\end{equation}
because on the first row of $\mathcal{P}_z(u)$ the first three non-zero elements are $1$, $\qbin{u}{a}$, $q\qbin{u}{ab}$ and by \cref{thm:inverse2} the corresponding three non-zero elements on the third column of $\mathcal{P}_z(u)^{-1}$  are
$q^{2(|u|-1)}\cdot \left[q\qbin{\widetilde{u}}{ab}\right]\left(\frac1{q}\right)$, $-q^{|u|-1}\cdot \qbin[1/q]{\widetilde{u}}{b}$, and $1$, but by \cref{cor:reversal},
\[q^{|u|-1}\cdot \qbin[1/q]{\widetilde{u}}{b}=\qbin{u}{b}\]
and
\[q^{2(|u|-2)}\cdot \qbin[1/q]{\widetilde{u}}{ab}=\qbin{u}{ba}.\]
So \cref{eq:firstrel1} holds true and we have obtained this relation without relying on the combinatorial interpretation of the coefficients. The second identity is obtained by considering the element $1,4$ and assuming that $z$ has $abc$ as prefix.
\begin{corollary}\label{cor:easy}
  Let $a,b,c$ be letters such that $a\neq b$ and $b\neq c$. We have
  \[\qbin{u}{a}\qbin{u}{cb}+q^2\qbin{u}{abc}=
  \qbin{u}{c}\qbin{u}{ab}+q^2\qbin{u}{cba}.\]
\end{corollary}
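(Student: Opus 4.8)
The plan is to mimic exactly the derivation of \cref{eq:firstrel1}, this time reading off the vanishing $(1,4)$ entry of $\mathcal{P}_z(u)\,\mathcal{P}_z(u)^{-1}=I$ instead of the $(1,3)$ entry. First I would dispose of the degenerate case $a=c$, where both sides reduce to $\qbin{u}{a}\qbin{u}{ab}+q^2\qbin{u}{aba}$; so from now on I may assume $a,b,c$ pairwise distinct. Then I would fix a word $z$ having $abc$ as a prefix, consisting of pairwise distinct letters, and whose alphabet contains $A_u$ — it suffices to list the letters of $A_u\setminus\{a,b,c\}$ after $abc$ in any order, consecutive letters being then automatically distinct (and $a\neq b$, $b\neq c$ by hypothesis). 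Such a $z$ satisfies the hypothesis of \cref{thm:inverse2} and has length $\ge 3$, so $\mathcal{P}_z(u)$ has size at least $4\times 4$.

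Next I would read off, via \cref{thm:first}, the first four entries of the first row of $M:=\mathcal{P}_z(u)$, namely $1,\ \qbin{u}{a},\ q\,\qbin{u}{ab},\ q^3\,\qbin{u}{abc}$; and, via \cref{thm:inverse2} combined with \cref{thm:first} applied to $\mathcal{P}_z(\widetilde u)$, the first four entries of the fourth column of $N:=M^{-1}$:
\[
N_{4,4}=1,\qquad
N_{3,4}=-q^{|u|-1}\cdot\qbin[1/q]{\widetilde u}{c},\qquad
N_{2,4}=q^{2|u|-3}\cdot\qbin[1/q]{\widetilde u}{bc},\qquad
N_{1,4}=-q^{3|u|-6}\cdot\qbin[1/q]{\widetilde u}{abc}.
\]

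Then I would rewrite $N_{3,4},N_{2,4},N_{1,4}$ purely in terms of $\qbin{u}{c},\qbin{u}{cb},\qbin{u}{cba}$ by applying \cref{cor:reversal} in the form $q^{|v|(|u|-|v|)}\cdot\qbin[1/q]{\widetilde u}{\widetilde v}=\qbin{u}{v}$ with $v=c$, $v=cb$ and $v=cba$; after cancelling the powers of $q$ this should give $N_{3,4}=-\qbin{u}{c}$, $N_{2,4}=q\,\qbin{u}{cb}$, $N_{1,4}=-q^3\,\qbin{u}{cba}$. Finally, expanding $0=[MN]_{1,4}=M_{1,1}N_{1,4}+M_{1,2}N_{2,4}+M_{1,3}N_{3,4}+M_{1,4}N_{4,4}$ and dividing through by $q$ yields exactly the claimed identity.

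This is a routine linear-algebra computation, requiring no combinatorial interpretation of the $q$-binomials; the one delicate point — just as in the derivation of \cref{eq:firstrel1} — is keeping careful track of the exponents of $q$ and of the signs $(-1)^{i+j}$ while chaining \cref{thm:inverse2} with \cref{cor:reversal}. (Alternatively one could match occurrences of the relevant subwords directly using \cref{thm:powers}, but passing through the inverse matrix is shorter and parallels the treatment of \cref{eq:firstrel1}.)
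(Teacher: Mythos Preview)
Your proposal is correct and follows exactly the paper's approach: the paper simply says that the identity ``is obtained by considering the element $1,4$ and assuming that $z$ has $abc$ as prefix,'' and you carry out precisely this computation with all the bookkeeping of powers of $q$ and signs made explicit. Your separate treatment of the case $a=c$ is a small but welcome addition, since the paper's construction of $z$ with pairwise distinct letters tacitly assumes $a\neq c$.
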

More generally, we have the following result.
\begin{proposition}\label{pro:general_relation2}
    Let $z=z_1\cdots z_n$ be a word over $A$ where $z_i\neq z_{i+1}$ for all $1\le i<n$. For all words~$u$, we have
    \[
    \sum_{\substack{z=xy\\ x,y\in A^*}} (-1)^{|y|}q^{\mathsf{s}(|x|-1)+\mathsf{s}(|y|-1)}\qbin{u}{x}\qbin{u}{\widetilde{y}}=0
  \]
  or equivalently,
  \[\sum_{i=0}^{n} (-1)^{n-i} q^{\mathsf{s}(i-1)+\mathsf{s}(n-i-1)}\qbin{u}{z_1\cdots z_i}\qbin{u}{z_n\cdots z_{i+1}}=0.\]
\end{proposition}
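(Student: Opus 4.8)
The plan is to mimic exactly the derivation of \cref{eq:firstrel1} and \cref{cor:easy}, but in full generality, by reading off the $(1,n+1)$ entry of the product $\mathcal{P}_z(u)\cdot\mathcal{P}_z(u)^{-1}$, which is $0$ since $1<n+1$. First I would fix the word $z=z_1\cdots z_n$ satisfying $z_i\neq z_{i+1}$, so that \cref{thm:inverse2} applies. Writing $M=\mathcal{P}_z(u)$ and $N=M^{-1}$, the identity $(MN)_{1,n+1}=0$ expands as $\sum_{i=1}^{n+1} M_{1,i}\,N_{i,n+1}=0$. By \cref{thm:first}, $M_{1,i}=q^{\mathsf{s}(i-2)}\qbin{u}{z_1\cdots z_{i-1}}$ for $i\ge 2$ and $M_{1,1}=1$; it is cleaner to reindex so that the prefix read off has length $i$, i.e. to let $x=z_1\cdots z_i$ range over prefixes of $z$.

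Next I would substitute the formula for $N_{i,n+1}$ from \cref{thm:inverse2}: for $i\le n$, $N_{i,n+1}=(-1)^{i+n+1}q^{(n+1-i)(|u|-1)}\cdot M_{i,n+1}(1/q)$, and by \cref{thm:first} again $M_{i,n+1}=q^{\mathsf{s}(n-i)}\qbin{\widetilde{u}}{z_i\cdots z_n}$. The key simplification is then to apply \cref{cor:reversal} to turn $q^{(n+1-i)(|u|-1)}\cdot\bigl(q^{\mathsf{s}(n-i)}\qbin{\widetilde{u}}{z_i\cdots z_n}\bigr)(1/q)$ into $q^{\mathsf{s}(n-i)}\qbin{u}{z_n\cdots z_i}$ — this is precisely the computation carried out in the two special cases above, where the length parameters $j-i$ and $|u|$ were tracked explicitly; here the suffix $z_i\cdots z_n$ of $z$ has length $n-i+1$, so the power of $q$ contributed by \cref{cor:reversal} is $(n-i+1)(|u|-(n-i+1))$, and the remaining power $q^{(n+1-i)(|u|-1)-(n-i+1)(|u|-n+i-1)-?}$ must collapse to $q^{\mathsf{s}(n-i)}$; I would verify this exponent bookkeeping using the identity $r^2=2\mathsf{s}(r-1)+r$ exactly as in the proof of \cref{prop:reverse}. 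After the dust settles, the term for index $i$ reads $(-1)^{i+n+1}q^{\mathsf{s}(i-1)}\qbin{u}{z_1\cdots z_i}\cdot q^{\mathsf{s}(n-i)}\qbin{u}{z_n\cdots z_{i+1}}$, up to reconciling the sign $(-1)^{i+n+1}=(-1)^{n-i}$ (they agree mod $2$), which is the claimed summand; the boundary cases $i=0$ (term $(-1)^n q^{\mathsf{s}(n-1)}\qbin{u}{z_n\cdots z_1}$, from $M_{1,1}N_{1,n+1}$) and $i=n$ (term $q^{\mathsf{s}(n-1)}\qbin{u}{z_1\cdots z_n}$, from $M_{1,n+1}N_{n+1,n+1}$) fit the same pattern and provide the two ``extreme'' terms of the sum.

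Finally, the equivalence of the two displayed formulas is immediate: writing $z=xy$ with $x=z_1\cdots z_i$ and $y=z_{i+1}\cdots z_n$, we have $|x|=i$, $|y|=n-i$, $\widetilde{y}=z_n\cdots z_{i+1}$, and $(-1)^{|y|}=(-1)^{n-i}$, so the sum over factorizations $z=xy$ is literally the sum over $i\in\{0,\ldots,n\}$.

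The main obstacle I anticipate is purely the exponent arithmetic: one must be careful that the power of $q$ emerging from the two nested ``evaluate at $1/q$'' operations (one inside $N_{i,n+1}$ via \cref{thm:inverse2}, and one produced when applying \cref{cor:reversal}) exactly cancels against $q^{(n+1-i)(|u|-1)}$ to leave the symmetric factor $q^{\mathsf{s}(n-i)}$ with no residual dependence on $|u|$. This is exactly the kind of cancellation that worked in \cref{eq:firstrel1} — where $q^{2(|u|-1)}\cdot[q\qbin{\widetilde u}{ab}](1/q)$ became $q^2\qbin{u}{ba}$ — and nothing new happens for longer suffixes, but tracking it cleanly for a general index $i$ (rather than the concrete values $1,2,3$) requires a careful invocation of $r^2=2\mathsf{s}(r-1)+r$. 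No genuinely new idea beyond \cref{thm:first,thm:inverse2,cor:reversal} should be needed.
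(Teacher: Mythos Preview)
Your approach is essentially identical to the paper's: compute $[\mathcal{P}_z(u)\cdot\mathcal{P}_z(u)^{-1}]_{1,n+1}=0$, expand via \cref{thm:first} and \cref{thm:inverse2}, simplify each inverse entry using \cref{cor:reversal} (the paper carries out precisely the exponent bookkeeping you anticipate, confirming the collapse to $q^{\mathsf{s}(n-i)}$), and then reindex by the factorization $z=xy$. The one technical point you omit and the paper handles explicitly: when $u$ contains letters $d_1,\ldots,d_s$ not in the alphabet of $z$, the matrix $\mathcal{P}_z(u)$ is not defined, so the paper first replaces $z$ by $z'=zd_1\cdots d_s$ (the appended letters are new, hence the hypothesis $z'_i\neq z'_{i+1}$ is preserved) and works with $\mathcal{P}_{z'}$ instead.
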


\begin{proof}
If $u$ contains some letters $d_1$, \ldots, $d_s$ not appearing in $z$, instead of $z$ we consider $z'=z d_1\cdots d_s$ (otherwise, we set $z'=z$). Since we append new letters to $z$, this word~$z'$ still satisfies the assumption of \cref{thm:inverse2}.
  
    The result comes directly from the computation of the element $[\mathcal{P}_{z'}(u).\mathcal{P}_{z'}(u)^{-1}]_{1,n+1}=0$. Indeed, carrying out the matrix product and using \cref{thm:first} and \cref{thm:inverse2}, since $[\mathcal{P}_{z'}(u)^{-1}]_{i,n+1}=0$ whenever $i>n+1$, we get
    \begin{align*}
        [\mathcal{P}_{z'}(u).\mathcal{P}_{z'}(u)^{-1}]_{1,n+1} & =\sum_{i=1}^{n+1}[\mathcal{P}_{z'}(u)]_{1,i}\cdot[\mathcal{P}_{z'}(u)^{-1}]_{i,n+1}\\
        & =[\mathcal{P}_{z'}(u)^{-1}]_{1,n+1} + \sum_{i=2}^n[\mathcal{P}_{z'}(u)]_{1,i}\cdot[\mathcal{P}_{z'}(u)^{-1}]_{i,n+1} + [\mathcal{P}_{z'}(u)]_{1,n+1}\\
        & = (-1)^n q^{n(|u|-1)} [\mathcal{P}_{z'}(\widetilde{u})]_{1,n+1}\left(\tfrac1q\right)\\
        & \qquad + \sum_{i=2}^n q^{\mathsf{s}(i-2)} \qbin{u}{z_1 \cdots z_{i-1}}\cdot (-1)^{i+n+1} q^{(n+1-i)(|u|-1)} \cdot [\mathcal{P}_{z'}(\widetilde{u})]_{i,n+1}\left(\tfrac1q\right)\\
        & \qquad \qquad + q^{\mathsf{s}(n)}\qbin{u}{z_1 \cdots z_n}.
    \end{align*}
    Let us focus on one particular term of the sum. For all $i\in\lbrace 1,\ldots,n\rbrace$, we have
    \begin{align*}
        q^{(n+1-i)(|u|-1)} \cdot [\mathcal{P}_{z'}(\widetilde{u})]_{i,n+1}\left(\tfrac1q\right) & = q^{(n+1-i)(|u|-1)} \cdot \left( q^{\mathsf{s}(n-i)} \qbin{\widetilde{u}}{z_i\cdots z_n} \right)\left(\tfrac1q\right)\\
        & = q^{-\mathsf{s}(n-i)}q^{(n+1-i)(|u|-1)} \cdot \qbin[1/q]{\widetilde{u}}{z_i\cdots z_n}\\
        & = q^{-\mathsf{s}(n-i)}q^{(n-i)(n+1-i)}q^{(n+1-i)(|u|-(n+1-i))} \cdot \qbin[1/q]{\widetilde{u}}{z_i\cdots z_n}\\
        & = q^{-\mathsf{s}(n-i)}q^{(n-i)(n+1-i)}\qbin{u}{z_n\, z_{n-1}\cdots z_i}\\
        & = q^{\mathsf{s}(n-i)}\qbin{u}{z_n\, z_{n-1}\cdots z_i},
    \end{align*}
    where the second to last equality comes from \cref{cor:reversal}. Hence,
    \begin{align*}
        [\mathcal{P}_{z'}(u).\mathcal{P}_{z'}(u)^{-1}]_{1,n+1} & = (-1)^n q^{\mathsf{s}(n-1)}\qbin{u}{z_n\, z_{n-1}\cdots z_1}\\
        & \qquad + \sum_{i=2}^n (-1)^{i+n+1} q^{\mathsf{s}(i-2)} \qbin{u}{z_1 \cdots z_{i-1}} q^{\mathsf{s}(n-i)}\qbin{u}{z_n\, z_{n-1}\cdots z_i}\\
        & \qquad \qquad + q^{\mathsf{s}(n)}\qbin{u}{z_1 \cdots z_n}.
    \end{align*}
    For a given factorization $z=xy$, let us denote $x=z_1\cdots z_{i-1}$ and $y=z_i\cdots z_n$, and notice that
    \[
    n+1+i=|z|+1+|x|-1=|z|+|x|=2|z|-|y|,
    \]
    so that $(-1)^{i+n+1}=(-1)^{|y|}$. We can finally rewrite the above sum as
    \[
    [\mathcal{P}_{z'}(u).\mathcal{P}_{z'}(u)^{-1}]_{1,n+1} = \sum_{\substack{z=xy\\ x,y\in A^*}} (-1)^{|y|}q^{\mathsf{s}(|x|-1)+\mathsf{s}(|y|-1)}\qbin{u}{x}\qbin{u}{\widetilde{y}}.
    \]
\end{proof}

\section{Convergence to a formal power series}\label{sec:convergence}

We first observe that with any left-infinite word $\mathbf{x}=\cdots x_2x_1x_0$, if $p_n=x_{n-1}\cdots x_0$ is the prefix of length $n$ of this word, then the polynomial sequence $n\mapsto\qbin{p_n}{z}$ converges to a formal power series $\mathfrak{s}_{\mathbf{x},z}$. If the infinite word is moreover $k$-automatic, then by classical arguments the limit series is shown to be $k$-regular.

To make a connection with what is known in the classical setting, Salomaa has shown that the integer sequence $n\mapsto\binom{u^n}{z}$ satisfies a linear recurrence relation \cite{Salomaa2008}. It is therefore natural to ask what more can be said about the sequence of polynomials $n\mapsto\qbin{u^n}{z}$. As a generalization of Salomaa's result, we show that it satisfies a linear recurrence with polynomial coefficients. Finally for a periodic infinite word $\mathbf{u}=\cdots uuu$, we have a precise description of the growth order of the coefficients of the series $\mathfrak{s}_{\mathbf{u},z}$.

\subsection{Convergence and automaticity}

Due to \cref{def:qbin} and relation \cref{eq:recdef} where the focus is on the rightmost letter, it is more convenient to consider a left-infinite word $\cdots x_n x_{n-1}\cdots x_0=\mathbf{x}$. We say that $x_{t-1}\cdots x_0$ is the {\em prefix} of length~$t$ of $\mathbf{x}$. 

\begin{proposition}\label{pro:defser}
  Let $\cdots x_n x_{n-1}\cdots x_0=\mathbf{x}$ be a left-infinite word and $z$ be a finite word. For all $r\ge 0$, there exists $N$ such that the coefficient of $q^r$ in the $q$-binomials $\qbin{ x_n \cdots x_0}{z}$ is the same for all $n\ge N$. Otherwise stated, the sequence $\left([q^r]\qbin{ x_n \cdots x_0}{z}\right)_{n\ge 0}$ is eventually constant.
\end{proposition}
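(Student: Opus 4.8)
The plan is to fix $r \ge 0$ and show that the coefficient $[q^r]\qbin{x_n\cdots x_0}{z}$ stabilizes, by exhibiting a bound $N = N(r,z)$ beyond which adding more letters on the left cannot change this coefficient. The natural tool is the combinatorial interpretation of \cref{thm:powers}: writing $z = z_k\cdots z_1$ and $p_n = x_{n-1}\cdots x_0$, an occurrence of $z$ as a subword of $p_n$ is a set $Y = \{y_k > \cdots > y_1\}$ of positions (counting from the right, starting at $1$) with $x_{y_k}\cdots x_{y_1} = z$ (in the appropriate indexing), contributing a term $q^{\mathsf{s}(Y) - k(k+1)/2}$, which, as spelled out after \cref{thm:powers}, equals $q^\alpha$ with $\alpha = \sum_{i=1}^k (y_i - i)$. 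Since all $y_i - i \ge 0$ and they are weakly increasing with $y_k - k \ge y_1 - 1$, any occurrence contributing to $[q^r]$ must in particular have $y_1 - 1 \le \alpha \le r$, hence $y_1 \le r+1$, and then $y_i \le y_1 + (k-1) \le r + k$ for all $i$. So every occurrence of $z$ contributing a $q^r$-term lies entirely within the rightmost $r + k$ positions of $p_n$.

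From here the argument is immediate. The set of occurrences of $z$ in $p_n$ that lie within the last $r+|z|$ letters depends only on the suffix $x_{r+|z|-1}\cdots x_0$, not on $n$, as soon as $n \ge r + |z|$; and by the preceding paragraph these are exactly the occurrences that can contribute to the coefficient of $q^r$. Therefore, taking $N := r + |z|$, the value $[q^r]\qbin{p_n}{z}$ is the same for all $n \ge N$. This proves that the sequence $\bigl([q^r]\qbin{p_n}{z}\bigr)_{n\ge 0}$ is eventually constant.

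I do not expect any genuine obstacle here; the only point requiring a little care is the bookkeeping of indices — matching the right-to-left numbering of positions used in \cref{thm:powers} with the prefix notation $p_n = x_{n-1}\cdots x_0$, and checking that "the occurrence lies within the last $r+|z|$ positions" is exactly the condition isolated above. One can alternatively give an even shorter inductive proof directly from \cref{eq:recdef}: the recurrence $\qbin{ua}{vb} = q^{|vb|}\qbin{u}{vb} + \delta_{a,b}\qbin{u}{v}$ shows that prepending a letter multiplies the relevant partial $q$-binomials by a positive power of $q$ (here the roles of "prefix" and "suffix" are swapped relative to \cref{eq:recdef}, so one uses the mirrored recurrence), so low-degree coefficients are frozen once enough letters have been read; but the combinatorial argument above is cleaner and makes the bound $N = r + |z|$ explicit.
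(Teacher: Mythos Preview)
Your approach is essentially the same as the paper's: both use the combinatorial interpretation of \cref{thm:powers} to show that any occurrence of $z$ contributing a monomial $q^r$ must lie entirely within the rightmost $r+|z|$ letters, giving the explicit bound $N = r+|z|$.

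One small slip to fix: the inequality ``$y_i \le y_1 + (k-1)$'' is false in general (the $y_i$ can be spread out, so rather $y_i \ge y_1 + (i-1)$). The correct one-line argument for your conclusion is that each summand $y_i - i$ is nonnegative, hence $y_i - i \le \sum_{j}(y_j - j) = \alpha = r$, so $y_i \le r+i \le r+k$ for all $i$. With this correction your proof is complete and matches the paper's.
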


\begin{proof}
  If $z$ occurs as a subword of some prefix of length~$N$ of $\mathbf{x}$, then it will occur within all longer prefixes of $\mathbf{x}$. Let $n\ge N-1$. Take a specific occurrence of $z=z_1\cdots z_\ell$ such that
  \[x_n\cdots x_0=z_1y_1z_2y_2\cdots z_\ell y_\ell,\]
where $y_1$, \ldots, $y_\ell\in A^*$. By \cref{thm:powers} it provides $\qbin{x_n\cdots x_0}{z}$ with some monomial $q^r$ where
  \[r=\sum_{i=1}^\ell i\, |y_i|\ge |y_1\cdots y_\ell|=n+1-|z|.\]
 Roughly speaking, the further to the left occurs the first letter of $z$ (i.e., the larger $n$ is such that $x_n=z_1$), the larger the corresponding exponent is.

  Instead of focusing on a specific occurrence of $z$, let us focus on a specific exponent of the $q$-binomial. From the above discussion, for each given~$r$ and for large enough~$n$, the coefficient of~$q^r$ in $\qbin{x_n\cdots x_0}{z}$ is completely determined by a suffix of $\mathbf{x}$ whose length is bounded by~$r+|z|$. Hence, for all $n\ge r+|z|-1$, the polynomials $\qbin{x_n\cdots x_0}{z}$ have the same coefficient for $q^r$.
\end{proof}

This result legitimates the next definition.

\begin{definition}
 Let $\mathbf{x}$ be a left-infinite word and $z$ be a finite word. For all $r\ge 0$, we let $c_r$ denote the coefficient of $q^r$ in the $q$-binomials $\qbin{ x_n \cdots x_0}{z}$ for all $n\ge r+|z|-1$. We let $\mathfrak{s}_{\mathbf{x},z}$ denote the formal series defined by $\sum_{r\ge 0}c_r\, q^r$.  
\end{definition}


As an example, consider the (left) Thue--Morse sequence $\cdots110010110=\mathbf{t}$ and the word $z=00$. We obtain the (right-infinite) sequence of coefficients whose first terms are 
\[00101101211211412313324323525505635534844655764765957847\cdots\]
i.e., 
\[\mathfrak{s}_{\mathbf{t},00}=q^2+q^4+q^5+q^7+2q^8+q^9+\cdots\]
For instance, this means that, for large enough~$n$, in every $q$-binomial $\qbin{t_n\cdots t_0}{00}$ the monomial of least degree with a non-zero coefficient is $q^2$. 
This is (up to a shift) the sequence {\tt A133009} from OEIS
counting the number $c(n)$ of pairs $(x,y)$ of integers such as their base-$2$ expansions contain an odd number of ones (the Thue--Morse word is the characteristic sequence of the set of integers with this property), $x<y$ and $x+y=n$. For instance, $c(3)=1$ because there is only one pair $(1,2)$ such that $1+2=3$. This sequence was introduced in \cite{ErdosIV} and the connection with automatic sequences was studied in \cite{allouche2022additive}.

We assume the reader familiar with $k$-automatic and $k$-regular sequences  \cite{AS,shallit2022logical}. For instance, the Thue--Morse sequence is $2$-automatic.  When the infinite word considered in \cref{pro:defser} is $k$-automatic, we obtain a $k$-regular sequence of coefficients.

\begin{proposition}
  Let $\cdots x_2x_1x_0=\mathbf{x}$ be a (left-infinite) $k$-automatic sequence over an alphabet $A$ and $z$ be a word. The sequence of coefficients of the series $\mathfrak{s}_{\mathbf{x},z}$ is $k$-regular.
\end{proposition}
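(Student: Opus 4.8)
The plan is to realize $\mathfrak{s}_{\mathbf{x},z}$ as an entry of a matrix-valued limit and to transfer $k$-regularity through that matrix. Concretely, recall from \cref{thm:first} that for every finite prefix $p_n=x_{n-1}\cdots x_0$ the entry $[\mathcal{P}_z(p_n)]_{1,\ell+1}$ equals $q^{\mathsf{s}(\ell-1)}\qbin{p_n}{z}$ (taking $i=1$, $r=\ell=|z|$), and more generally each of the $\binom{\ell+1}{2}$ above-diagonal entries is, up to a fixed power of $q$, some $\qbin{p_n}{z_i\cdots z_{j-1}}$. By \cref{pro:defser} applied to every factor $z_i\cdots z_{j-1}$ of $z$ simultaneously, each such polynomial sequence converges coefficientwise to a formal series in $\N[[q]]$; hence the whole matrix sequence $\mathcal{P}_z(p_n)$ converges in $(\N[[q]])^{(\ell+1)\times(\ell+1)}$ to a matrix $\mathfrak{M}_{\mathbf{x},z}$ whose $(1,\ell+1)$ entry is $q^{\mathsf{s}(\ell-1)}\mathfrak{s}_{\mathbf{x},z}$. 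So it suffices to show the coefficient sequence of each entry of $\mathfrak{M}_{\mathbf{x},z}$ is $k$-regular, and multiplication/division by a fixed power of $q$ clearly preserves $k$-regularity.

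Next I would set up the recurrence that the automatic structure provides. Since $\mathbf{x}$ is $k$-automatic, write $\mathbf{x}=\tau(\varphi^\omega(e))$ read appropriately (a $k$-uniform morphism $\varphi$ on a finite alphabet $S$ with $\varphi(e)$ ending in $e$, and a coding $\tau:S\to A$), so that the prefix of length $k^m$ decomposes as $p_{k^m}=\tau(\varphi^{m-1}(s_{k-1}))\cdots\tau(\varphi^{m-1}(s_0))$ where $\varphi(e)=s_{k-1}\cdots s_0$. The key algebraic fact is that $w\mapsto\mathcal{P}_z(w)$, while not a morphism, still satisfies $\mathcal{P}_z(uv)$ is obtained from $\mathcal{P}_z(u)$ and $\mathcal{P}_z(v)$ by a \emph{twisted} product: unwinding \cref{def:ourdef}, $\mathcal{P}_z(uv)=D(|v|)\,\mathcal{P}_z(u)\,D(|v|)^{-1}\cdot\mathcal{P}_z(v)$ where $D(t)=\mathrm{diag}(1,q^{t},q^{2t},\ldots,q^{\ell t})$ — this is exactly the observation that prepending a letter multiplies the $\mathcal{M}$-index by a shift of $|v|$. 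Thus if for $s\in S$ and $m\ge 0$ we set $\Phi_m(s):=\mathcal{P}_z(\tau(\varphi^m(s)))$ and $L_m(s):=|\varphi^m(s)|=k^m$ (independent of $s$ by $k$-uniformity!), then reading $\varphi^{m+1}(s)=\varphi^m(s_{h})\cdots\varphi^m(s_0)$ with $\varphi(s)=s_h\cdots s_0$, $h=k-1$, yields
\[
\Phi_{m+1}(s)=\prod_{t=0}^{h}D(t\cdot k^m)\,\Phi_m(s_t)\,D(t\cdot k^m)^{-1},
\]
a product of conjugates of the matrices $\Phi_m(s_t)$ by fixed diagonal matrices whose entries are powers of $q^{k^m}$.

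The heart of the argument is then to extract, from this recurrence, a $k$-regular sequence. Fix a coefficient index $r$; by \cref{pro:defser} the $(i,j)$ entry of $\mathcal{P}_z(p_n)$ has stabilized in its $q^r$-coefficient once $n$ exceeds $r+\ell$. For $n$ in a dyadic-type window I expand $p_n$ as a concatenation of blocks $\tau(\varphi^m(\cdot))$ dictated by the base-$k$ digits of $n$, apply the twisted product repeatedly, and read off the coefficient of $q^r$ in the $(i,j)$ entry. Because the conjugating diagonal matrices only contribute powers $q^{t\cdot k^m}$ with $k^m$ eventually exceeding $r$, for large $m$ these conjugations are \emph{invisible} to the coefficient of $q^r$ — they shift higher-degree terms away and cannot create a $q^r$ term unless $t=0$. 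This is the mechanism that collapses the twisted product to an ordinary finite product modulo $q^{r+1}$, so the truncations $\mathcal{P}_z(p_n)\bmod q^{r+1}$ satisfy an honest finitary recurrence. I would then argue $k$-regularity by exhibiting an explicit finite set of generating sequences: the $\N$-module generated by the sequences $n\mapsto [q^r]\,[\mathcal{P}_z(p_{k^a n+b})]_{i,j}$ over all $a\ge 0$, $0\le b<k^a$, $1\le i,j\le\ell+1$ — together with, if needed, the analogous sequences for the blocks $\Phi_m(s)$ indexed by $s\in S$ — is finitely generated over $\N[q]$ truncated at each level $q^{r+1}$, uniformly in $r$; passing to the inverse limit over $r$ packages all the $c_r$'s into a single $k$-regular sequence, since $k$-regularity of $(c_r)_r$ is precisely the statement that the $\N$-module spanned by its $k$-ary subsequences is finitely generated.

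The main obstacle is precisely the fact, stressed already in the paper, that $\mathcal{P}_z$ is \emph{not} a morphism: the naive approach ``image of a $k$-automatic word under a matrix morphism into a finitely generated algebra is $k$-regular'' does not apply verbatim. One must handle the $|v|$-dependent powers of $q$ in the twisted product, and the cleanest way is the truncation argument sketched above — work modulo $q^{r+1}$, observe that only finitely many of the conjugating shifts $k^m$ are below $r+1$, and show the resulting recurrence has bounded ``rank'' uniformly in $r$. A careful bookkeeping of which base-$k$ digits of $n$ affect which window of coefficients, and a verification that the finitely-generated-module certificate for $k$-regularity survives the inverse limit over $r$, are the two places where the routine-looking details actually need care. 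An alternative, perhaps slicker, route: note $[q^{\le R}]\mathcal{P}_z(p_n)$ depends only on a suffix of $\mathbf{x}$ of bounded length (\cref{pro:defser}) plus the quantity $n\bmod$ (something), so each $c_r$ is itself given by a $k$-automatic function of $n$ eventually, and then invoke the standard closure of $k$-regular sequences under such ``eventually automatic with bounded range'' behavior — but making ``eventually'' uniform in $r$ to get a single regular sequence again forces the same uniform-rank analysis.
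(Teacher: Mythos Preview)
Your setup contains a real idea --- the twisted-product identity $\mathcal{P}_z(uv)=D(|v|)^{-1}\mathcal{P}_z(u)\,D(|v|)\cdot\mathcal{P}_z(v)$ (you have the conjugation reversed, but that is harmless and is essentially \cref{lemma:Hmatrix}) --- yet the passage to $k$-regularity has a genuine gap: you conflate the two running indices. What must be shown is that $r\mapsto c_r$ is $k$-regular, i.e., that the $\Z$-module spanned by the subsequences $(c_{k^a r+b})_{r\ge 0}$ is finitely generated. Your candidate generating set consists instead of sequences $n\mapsto [q^r][\mathcal{P}_z(p_{k^a n+b})]_{i,j}$, which are subsequences in the \emph{prefix length}~$n$ with $r$ held fixed. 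By \cref{pro:defser} each of these is eventually constant, so the module they span is trivially finitely generated; this says nothing about the $k$-kernel in~$r$. The phrase ``passing to the inverse limit over $r$'' does not repair this: a uniform bound on the number of generators of the truncated modules is not a linear representation of $(c_r)_r$ over the base-$k$ digits of~$r$. Likewise, your observation that the conjugating shifts $q^{t\cdot k^m}$ are ``invisible modulo $q^{r+1}$'' for large $m$ is a statement about stabilisation in~$n$, not about any recursion in~$r$; the remaining low-$m$ factors carry $m$-dependent shifts, so no fixed per-digit matrices emerge.

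The paper bypasses the matrix machinery entirely. Using \cref{thm:powers}, $c_r$ is written directly as the number of $\ell$-tuples $(i_\ell,\ldots,i_1)$ with $i_\ell>\cdots>i_1$, $x_{i_j}=z_j$ for each $j$, and $\sum_j i_j = r+\mathsf{s}(\ell-1)$. Since $\mathbf{x}$ is $k$-automatic, each predicate ``$x_i=a$'' is first-order definable in $\langle\N,+,V_k\rangle$, hence so is this set of tuples; the standard enumeration result (counting solutions of a $k$-definable predicate gives a $k$-regular sequence) then yields $k$-regularity of $r\mapsto c_r$ in one line. If you want to rescue a matrix-style argument, you need matrices indexed by the base-$k$ digits of~$r$, and your twisted product in~$n$ does not provide them.
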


\begin{proof}
  Since $\mathbf{x}$ is $k$-automatic, for all $a\in A$, there is a first-order formula $\varphi_a(i)$ in $\langle\mathbb{N},+,V_k\rangle$ which holds true if and only if $x_i=a$. Let $z=z_\ell\cdots z_1$. The next formula permits to detect occurrences of $z$ as a subword of $\mathbf{x}$:
  \[\Psi_z(i_\ell,\ldots,i_1) \equiv i_\ell>\cdots>i_1 \wedge \varphi_{z_\ell}(i_\ell) \wedge \cdots \wedge \varphi_{z_1}(i_1)\]
Since $z$ is given, note that $\mathsf{s}(\ell-1)$ is a constant. For all $n\ge 0$, the set
\[T_n:=\{(i_\ell,\ldots,i_1) \mid \Psi_z(i_\ell,\ldots,i_1) \wedge n=i_\ell+\cdots+i_1-\mathsf{s}(\ell-1)\}\]
is definable in $\langle\mathbb{N},+,V_k\rangle$. By \cref{thm:powers} the number of $\ell$-tuples in $T_n$ gives the coefficient of $q^n$ in the series
\[\# T_n= [q^n] \mathfrak{s}_{\mathbf{x},z}.\]
By classical enumeration arguments about $k$-automatic formulas (for a proof, see \cite{CRS} ; for details, see \cite[Chap.~9]{shallit2022logical} or \cite[Sec.~2]{allouche2022additive}), the sequence $n\mapsto \# T_n$ is $k$-regular.
\end{proof}

\begin{corollary}[{\cite[Thm.~9.7.1]{shallit2022logical}}]\label{cor:reg-growth}
   Let $\mathbf{x}$ be a (left-infinite) $k$-automatic sequence and $z$ be a word. There exists a real number $\alpha>0$ such that $[q^n] \mathfrak{s}_{\mathbf{x},z} $ is in $\mathcal{O}(n^\alpha)$.
\end{corollary}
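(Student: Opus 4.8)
The statement to prove is \cref{cor:reg-growth}: for a $k$-automatic left-infinite word $\mathbf{x}$ and a word $z$, the coefficient sequence $n \mapsto [q^n]\mathfrak{s}_{\mathbf{x},z}$ is $\mathcal{O}(n^\alpha)$ for some real $\alpha > 0$. Since this is cited as \cite[Thm.~9.7.1]{shallit2022logical}, the cleanest route is to simply invoke the general theory of $k$-regular sequences once we know the sequence is $k$-regular, which was established in the immediately preceding proposition. So the first step is: recall that $n \mapsto [q^n]\mathfrak{s}_{\mathbf{x},z}$ is $k$-regular (just proved). The second step is to quote the known structural fact about $k$-regular sequences taking non-negative integer values: such a sequence is bounded by a polynomial in $n$. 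This is precisely the content of \cite[Thm.~9.7.1]{shallit2022logical} (or, equivalently, the Allouche--Shallit bound that a $k$-regular sequence with values in $\N$ grows at most polynomially). Thus the proof is essentially a one-line deduction.

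\textbf{What I would actually write.} The proof is:
\begin{proof}[Proof sketch]
By the preceding proposition, the sequence $n \mapsto [q^n]\mathfrak{s}_{\mathbf{x},z}$ is $k$-regular, and by \cref{thm:powers} it takes values in $\N$ (it counts $\ell$-tuples in a definable set). A $k$-regular sequence with values in $\N$ grows at most polynomially: if $(a_n)_n$ is $k$-regular then the $\Q$-vector space spanned by its $k$-kernel is finite-dimensional, say of dimension $D$, and writing $a_n$ via the matrix representation associated with a basis of the kernel, one gets $|a_n| \le C \max\{1, n\}^{\log_k \rho}$ where $\rho$ bounds the norms of the matrices $B_0, \ldots, B_{k-1}$ realising the action of the digits; since the values are non-negative integers this growth estimate is exactly $\mathcal{O}(n^\alpha)$ with $\alpha = \log_k \rho > 0$. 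See \cite[Thm.~9.7.1]{shallit2022logical}.
\end{proof}

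\textbf{Main obstacle.} There is essentially no obstacle here: the real work was done in establishing $k$-regularity of the coefficient sequence, and the polynomial-growth bound for $\N$-valued $k$-regular sequences is a standard, citable fact. The only point requiring a word of care is to note that $\alpha$ can be taken strictly positive (rather than merely non-negative), which is automatic since $n^0 = 1 \in \mathcal{O}(n^\alpha)$ for any $\alpha > 0$, so one may always inflate the exponent; alternatively $\rho \geq 1$ always holds for the relevant matrices when the sequence is unbounded, and when it is bounded the claim is trivial. Hence the statement follows immediately from the cited theorem, and I would keep the proof to a single sentence of the form ``This is immediate from the previous proposition together with \cite[Thm.~9.7.1]{shallit2022logical}, noting that the coefficient sequence is $\N$-valued by \cref{thm:powers}.''
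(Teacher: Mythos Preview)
Your proposal is correct and matches the paper's approach exactly: the paper gives no proof for this corollary beyond the citation in its heading, relying on the immediately preceding proposition (establishing $k$-regularity of the coefficient sequence) together with the polynomial-growth bound for $\N$-valued $k$-regular sequences from \cite[Thm.~9.7.1]{shallit2022logical}. Your one-line deduction is precisely what is intended.
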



\subsection{A fine analysis of the periodic case}

Let $u,z$ be finite non-empty words. The left-infinite periodic word $\mathbf{u}=\cdots uuu$ is $k$-automatic for all $k\ge 2$, hence the series $\mathfrak{s}_{\mathbf{u},z}$ is $k$-regular for all $k\ge 2$. A generalization of Cobham's theorem implies that the sequence of coefficients of $\mathfrak{s}_{\mathbf{u},z}$ satisfies a linear recurrence relation \cite{Bell2005}.

Our aim in this section is to get a precise description of the polynomial $\qbin{u^n}{z}$ not reducing ourselves to the limit case given by $\mathfrak{s}_{\mathbf{u},z}$. From this, we obtain a generalization of Salomaa's result with \cref{cor:gen_sal}: The sequence of $q$-binomials $(\qbin{u^n}{z})_{n\ge 0}$ satisfies a linear recurrence relation over $\mathbb{Z}[q]$. In particular, our developments are here independent of the general theory of automatic and regular sequences.

\begin{definition}
Given two words $u,z$ where $u$ is over the alphabet $A$ of $z$ and a non-negative integer $k$, let us define the square matrix Pow$_z(k)$ of dimension $|z|+1$ as follows:
\[
[\Pow_z(k)]_{i,j}=
\begin{cases}
0 & \text{if } i>j\\
1 & \text{if } i=j\\
q^{(j-i)k} & \text{if } i<j,\\
\end{cases}
\quad \text{ i.e., }\quad 
\Pow_z(k) =
{\small\begin{pmatrix}
    1      & q^k     & q^{2k} & \cdots & q^{|z|k}\\
    0      &    1    & q^k    & \ddots & \vdots\\
    \vdots &  \ddots & \ddots & \ddots & q^{2k}\\
    \vdots &         & \ddots &    1   & q^k\\
    0      &  \cdots & \cdots &    0   & 1\\
\end{pmatrix}}.
\]
Finally, we let $H_{z,k}(u)$ denote the Hadamard product $\mathcal{P}_z(u)\odot\Pow_z(k)$.
\end{definition}

The following result allows us to express the Parikh matrix of $u^n$ induced by $z$ using the matrices $H_{z,k}(u)$ we just defined.
\begin{lemma}\label{lemma:Hmatrix}
    Let $k\in\N$, and let us write $u=u_\ell\cdots u_0$ with $u_i\in A$ for all $i\in\lbrace 0,\ldots,\ell\rbrace$. We have
    \[
    H_{z,k}(u) = \mathcal{M}_{u_\ell,k+\ell}\cdot\mathcal{M}_{u_{\ell-1},k+\ell-1}\cdots\mathcal{M}_{u_1,k+1}\cdot\mathcal{M}_{u_0,k}.
    \]
\end{lemma}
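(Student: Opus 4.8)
The plan is to compare the two sides by their defining factorizations. The matrix $\Pow_z(k)$ is precisely the Hadamard-factor that appears when one shifts all the exponent-parameters in the $\mathcal{M}$ matrices by a constant $k$, so I expect the proof to reduce to a bookkeeping identity about how a global shift of the indices interacts with the matrix product and the Hadamard product.

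First I would recall the definitions: by \cref{def:ourdef}, $\mathcal{P}_z(u) = \mathcal{M}_{u_\ell,\ell}\mathcal{M}_{u_{\ell-1},\ell-1}\cdots\mathcal{M}_{u_0,0}$, and by \cref{thm:first} its entries are $[\mathcal{P}_z(u)]_{i,i+r} = q^{\mathsf{s}(r-1)}\qbin{u}{z_i\cdots z_{i+r-1}}$ for $r \geq 1$, with $1$'s on the diagonal. Hence $[H_{z,k}(u)]_{i,i} = 1$ and $[H_{z,k}(u)]_{i,i+r} = q^{rk}\,q^{\mathsf{s}(r-1)}\qbin{u}{z_i\cdots z_{i+r-1}}$ for $r\geq 1$. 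So it suffices to show that the right-hand product $\mathcal{M}_{u_\ell,k+\ell}\cdots\mathcal{M}_{u_0,k}$ has exactly these entries.

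The cleanest route is a direct induction on $\ell = |u| - 1$ (equivalently, on the length of $u$), mimicking the proof of \cref{thm:first}. Write $N := \mathcal{M}_{u_\ell,k+\ell}\cdots\mathcal{M}_{u_0,k} = \mathcal{M}_{u_\ell,k+\ell}\cdot N'$, where $N' = \mathcal{M}_{u_{\ell-1},k+\ell-1}\cdots\mathcal{M}_{u_0,k}$ corresponds to $u' = u_{\ell-1}\cdots u_0$ with the same shift $k$. By the induction hypothesis, $[N']_{i,i+r} = q^{rk+\mathsf{s}(r-1)}\qbin{u'}{z_i\cdots z_{i+r-1}}$. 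Now multiplying on the left by $\mathcal{M}_{u_\ell,k+\ell}$, whose only off-diagonal entries are $q^{k+\ell}$ in positions $(i,i+1)$ with $z_i = u_\ell$, gives: if $z_i \neq u_\ell$ then $[N]_{i,i+r} = [N']_{i,i+r}$ and we invoke the fact that prepending $u_\ell$ does not change the relevant $q$-binomial in this case (i.e. $\qbin{u_\ell u'}{z_i\cdots z_{i+r-1}} = \qbin{u'}{z_i\cdots z_{i+r-1}}$ when $z_i\neq u_\ell$, which follows from \cref{thm:powers} since no occurrence of the subword can begin at position $\ell$); if $z_i = u_\ell$ then $[N]_{i,i+r} = [N']_{i,i+r} + q^{k+\ell}[N']_{i+1,i+r}$, and substituting the induction hypothesis and using the $q$-Pascal-type identity $\qbin{u_\ell u'}{z_i\cdots z_{i+r-1}} = \qbin{u'}{z_i\cdots z_{i+r-1}} + q^{|u'|-(r-1)}\qbin{u'}{z_{i+1}\cdots z_{i+r-1}}$ (exactly as used at the end of the proof of \cref{thm:first}, with $|u'| = \ell$) yields the claimed formula after collecting exponents: one checks $k + \ell + rk + \mathsf{s}(r-2) = rk + \mathsf{s}(r-1) + (\ell - r + 1)$, which is the arithmetic identity $\mathsf{s}(r-1) = \mathsf{s}(r-2) + (r-1)$.

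The main obstacle is purely notational: keeping the shift $k$ threaded consistently through the induction and matching the three exponent contributions ($q^{rk}$ from $\Pow_z(k)$, $q^{\mathsf{s}(r-1)}$ from \cref{thm:first}, and the $q$-Pascal power $q^{|u'|-(r-1)}$) without sign or off-by-one errors. There is no genuine difficulty beyond this; an alternative, even shorter, presentation would be to simply observe that $N$ differs from $\mathcal{P}_z(u)$ only in that every parameter $j$ has been replaced by $j+k$, and track directly through the proof of \cref{thm:first} that each entry $[\mathcal{P}_z(u)]_{i,i+r}$ picks up exactly one extra factor of $q$ per unit of shift, per unit of $r$, giving the overall factor $q^{rk}$ — which is precisely the $(i,i+r)$ entry of $\Pow_z(k)$.
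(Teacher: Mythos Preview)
Your proposal is correct and follows essentially the same route as the paper: induction on the length of $u$, peeling off the leftmost letter, applying the induction hypothesis to the shorter word, and closing with the same $q$-Pascal identity from \cref{thm:powers}. One small slip: in your exponent check the term coming from $[N']_{i+1,i+r}$ should carry $(r-1)k$ rather than $rk$ (since that entry lies on the $(r-1)$st superdiagonal), after which the identity $\mathsf{s}(r-1)=\mathsf{s}(r-2)+(r-1)$ you state does indeed finish the computation.
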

\begin{proof}
    We proceed by induction on $|u|\geq 1$. If $|u|=1$, then $u=a\in A$ and, using the definition of the Parikh matrix $\mathcal{P}_z(a)$, we have
    \begin{align*}
    H_{z,k}(a) & = \mathcal{P}_z(a)\odot\Pow_z(k) = \mathcal{M}_{a,0}\odot\Pow_z(k) = \mathcal{M}_{a,k}.
    \end{align*}
    
    Let us now assume that the result is true for all words $u$ of length $\ell\geq 1$, and show that it still holds for a word $w=w_\ell\cdots w_0$ of length $\ell+1$. If $|w|=\ell+1$, then there exist $u\in A^*$ and $a\in A$ such that $|u|=\ell$ and $w=au=au_{\ell-1}\cdots u_0$. By induction hypothesis, 
    \begin{align*}
    \mathcal{M}_{w_\ell,k+\ell}\cdot\mathcal{M}_{w_{\ell-1},k+\ell-1}\cdots\mathcal{M}_{w_0,k} & = \mathcal{M}_{a,k+\ell}\cdot\mathcal{M}_{u_{\ell-1},k+\ell-1}\cdots\mathcal{M}_{u_0,k}\\
    & = \mathcal{M}_{a,k+\ell}\cdot H_{z,k}(u).
    \end{align*}
    We thus have to show that $\mathcal{M}_{a,k+\ell}\cdot H_{z,k}(u)=H_{z,k}(w)$. By definition,
    \[
    [\mathcal{M}_{a,k+\ell}]_{i,j}=
    \begin{cases}
    1,       & \text{if } i=j;\\
    q^{k+\ell}, & \text{if } i+1=j \text{ and } z_i=a;\\
    0,       & \text{otherwise};
    \end{cases}
    \]
    and
    \[
    [H_{z,k}(u)]_{i,j}=
    \begin{cases}
    0, & \text{if } i>j;\\
    1, & \text{if } i=j;\\
    q^{(j-i)k+\mathsf{s}(j-i-1)}\qbin{u}{z_i\cdots z_{j-1}}, & \text{if } i<j.
    \end{cases}
    \]
    It is clear that $[\mathcal{M}_{a,k+\ell}\cdot H_{z,k}(u)]_{i,j}=\delta_{i,j}$ for every pair $(i,j)$ such that $1\leq j\leq i\leq |z|+1$. Now assume that $i<j$, and let us consider two cases.
    If $z_i\neq a$, then
    \begin{align*}
    [\mathcal{M}_{a,k+\ell}\cdot H_{z,k}(u)]_{i,j} & = q^{(j-i)k+\mathsf{s}(j-i-1)}\qbin{u}{z_i\cdots z_{j-1}}\\
    & = q^{(j-i)k+\mathsf{s}(j-i-1)}\qbin{au}{z_i\cdots z_{j-1}}\\
    & = [H_{z,k}(w)]_{i,j},
    \end{align*}
    and if $z_i=a$, then
    \begin{align*}
    [\mathcal{M}_{a,k+\ell}\cdot H_{z,k}(u)]_{i,j} & = q^{(j-i)k+\mathsf{s}(j-i-1)}\qbin{u}{z_i\cdots z_{j-1}} + q^{k+\ell}q^{(j-i-1)k+\mathsf{s}(j-i-2)}\qbin{u}{z_{i+1}\cdots z_{j-1}}\\
    & = q^{(j-i)k+\mathsf{s}(j-i-1)}\left( \qbin{u}{z_i\cdots z_{j-1}} + q^{\ell-(j-i-1)}\qbin{u}{z_{i+1}\cdots z_{j-1}} \right)\\
    & = q^{(j-i)k+\mathsf{s}(j-i-1)}\qbin{w}{z_i\cdots z_{j-1}}\\
    & = [H_{z,k}(w)]_{i,j},
    \end{align*}
    where the second to last equality comes from \cref{thm:powers}, since $|u|=\ell$ and $|z_{i+1}\cdots z_{j-1}|=j-i-1$. In any case, we find that $[\mathcal{M}_{a,k+\ell}\cdot H_{z,k}(u)]_{i,j}=[H_{z,k}(w)]_{i,j}$ for all $1\leq i,j\leq |z|+1$, which completes the proof.
\end{proof}

\begin{corollary}\label{cor:matH}
    For any positive integer $n$ and $u\in A^*$, we have
    \[
    \mathcal{P}_z(u^n)=H_{z,(n-1)|u|}(u)\cdot H_{z,(n-2)|u|}(u)\cdots H_{z,|u|}(u)\cdot H_{z,0}(u).
    \]
\end{corollary}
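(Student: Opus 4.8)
The plan is to read $\mathcal{P}_z(u^n)$ directly off \cref{def:ourdef} as a product of $\mathcal{M}$-matrices and then, using \cref{lemma:Hmatrix}, to recognise that the $n$ consecutive blocks of factors coming from the $n$ copies of $u$ are exactly the matrices $H_{z,k}(u)$ for the right values of $k$.

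Concretely, I would write $u = u_\ell\cdots u_0$ (so $|u|=\ell+1$) and index the letters of $u^n$ from the right as $u^n = v_{n|u|-1}\cdots v_1 v_0$, so that $\mathcal{P}_z(u^n) = \mathcal{M}_{v_{n|u|-1},\,n|u|-1}\cdots \mathcal{M}_{v_0,0}$. The key observation is that the $j$-th copy of $u$ (for $j = 0,\dots,n-1$, counted from the right) occupies positions $j|u|,\,j|u|+1,\dots,j|u|+\ell$ of $u^n$ and, read from left to right, contributes the consecutive factors $\mathcal{M}_{u_\ell,\,j|u|+\ell}\,\mathcal{M}_{u_{\ell-1},\,j|u|+\ell-1}\cdots\mathcal{M}_{u_0,\,j|u|}$. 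By \cref{lemma:Hmatrix} applied with $k = j|u|$, this product is precisely $H_{z,j|u|}(u)$. Since the blocks occur in the product in decreasing order of $j$ (leftmost copy $j=n-1$, rightmost $j=0$), assembling them yields
\[
\mathcal{P}_z(u^n) = H_{z,(n-1)|u|}(u)\cdot H_{z,(n-2)|u|}(u)\cdots H_{z,|u|}(u)\cdot H_{z,0}(u).
\]
Equivalently, one can run a short induction on $n$: the base case $n=1$ is $H_{z,0}(u) = \mathcal{P}_z(u)\odot\Pow_z(0) = \mathcal{P}_z(u)$ since $\Pow_z(0)$ has only $1$'s on and above the diagonal while $\mathcal{P}_z(u)$ is upper triangular; and for $u^{n+1} = u\cdot u^n$ the leftmost copy of $u$ sits at positions $n|u|,\dots,n|u|+\ell$, so $\mathcal{P}_z(u^{n+1}) = H_{z,n|u|}(u)\cdot\mathcal{P}_z(u^n)$ by \cref{lemma:Hmatrix}, and the induction hypothesis closes the argument.

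I do not expect a genuine obstacle here; the proof is bookkeeping once \cref{lemma:Hmatrix} is available. The one point to be careful about is that $\mathcal{P}_z$ is \emph{not} a morphism, so one may not write $\mathcal{P}_z(u^n) = \mathcal{P}_z(u)^n$; the argument must be carried out at the level of the individual $\mathcal{M}$-factors, and the crucial fact is that the exponent attached to each factor is the position of its letter counted from the right end of $u^n$, which increases by exactly $|u|$ as one passes from one copy of $u$ to the next — this is exactly the shift encoded in $\Pow_z(k)$.
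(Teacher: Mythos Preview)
Your proposal is correct and follows essentially the same route as the paper: expand $\mathcal{P}_z(u^n)$ via \cref{def:ourdef} into a product of $\mathcal{M}$-matrices, group the factors into $n$ blocks corresponding to the copies of $u$, and recognise each block as $H_{z,j|u|}(u)$ by \cref{lemma:Hmatrix}. The paper presents exactly this computation in one displayed line; your additional inductive variant is equivalent and equally valid.
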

\begin{proof}
    This is an immediate consequence of the definition of $\mathcal{P}_z(u^n)$ and \cref{lemma:Hmatrix}. Let $u=u_\ell\cdots u_0$. Indeed, by definition we have
    \begin{align*}
        \mathcal{P}_z(u^n) & = (\mathcal{M}_{u_\ell,n|u|-1}\cdots\mathcal{M}_{u_0,(n-1)|u|})\cdots(\mathcal{M}_{u_\ell,2|u|-1}\cdots\mathcal{M}_{u_0,|u|})(\mathcal{M}_{u_\ell,|u|-1}\cdots\mathcal{M}_{u_0,0})\\
        & = H_{z,(n-1)|u|}(u)\cdots H_{z,|u|}(u)\cdot H_{z,0}(u).
    \end{align*}
\end{proof}

\begin{theorem}\label{thm:unz}
  The $q$-binomial $\qbin{u^n}{z}$ can be expressed as
  \[\frac{1}{q^{\mathsf{s}(|z|-1)}}\sum_{k=1}^{m} R_{k}(q)\, \frac{1-q^{c_{k}n|u|}}{1-q^{c_{k}|u|}}\] where $m$ and $c_{k}$ are positive integers and $R_{k}$ are rational functions whose denominators only have factors of the form $(1-q^{t|u|})$ for some integer $t$. Moreover, these quantities $c_k$ and $R_k$ can be effectively computed. In particular, the sequence $(\qbin{u^n}{z})_{n\ge 0}$ converges in $\mathbb{N}[[q]]$ to the formal power series $\mathfrak{s}_{\mathbf{u},z}(q)$ expressed by the rational function \[\frac{1}{q^{\mathsf{s}(|z|-1)}}\sum_{k=1}^{m} R_{k}(q)\, \frac{1}{1-q^{c_{k}|u|}}.\]
\end{theorem}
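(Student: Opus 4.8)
The plan is to reduce everything to the behaviour of powers of one fixed matrix. Write $\ell := |z|$; by \cref{thm:first} we have $q^{\mathsf{s}(\ell-1)}\qbin{u^n}{z} = [\mathcal{P}_z(u^n)]_{1,\ell+1}$, so it suffices to compute this single entry. Set $D := \mathrm{diag}(1, q^{-|u|}, q^{-2|u|}, \ldots, q^{-\ell|u|})$. Straight from the definition of $\Pow_z$ one checks that each Hadamard product is a conjugation, $H_{z,j|u|}(u) = D^{\,j}\,\mathcal{P}_z(u)\,D^{-j}$ for all $j\ge 0$ (both sides have $(i,l)$-entry $[\mathcal{P}_z(u)]_{i,l}\,q^{(l-i)j|u|}$). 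Inserting this into \cref{cor:matH} and collapsing the adjacent factors $D^{-j}D^{j-1}=D^{-1}$, the product telescopes:
\[
  \mathcal{P}_z(u^n) = D^{\,n-1}(\mathcal{P}_z(u)D^{-1})^{n-1}\mathcal{P}_z(u) = D^{\,n-1}Q^{\,n-1}P, \qquad P := \mathcal{P}_z(u),\ \ Q := PD^{-1}.
\]
As $D^{\,n-1}$ is diagonal with top-left entry $q^0=1$, this gives $q^{\mathsf{s}(\ell-1)}\qbin{u^n}{z} = (Q^{\,n-1}P)_{1,\ell+1}$.

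Next I would exploit the structure of $Q$: it is upper triangular with diagonal $(1,q^{|u|},q^{2|u|},\ldots,q^{\ell|u|})$, and these $\ell+1$ scalars are pairwise distinct in $\mathbb{Q}(q)$, so $Q$ is diagonalizable over $\mathbb{Q}(q)$ (equivalently, by Cayley--Hamilton with distinct roots, $m\mapsto(Q^m)_{1,l}$ satisfies the linear recurrence with characteristic polynomial $\prod_{c=0}^{\ell}(x-q^{c|u|})$, which runs in both directions since $Q$ is invertible). Hence there exist $\mu_0,\ldots,\mu_\ell\in\mathbb{Q}(q)$ with $(Q^{\,n-1}P)_{1,\ell+1} = \sum_{c=0}^{\ell}\mu_c\,q^{c|u|n}$, an identity valid for every $n\in\mathbb{Z}$. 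The $\mu_c$ solve a linear system over $\mathbb{Q}(q)$ whose only denominators are differences $q^{a|u|}-q^{b|u|}$ of eigenvalues --- products of factors $1-q^{t|u|}$ up to powers of $q$ --- so they are effectively computable. Evaluating the identity at $n=0$ and using $\mathcal{P}_z(u^0)=\mathcal{P}_z(\varepsilon)=I$ with $\ell\ge 1$ gives the crucial relation
\[
  \sum_{c=0}^{\ell}\mu_c = (Q^{-1}P)_{1,\ell+1} = (D\,P^{-1}P)_{1,\ell+1} = D_{1,\ell+1} = 0.
\]

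It remains to repackage the closed form. Substituting $\mu_0 = -\sum_{c=1}^{\ell}\mu_c$,
\[
  q^{\mathsf{s}(\ell-1)}\qbin{u^n}{z} = \sum_{c=1}^{\ell}\mu_c(q^{c|u|n}-1) = \sum_{c=1}^{\ell}(-\mu_c(1-q^{c|u|}))\,\frac{1-q^{c|u|n}}{1-q^{c|u|}},
\]
which is the asserted formula with $m := \ell = |z|$, $c_k := k$ and $R_k := -\mu_k(1-q^{k|u|})$; the factor $1-q^{k|u|}$ cancels the matching denominator of $\mu_k$, leaving a rational function whose denominator is (up to a monomial $q^s$) a product of factors $1-q^{t|u|}$. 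For the last claim, let $n\to\infty$ in $\mathbb{N}[[q]]$: since $c_k|u|\ge 1$ the monomials $q^{c_k|u|n}$ tend $q$-adically to $0$, so $\frac{1-q^{c_k|u|n}}{1-q^{c_k|u|}}\to\frac{1}{1-q^{c_k|u|}}$ and the expression tends to $\frac{1}{q^{\mathsf{s}(|z|-1)}}\sum_{k=1}^{m}R_k\,\frac{1}{1-q^{c_k|u|}}$; by \cref{pro:defser} this limit equals $\mathfrak{s}_{\mathbf{u},z}(q)$.

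The routine parts are the conjugation identity and the telescoping; the substance is the $q^{c|u|n}$-expansion of $(Q^{\,n-1}P)_{1,\ell+1}$. The one delicate point --- and the natural place to slip --- is the eigenvalue-$1$ (constant-in-$n$) contribution $\mu_0$: one must notice that the expansion extends down to $n=0$, read off $\sum_c\mu_c=0$ from $\mathcal{P}_z(u^0)=I$, and use this to absorb the standalone constant so that a bare linear combination of powers $q^{c|u|n}$ becomes a combination of the geometric blocks $\frac{1-q^{c|u|n}}{1-q^{c|u|}}$ --- all the while checking that no denominators other than $1-q^{t|u|}$ (and units $q^s$) are introduced.
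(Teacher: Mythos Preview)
Your approach is correct and takes a genuinely different, more structural route than the paper. The paper argues bottom-up: from the recurrence $\overrightarrow{p_{n+1}}=H_{z,n|u|}(u)\,\overrightarrow{p_n}$ for the last column of $\mathcal{P}_z(u^n)$ it computes each entry $p_{n,j}$ in closed form by induction on~$j$, accumulating one factor $(1-q^{t|u|})^{-1}$ per step. You instead recognise that the Hadamard product is a conjugation, telescope $\mathcal{P}_z(u^n)$ to a single matrix power, and diagonalise over $\mathbb{Q}(q)$. This buys two things the paper does not state: the explicit values $m=|z|$ and $c_k=k$ (the paper's bookkeeping leaves $m$ as ``some positive integer''), and a one-line reason for the absence of a constant-in-$n$ term, namely $\sum_c\mu_c=[\mathcal{P}_z(u^0)]_{1,\ell+1}=0$. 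The paper's induction, in return, never leaves polynomial arithmetic and needs no diagonalisation.

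One point to tighten: your caveat ``up to a monomial $q^s$'' in the denominator does not match the stated conclusion, and it is avoidable. Rewrite the telescoping as $\mathcal{P}_z(u^n)=D^{\,n}\tilde{Q}^{\,n}$ with $\tilde{Q}:=D^{-1}P$ (so $[\mathcal{P}_z(u^n)]_{1,\ell+1}=[\tilde{Q}^{\,n}]_{1,\ell+1}$, and now $n=0$ gives $\sum_c\mu_c=0$ directly). Since $\tilde{Q}_{d,e}=q^{(d-1)|u|}P_{d,e}$, the back-substitution for the eigenvector of $q^{c|u|}$ reads
\[
v_d=-\frac{\sum_{e>d}P_{d,e}\,v_e}{1-q^{(c-d+1)|u|}},
\]
the monomial $q^{(d-1)|u|}$ cancelling exactly. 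Hence the diagonalising matrix $S$ is unit upper triangular with entries in the localisation of $\mathbb{Z}[q]$ at the factors $(1-q^{t|u|})$; the same then holds for $S^{-1}$, for $\mu_c=S_{1,c+1}(S^{-1})_{c+1,\ell+1}$, and for the $R_k$, with no stray powers of~$q$ in any denominator. With this adjustment your argument is complete.
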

\begin{proof}
For all $n\ge 1$, we let $\overrightarrow{p_n}$ denote the last column of $\mathcal{P}_z(u^n)$, i.e.,
\[\overrightarrow{p_n}=
\begin{pmatrix}
  p_{n,\ell}\\
  p_{n,\ell-1}\\
  \vdots\\
  p_{n,1}\\
  p_{n,0}\\
\end{pmatrix}:=
\begin{pmatrix}
  q^{\mathsf{s}(\ell-1)}\qbin{u^n}{z_1\cdots z_\ell}\\
  q^{\mathsf{s}(\ell-2)}\qbin{u^n}{z_2\cdots z_\ell}\\
  \vdots\\
  \qbin{u^n}{z_\ell}\\
  1\\
\end{pmatrix}.\]
In particular, $\overrightarrow{p_1}$ is the last column of $\mathcal{P}_z(u)$ and, for convenience, we set $\overrightarrow{p_0}$ to be the column vector made of zeroes with only a $1$ in last position. \cref{cor:matH} shows that, for all $n\ge 0$, 
\begin{equation}\label{eq:matH}
  \overrightarrow{p_{n+1}}=H_{z,n|u|}\cdot \overrightarrow{p_n}.
\end{equation}
Since $H_{z,n|u|}$ is upper triangular, we determine the elements of $\overrightarrow{p_n}$ from bottom to top. First,  we have $p_{n,0}=1$ for all $n$. Next, from \cref{eq:matH}
\[p_{n+1,1}=p_{n,1}+\qbin{u}{z_\ell} q^{n|u|}\, p_{n,0}\]
and we deduce from the above recurrence that
\[p_{n+1,1}=\qbin{u}{z_\ell} \sum_{i=0}^{n} q^{i|u|}=\qbin{u}{z_\ell} \frac{1-q^{(n+1)|u|}}{1-q^{|u|}}.\]
Let us produce one extra element, again from \cref{eq:matH}
\begin{eqnarray*}
  p_{n+1,2}&=&p_{n,2}+\qbin{u}{z_{\ell-1}} q^{n|u|}\, p_{n,1}+q\qbin{u}{z_{\ell-1}z_\ell} q^{2n|u|}\, p_{n,0}  \\
           &=& p_{n,2}+\qbin{u}{z_{\ell-1}} \qbin{u}{z_\ell} \frac{1}{1-q^{|u|}} \left( q^{n|u|}-q^{2n|u|}\right)
               +\qbin{u}{z_{\ell-1}z_\ell} q^{2n|u|+1}.
\end{eqnarray*}
We deduce that
\[p_{n+1,2}=\qbin{u}{z_{\ell-1}} \qbin{u}{z_\ell} \left( \frac{1-q^{(n+1)|u|}}{(1-q^{|u|})^2}
  -\frac{1-q^{2(n+1)|u|}}{(1-q^{|u|})(1-q^{2|u|})}\right)+ q\qbin{u}{z_{\ell-1}z_\ell}\frac{1-q^{2(n+1)|u|}}{(1-q^{2|u|})}.\]
To get a better grasp of the statement, let us rewrite the polynomial as 
\[p_{n+1,2}=R_{2,1}(q)\, \frac{1-q^{(n+1)|u|}}{1-q^{|u|}}+R_{2,2}(q)\, \frac{1-q^{2(n+1)|u|}}{1-q^{2|u|}}\]
with the rational functions
\[R_{2,1}(q)=\frac{\qbin{u}{z_{\ell-1}} \qbin{u}{z_\ell} }{1-q^{|u|}},\quad 
R_{2,2}(q)=\frac{-\qbin{u}{z_{\ell-1}} \qbin{u}{z_\ell} + q\qbin{u}{z_{\ell-1}z_\ell}(1-q^{|u|})}{1-q^{|u|}}.\]

Let $1\le j\le\ell$, we prove by induction on $j$ (already shown for $j=1,2$) that 
\[p_{n+1,j}=\sum_{k=1}^{m_j} R_{j,k}(q)\, \frac{1-q^{c_{j,k}(n+1)|u|}}{1-q^{c_{j,k}|u|}}\] where $m_j$ and $c_{j,k}$ are positive integers and $R_{j,k}$ are rational functions whose denominators only have factors of the form $(1-q^{t|u|})$ for some integer $t$.
We have using \cref{eq:matH}
\[p_{n+1,j}=p_{n,j}+\sum_{k=1}^j \qbin{u}{z_{\ell-j+1}\cdots z_{\ell-j+k}} q^{kn|u|+\mathsf{s}(k-1)}p_{n,j-k}\]
by the induction hypothesis, we have already expressed $p_{n,0},\ldots,p_{n,j-1}$ so 
\[p_{n+1,j}=p_{n,j}+\sum_{k=1}^j \qbin{u}{z_{\ell-j+1}\cdots z_{\ell-j+k}} q^{kn|u|+\mathsf{s}(k-1)}\sum_{i=1}^{m_{j-k}} R_{j-k,i}(q)\, \frac{1-q^{c_{j-k,i}n|u|}}{1-q^{c_{j-k,i}|u|}}.\]
Hence $p_{n+1,j}$ is equal to
\begin{equation}
  \label{eq:pnj}
\sum_{k=1}^j q^{\mathsf{s}(k-1)}\qbin{u}{z_{\ell-j+1}\cdots z_{\ell-j+k}} \sum_{i=1}^{m_{j-k}} \frac{R_{j-k,i}(q)}{{1-q^{c_{j-k,i}|u|}}}\,
\left(\frac{1-q^{k(n+1)|u|}}{1-q^{k|u|}}-\frac{1-q^{(c_{j-k,i}+k)(n+1)|u|}}{1-q^{(c_{j-k,i}+k)|u|}}\right)  
\end{equation}
which has the desired form as we now explain.
\begin{eqnarray*}
  p_{n+1,j}&=& \sum_{k=1}^j \overbrace{\sum_{i=1}^{m_{j-k}} q^{\mathsf{s}(k-1)}\qbin{u}{z_{\ell-j+1}\cdots z_{\ell-j+k}}  \frac{R_{j-k,i}(q)}{{1-q^{c_{j-k,i}|u|}}}}\,
                \frac{1-q^{k(n+1)|u|}}{1-q^{k|u|}}\\
               &&+\sum_{k=1}^j  \sum_{i=1}^{m_{j-k}} \underbrace{- q^{\mathsf{s}(k-1)}\qbin{u}{z_{\ell-j+1}\cdots z_{\ell-j+k}} \frac{R_{j-k,i}(q)}{{1-q^{c_{j-k,i}|u|}}}}\,
\frac{1-q^{(c_{j-k,i}+k)(n+1)|u|}}{1-q^{(c_{j-k,i}+k)|u|}}
\end{eqnarray*}
where the braced factors are $j+m_0+\cdots+m_{j-1}$ new rational functions $R_{j,\cdot}$ having, by induction hypothesis, the right form.

To conclude with the proof, $\qbin{u^n}{z}$ is equal up to a multiplicative factor $q^{\mathsf{s}(|z|-1)}$ to the upper right corner of $\mathcal{P}_z(u^n)$ which is equal to $p_{n,\ell}$.
\end{proof}

\begin{example}\label{exa:powunz}
  Let $u=0110$ and $z=01$. We have
  \[H_{z,0}=\mathcal{P}_z(u)=
  \begin{pmatrix}
    1&q^3+1&q^5+q^4\\
    0&1&q^2+q\\
    0&0&1\\
  \end{pmatrix} \text{ and }
  \overrightarrow{p_1}= \begin{pmatrix}
    q^5+q^4\\
    q^2+q\\
    1\\
  \end{pmatrix}.\]
  Hence, with the notation of the previous proof, we can express $\qbin{u^{n+1}}{1}$ as 
  \[p_{n+1,1}=\qbin{z}{1} \frac{1-q^{4(n+1)}}{1-q^{4}} = (q^2+q) \frac{1-q^{4(n+1)}}{1-q^{4}}.\]
  and
  \[p_{n+1,2}=p_{n,2}+\qbin{z}{0}(q^2+q)  q^{4n}\frac{1-q^{4n}}{1-q^4}+q\qbin{z}{01}q^{8n}.\]
  So, we have an expression for $q\qbin{u^{n+1}}{z}$ as 
  \[p_{n+1,2}=(q^3+1)(q^2+q)  \left(\frac{1-q^{4(n+1)}}{(1-q^4)^2}-\frac{1-q^{8(n+1)}}{(1-q^4)(1-q^8)}\right)
  +(q^5+q^4)\frac{1-q^{8(n+1)}}{1-q^8}\]
  which can be rewritten as
  \begin{equation}
    \label{eq:expow}
    p_{n+1,2}=\frac{(q^3+1)(q^2+q)}{1-q^4} \frac{1-q^{4(n+1)}}{1-q^4}
  +\left(\frac{(q^3+1)(q^2+q)}{1-q^4}+(q^5+q^4)\right) \frac{1-q^{8(n+1)}}{1-q^8}
  \end{equation}
  Now consider the corresponding series (obtained by discarding the term $q^{4(n+1)}$ and $q^{8(n+1)}$ appearing on the numerators)
  \[  \frac{(q^3+1)(q^2+q)(q^4-q^8)}{(1-q^4)^2(1-q^8)}
  +\frac{q^5+q^4}{1-q^8}=\frac{q^4}{(q-1)^2
   \left(q^2+1\right)^2
   \left(q^4+1\right)}:=\frac{q^4}{D(q)}.\]
 If we first divide by $q$, the series expansion $\sum_{r\ge 0} c_rq^r$ is of the form
 \[q^3+2 q^4+q^5+q^7+2 q^8+q^9+2
   q^{11}+4 q^{12}+2 q^{13}+2
   q^{15}+4 q^{16}+2 q^{17}+3
   q^{19}+6 q^{20}+3 q^{21}+\cdots\]
   and those coefficients match exactly those of $\qbin{u^{n+1}}{z}$ for large enough $n$. By taking the reciprocal (or reflected polynomial as called in \cite{GKP}) $q^{10}D(1/q)$ of the denominator \[D(q)=q^{10}-2 q^9+3 q^8-4 q^7+4 q^6-4
   q^5+4 q^4-4 q^3+3 q^2-2 q+1,\]
 it is a routine technique to see that $c_n$ satisfies the order-$10$ recurrence relation
 \[c_n=2c_{n-1}-3c_{n-2}+4c_{n-3}-4c_{n-4}+4c_{n-5}-4c_{n-6}+4c_{n-7}-3c_{n-8}+2c_{n-9}-c_{n-10}.\]
 The fact that the coefficient of $q^{2+4n}$ is vanishing will be explained by \cref{rem:vanish}
\end{example}

Since the limit formal power series $\mathfrak{s}_{\mathbf{u},z}$ is a rational function, as we have shown in the above example, it is not surprising that its coefficients satisfy a linear recurrence relation with constant coefficients. We now turn to the polynomial sequence $(\qbin{u^n}{z})_n$ and show that it too satisfies a recurrence relation, but this time with (constant) polynomial coefficients. We make use of classical arguments about linear recurrences. See, for instance \cite{GKP}, for a general reference.

\begin{lemma}\label{lem:rec}
  Let $R_j(q)$ be given rational functions and $c_j$ be pairwise distinct non-negative integers. The sequence $(p_n)_{n\ge 0}$ 
  \[\left( \sum_{j=1}^s R_j(q)\, q^{c_j n}\right)_{n\ge 0}\]
  satisfies the following linear recurrence relation of order $s$ with polynomial coefficients
  \[p_{n+s}=
  \sum_{k=1}^s  \underbrace{(-1)^{k-1}\left(\sum_{1\le i_1<\cdots <i_k\le s} q^{c_{i_1}+\cdots+c_{i_k}}\right)}_{:=\mathfrak{D}_k} p_{n+s-k} \]
\end{lemma}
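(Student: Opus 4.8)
The plan is to use the standard characteristic-polynomial theory of homogeneous linear recurrences, working over the field $\mathbb{C}(q)$ (which contains all the coefficients $R_j(q)$ and all the $q^{c_j}$). Define
\[
\chi(X) = \prod_{j=1}^s \bigl(X - q^{c_j}\bigr) \in \mathbb{C}(q)[X]
\]
and expand it via elementary symmetric functions of $q^{c_1},\ldots,q^{c_s}$:
\[
\chi(X) = \sum_{k=0}^s (-1)^k e_k\, X^{s-k}, \qquad e_k = \sum_{1 \le i_1 < \cdots < i_k \le s} q^{c_{i_1}+\cdots+c_{i_k}},
\]
with $e_0 = 1$. Since $\mathfrak{D}_k = (-1)^{k-1} e_k$ by definition, the recurrence to be proved is exactly $p_{n+s} = \sum_{k=1}^s (-1)^{k-1} e_k\, p_{n+s-k}$, i.e.\ the linear recurrence whose characteristic polynomial is $\chi$. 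Equivalently, $\chi(X)=0$ rewrites as $X^s = \sum_{k=1}^s (-1)^{k-1} e_k\, X^{s-k}$.

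First I would verify the recurrence for the elementary sequences $a^{(j)}_n := q^{c_j n}$ for $1 \le j \le s$. As $q^{c_j}$ is a root of $\chi$, substituting $X = q^{c_j}$ in $X^s = \sum_{k=1}^s (-1)^{k-1} e_k X^{s-k}$ and multiplying through by $q^{c_j n}$ gives
\[
a^{(j)}_{n+s} = q^{c_j(n+s)} = q^{c_j n}\sum_{k=1}^s (-1)^{k-1} e_k\, q^{c_j(s-k)} = \sum_{k=1}^s (-1)^{k-1} e_k\, a^{(j)}_{n+s-k},
\]
so each $a^{(j)}$ satisfies the required relation.

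Then I would invoke linearity: the $\mathbb{C}(q)$-valued sequences satisfying a fixed homogeneous linear recurrence with coefficients in $\mathbb{C}(q)$ form a $\mathbb{C}(q)$-vector space. Since the $R_j(q)$ do not depend on $n$, the sequence $p_n = \sum_{j=1}^s R_j(q)\, a^{(j)}_n$ is a $\mathbb{C}(q)$-linear combination of the $a^{(j)}$, hence satisfies the same recurrence. This completes the proof.

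There is no genuine obstacle here; the only points needing a word of care are purely bookkeeping: (i) making explicit that $R_j(q)$ is a \emph{constant} with respect to the recurrence index $n$, so the vector-space argument applies over the coefficient field $\mathbb{C}(q)$ (any commutative ring containing the $R_j$ and the $q^{c_j}$ would do equally well); and (ii) matching the coefficients $\mathfrak{D}_k$ with $(-1)^{k-1}e_k$ through the expansion of $\prod_j(X-q^{c_j})$. Note that the hypothesis that the $c_j$ are pairwise distinct is not actually used: it only ensures that $s$ is the minimal order and that the $R_j$ are uniquely determined, so it can be ignored in the argument.
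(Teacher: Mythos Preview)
Your proof is correct and follows essentially the same route as the paper: both form the characteristic polynomial $\prod_{j=1}^s(X-q^{c_j})$, expand it via elementary symmetric functions to identify the coefficients $\mathfrak{D}_k$, and conclude by the standard fact that a $\mathbb{C}(q)$-linear combination of the geometric sequences $(q^{c_j n})_n$ satisfies the recurrence attached to that polynomial. Your version is a bit more explicit (you verify the recurrence on each $a^{(j)}$ and then invoke linearity, rather than citing the ``product of characteristic polynomials'' fact), and your remark that the distinctness hypothesis on the $c_j$ is unnecessary for the stated conclusion is a valid observation.
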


\begin{proof}
  The linear recurrent sequences $\left(q^{c_j n}\right)_{n\ge 0}$ and $\left(R_j(q) q^{c_j n}\right)_{n\ge 0}$ have both a characteristic polynomial of the form $X-q^{c_j}$. It is a well-known result that the characteristic polynomial of the sum of linear recurrent sequences (with distinct roots) is the product of the corresponding characteristic polynomials. Hence, the characteristic polynomial of the sequence of interest is \[\prod_{j=1}^s \left( X-q^{c_j n}\right)=X^s+\sum_{k=1}^s \mathfrak{D}_k X^{s-k}.\]
\end{proof}

\begin{corollary}\label{cor:rec}
   Let $R_j(q)$ be given rational functions and $c_j$ be pairwise distinct non-negative integers. The sequence $(p_n)_{n\ge 0}$ 
  \[\left( \sum_{j=1}^s R_j(q)\, \sum_{i=0}^n q^{c_j i}\right)_{n\ge 0}=\left( \sum_{j=1}^s R_j(q)\, \frac{1-q^{c_j (n+1)}}{1-q^{c_j}}\right)_{n\ge 0}\]
  satisfies the following linear recurrence relation of order $s+1$ with polynomial coefficients
 \[p_{n+s+1}=
  \sum_{k=1}^{s+1}(\mathfrak{D}_{k}-\mathfrak{D}_{k-1})\, p_{n+k}\]  
  setting $\mathfrak{D}_0=-1$ and $\mathfrak{D}_{s+1}=0$.
\end{corollary}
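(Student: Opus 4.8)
The plan is to realise $(p_n)_{n\ge 0}$ as the sequence of partial sums of the sequence handled in \cref{lem:rec}, and to use the classical fact that forming partial sums multiplies the characteristic polynomial by $(X-1)$. So first I would set $a_n:=\sum_{j=1}^s R_j(q)\,q^{c_j n}$, so that $p_n=\sum_{i=0}^{n}a_i$ and hence $a_n=p_n-p_{n-1}$ for $n\ge 1$. By \cref{lem:rec}, the sequence $(a_n)_{n\ge 0}$ is annihilated by the polynomial $Q(X):=\prod_{j=1}^{s}(X-q^{c_j})=X^{s}-\sum_{k=1}^{s}\mathfrak{D}_k X^{s-k}$ appearing in (the proof of) that lemma; equivalently, writing $E$ for the shift operator $(Ef)_n=f_{n+1}$, we have $Q(E)\,a=0$.

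Next I would observe the elementary identity $(E-1)\,p=E\,a$, which is nothing but $p_{n+1}-p_n=a_{n+1}$. Applying $Q(E)$ and using that polynomials in $E$ commute, this gives $Q(E)(E-1)\,p=E\bigl(Q(E)\,a\bigr)=0$. Hence $(p_n)_{n\ge 0}$ is annihilated by the polynomial $(X-1)\,Q(X)$, which has degree $s+1$; this already establishes that $(p_n)_n$ satisfies a linear recurrence of order $s+1$ with constant coefficients lying in $\Z[q]$ (indeed, in the subring generated by the monomials $q^{c_j}$, since the $R_j$ disappear from the recurrence coefficients). It then remains only to read off those coefficients.

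Finally I would expand $(X-1)\,Q(X)$ from $Q(X)=X^{s}-\mathfrak{D}_1 X^{s-1}-\cdots-\mathfrak{D}_s$. A short telescoping computation yields
\[(X-1)\,Q(X)=X^{s+1}-\sum_{k=1}^{s+1}\bigl(\mathfrak{D}_k-\mathfrak{D}_{k-1}\bigr)X^{\,s+1-k},\]
which is exactly the announced recurrence of order $s+1$ once one adopts the conventions $\mathfrak{D}_0=-1$ and $\mathfrak{D}_{s+1}=0$. The only delicate points are the two boundary coefficients: the coefficient of $X^{s}$ comes out as $-(\mathfrak{D}_1+1)$ and the constant term as $+\mathfrak{D}_s$, and forcing both into the uniform shape $-(\mathfrak{D}_k-\mathfrak{D}_{k-1})$ is precisely what dictates the values $\mathfrak{D}_0=-1$, $\mathfrak{D}_{s+1}=0$. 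I do not expect any real obstacle: the argument rests on the single observation that passing to partial sums amounts to multiplying the characteristic polynomial by $(X-1)$, and the only care needed is in the index shift between the recurrence for $(a_n)_n$ and the one for $(p_n)_n$ (one must also check that the derived relation holds from $n=0$ on, which follows since the relation $a_m=p_m-p_{m-1}$ used is valid for every $m\ge 1$).
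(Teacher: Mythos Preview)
Your proposal is correct and is essentially the same argument as the paper's, just phrased in the language of shift operators and characteristic polynomials rather than generating functions. The paper simply invokes the classical fact that if $f(t)=\sum_{n\ge 0}u_n t^n$ is rational then $\frac{1}{1-t}f(t)$ encodes the partial sums, so the denominator acquires an extra factor $(1-t)$; your observation that $(E-1)p=Ea$ and hence $(X-1)Q(X)$ annihilates $(p_n)$ is the dual statement on the characteristic-polynomial side. Your version has the merit of spelling out the coefficient computation that yields the shape $\mathfrak{D}_k-\mathfrak{D}_{k-1}$ and of checking the range of validity, both of which the paper leaves implicit.
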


\begin{proof}
  This follows from the classical result that if $f(t)$ is the rational function equal to the series $\sum_{n\ge 0} u_n\, t^n$ where $u_n$ satisfies a linear recurrence relation, then $\frac{1}{1-t}.f(t)$ encodes the series of the partial sums and one can get a linear relation from the denominator of the series.
\end{proof}

Letting $q=1$ we recover Salomaa's result \cite[Thm.~3]{Salomaa2008} as a special case of our results.

\begin{corollary}\label{cor:gen_sal}
  The sequence of $q$-binomials $(\qbin{u^n}{z})_{n\ge 0}$ satisfies a linear recurrence relation with polynomial coefficients. In particular, the sequence of binomials $(\binom{u^n}{z})_{n\ge 0}$ satisfies a linear recurrence relation with constant coefficients. 
\end{corollary}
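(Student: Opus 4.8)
The plan is to read off the closed form for $\qbin{u^n}{z}$ from \cref{thm:unz} and feed it into \cref{cor:rec}. By \cref{thm:unz}, the scaled sequence $n\mapsto q^{\mathsf{s}(|z|-1)}\qbin{u^n}{z}$ equals $\sum_{k=1}^{m}R_k(q)\,\frac{1-q^{c_k n|u|}}{1-q^{c_k|u|}}$, where the $c_k$ are positive integers and the $R_k$ are rational functions whose denominators only involve factors $1-q^{t|u|}$. After grouping the summands that share a common exponent $c_k|u|$ (replacing the corresponding $R_k$'s by their sum), we may assume the integers $c_k|u|$ are pairwise distinct. Since $\frac{1-q^{c_k n|u|}}{1-q^{c_k|u|}}=\sum_{i=0}^{n-1}q^{c_k|u|\,i}$, this is, up to the trivial index shift between this partial sum and the expression $\sum_{i=0}^{n}q^{c_j i}$ appearing in \cref{cor:rec}, precisely a sequence of the type treated there.

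First I would apply \cref{cor:rec} to conclude that $n\mapsto q^{\mathsf{s}(|z|-1)}\qbin{u^n}{z}$ satisfies a linear recurrence (of order at most $m+1$ after grouping) whose coefficients are the quantities $\mathfrak{D}_k-\mathfrak{D}_{k-1}$. By the formula defining $\mathfrak{D}_k$ in \cref{lem:rec}, each $\mathfrak{D}_k$ is a signed sum of monomials $q^{c_{i_1}|u|+\cdots+c_{i_k}|u|}$, hence an honest element of $\Z[q]$; so the recurrence has polynomial coefficients. Because $q^{\mathsf{s}(|z|-1)}$ is a nonzero scalar independent of $n$, dividing the recurrence through by it shows that $n\mapsto\qbin{u^n}{z}$ itself obeys the same linear recurrence with coefficients in $\Z[q]$. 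The harmless index shift and the finitely many initial terms cause no difficulty, since any shift of a sequence satisfying a linear recurrence with constant (polynomial) coefficients satisfies the same recurrence.

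For the classical statement I would specialize at $q=1$. Each $\qbin{u^n}{z}$ lies in $\N[q]$ and evaluates at $q=1$ to $\binom{u^n}{z}\in\N$; the coefficients of the recurrence just obtained lie in $\Z[q]$, hence evaluate at $q=1$ to integer constants. Therefore $(\binom{u^n}{z})_{n\ge 0}$ satisfies a linear recurrence with constant integer coefficients, recovering \cite[Thm.~3]{Salomaa2008}.

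There is essentially no hard step: the substance is already contained in \cref{thm:unz} (which produces the closed form) and in \cref{lem:rec}--\cref{cor:rec} (which convert such a closed form into a recurrence with polynomial coefficients). The only points demanding a line of care are the grouping of terms with equal exponents $c_k|u|$ so that the distinctness hypothesis of \cref{cor:rec} holds, the trivial reindexing of the partial sums, and the observation that the $\mathfrak{D}_k$ are genuine polynomials rather than merely rational functions, which is what makes the recurrence polynomial and lets it specialize cleanly at $q=1$.
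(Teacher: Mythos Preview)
Your proposal is correct and follows exactly the approach the paper intends: apply \cref{thm:unz} to obtain the closed form and then invoke \cref{cor:rec} (via \cref{lem:rec}) to extract a linear recurrence with coefficients in $\Z[q]$, specializing at $q=1$ for the classical statement. The paper in fact provides no further argument beyond the sentence preceding the corollary, so your write-up simply makes explicit the bookkeeping (grouping equal exponents, the index shift, and that the $\mathfrak{D}_k$ are polynomials) that the paper leaves to the reader.
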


\begin{example}
  Let us continue \cref{exa:powunz}. With the notation of \cref{lem:rec} looking at \cref{eq:expow}, we have $c_1=4$ and $c_2=8$, $\mathfrak{D}_1=q^4+q^8$ and $\mathfrak{D}_2=-q^{12}$. Hence, with \cref{cor:rec}, the sequence $\left(\qbin{(0110)^n}{01}\right)_{n\ge 0}$ satisfies the relation
  \[p_{n+3}=(1+q^4+q^8)p_{n+2}-(q^4+q^8+q^{12})p_{n+1}+q^{12}p_n.\]
  Now the integer sequence $\left(\binom{(0110)^n}{01}\right)_{n\ge 0}$ whose first terms are 
  $0, 2, 8, 18, 32, 50, 72, 98, 128, 162, 200$ satisfies the relation
  \[p_{n+3}=3p_{n+2}-3p_{n+1}+p_n.\]
\end{example}

\subsection{Growth of the coefficients of the series in the periodic case}
Let $\mathbf{u}=\cdots uuu$. Consider again the series  $\mathfrak{s}_{\mathbf{u},z}(q)=\sum_{i\ge 0} c_i\, q^i$ which is the limit of the sequence of $q$-binomials $\qbin{u^n}{z}$ considered in \cref{thm:unz}. We have already observed with \cref{cor:reg-growth} that $i\mapsto c_i$ has a polynomial growth. In this section, we obtain some more information about the growth order and the indices for which $c_i$ vanishes.

\begin{lemma}
  Let  $\mathfrak{s}_{\mathbf{u},z}(q)=\sum_{i\ge 0} c_i\, q^i$. The function $i\mapsto c_i$ is in $\mathcal{O}(i^{|z|-1})$.
\end{lemma}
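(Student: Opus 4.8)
The plan is to leverage \cref{thm:unz}, which gives $\mathfrak{s}_{\mathbf{u},z}(q)$ as a finite sum $\frac{1}{q^{\mathsf{s}(|z|-1)}}\sum_{k=1}^m R_k(q)\frac{1}{1-q^{c_k|u|}}$ where each $R_k$ is a rational function whose denominator is a product of factors $(1-q^{t|u|})$. First I would note that, after clearing denominators, $\mathfrak{s}_{\mathbf{u},z}(q)$ is a rational function $P(q)/Q(q)$ whose denominator $Q(q)$ is a product of cyclotomic-type factors $(1-q^{t|u|})$; in particular every pole lies on the unit circle and is a root of unity. The standard partial-fraction analysis of such rational functions (see \cite{GKP}) then shows that $c_i$ is, for $i$ large, a quasi-polynomial in $i$: a finite sum $\sum_\rho \rho^{-i} P_\rho(i)$ over the poles $\rho$, with $P_\rho$ a polynomial of degree one less than the order of the pole $\rho$. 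Hence the growth order of $c_i$ is governed by the maximal pole multiplicity occurring in $Q(q)$, so it suffices to bound that multiplicity by $|z|$.

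The key step is therefore to track how the pole at $q=1$ (and more generally at the $|u|$-th roots of unity) accumulates through the recursive construction in the proof of \cref{thm:unz}. I would revisit that induction: the quantity $p_{n,j}$ (with $z$ of length $\ell=|z|$, so $p_{n,\ell}=q^{\mathsf{s}(\ell-1)}\qbin{u^n}{z}$) is built from the $p_{n,j'}$ with $j'<j$ by one application of \cref{eq:matH}, which introduces one extra summation $\sum_{i} q^{\cdot|u|}$, i.e. multiplication by one more factor $1/(1-q^{c|u|})$. Starting from $p_{n,0}=1$ (no pole) and $p_{n,1}$ (a single factor $1/(1-q^{|u|})$), an easy induction shows $p_{n,j}$, as a rational function, has denominator that is a product of factors $(1-q^{t|u|})$ with \emph{total multiplicity at most $j$} at any fixed root of unity. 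Taking $j=\ell=|z|$ gives that $\mathfrak{s}_{\mathbf{u},z}$ has pole order at most $|z|$ everywhere (the extra $q^{-\mathsf{s}(|z|-1)}$ only affects the behaviour at $q=0$, not on the unit circle), hence $P_\rho$ has degree at most $|z|-1$ and $c_i \in \mathcal{O}(i^{|z|-1})$.

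Concretely I would organize the write-up as: (i) recall from \cref{thm:unz} that $\mathfrak{s}_{\mathbf{u},z}(q)=\frac{1}{q^{\mathsf{s}(|z|-1)}}\sum_k R_k(q)\frac{1}{1-q^{c_k|u|}}$; (ii) prove by induction on $j$ that the rational function $p_{n,j}$ (equivalently its limit $\lim_n p_{n,j}$) has, at any root of unity, a pole of order at most $j$ — the induction step being exactly the observation that passing from the $p_{n,j'}$, $j'<j$, to $p_{n,j}$ via \cref{eq:matH} introduces the single extra factor $\frac{1-q^{c(n+1)|u|}}{1-q^{c|u|}}$, and that $\frac{1-q^{k|u|}}{1-q^{c|u|}}$ is a Laurent polynomial times at most one genuine $(1-q^{\cdot|u|})^{-1}$ factor, so orders add at most by one; (iii) conclude via the partial-fraction/linear-recurrence description of coefficients of rational functions with poles only at roots of unity that $c_i=\sum_{\rho}\rho^{-i}P_\rho(i)$ with $\deg P_\rho\le |z|-1$, whence $c_i=\mathcal{O}(i^{|z|-1})$.

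The main obstacle is step (ii): one must be careful that the ``extra factor'' argument genuinely does not raise the pole order by more than one. The point to get right is that after combining $\frac{1-q^{k|u|}}{1-q^{k|u|}}$ versus $\frac{1-q^{(c+k)|u|}}{1-q^{(c+k)|u|}}$ as they appear in \cref{eq:pnj}, each term contributes a denominator which is the product of $(1-q^{c_{j-k,i}|u|})$ coming from $R_{j-k,i}$ (at most $j-k$ root-of-unity poles by induction) times \emph{one} new factor of the form $(1-q^{k|u|})$ or $(1-q^{(c_{j-k,i}+k)|u|})$; since $k\ge 1$, the total is at most $(j-k)+1\le j$. I would also remark (perhaps deferring the precise vanishing statement to the next results, e.g. \cref{rem:vanish}) that this bound is sharp when $z$ has no repeated adjacent letters, matching the $\Theta(n^{|z|-1})$ claim announced in the introduction.
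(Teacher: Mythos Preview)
Your proposal is correct and follows essentially the same approach as the paper: both arguments revisit the inductive construction of $p_{n,j}$ from the proof of \cref{thm:unz}, observe that each inductive step contributes exactly one new factor of the form $(1-q^{t|u|})^{-1}$ to the denominator so that at level $j=|z|$ the total multiplicity is at most~$|z|$, and then conclude via partial-fraction decomposition that the $n$th coefficient of $1/(1-q)^{t}$ is $\mathcal{O}(n^{t-1})$. Your write-up is more detailed (explicitly phrasing things in terms of pole orders at roots of unity and the resulting quasi-polynomial form of $c_i$), but the underlying argument is the same.
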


\begin{proof}
  Proceeding as in the proof of \cref{thm:unz}, we can express
  \[\qbin{u^n}{z}=\sum_{j=1}^r \frac{P_j(q)}{\prod_{i=1}^{|z|} (1-q^{j_i|u|})^{\alpha_{j_i}}}\]
  where $P_j$'s are polynomials and for all $i$, $\sum_{i=1}^{|z|} \alpha_{j_i}\le |z|$. Indeed, in the expression \cref{eq:pnj} to get $p_{n,j}$ from $p_{n,j-1}$ in the induction step, a multiplication by $\frac{1}{1-q^{(c_{j-k,i}+k)|u|}}$ occur. So, at each stage, we create a factor  $1/(1-q^{j_i|u|})$, and some of these factors may be equal and collected together thanks to the exponent $\alpha_{j,i}$.
  One concludes by considering the partial fraction decomposition and recalling that the $n$th coefficient in the series expansion of $\frac{1}{(1-q)^t}$ is in $\mathcal{O}(n^{t-1})$.
\end{proof}

\begin{theorem}
  Let  $\mathfrak{s}_{\mathbf{u},z}(q)=\sum_{i\ge 0} c_i\, q^i$. Let $r\in\{0,\ldots,|u|-1\}$. Either $c_{r+i|u|}$ is zero for all large enough $i$, or the growth order of the function $r+i|u| \mapsto c_{r+i|u|}$ is in $\Theta(i^{|z|-1})$. Moreover, the two kinds of behavior are completely determined by the words $u$ and $z$.
\end{theorem}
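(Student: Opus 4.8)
The plan is to bypass partial fractions and pin down every coefficient $c_n$ exactly, which makes the dichotomy transparent. Write $k=|z|$ and $N=|u|$, and $\mathbf{u}=\cdots x_2x_1x_0$. Combining \cref{thm:powers} with \cref{pro:defser}, for every $n\ge 0$ the coefficient $c_n=[q^n]\mathfrak{s}_{\mathbf{u},z}$ equals the number of $k$-tuples $q_1>q_2>\cdots>q_k\ge 0$ with $x_{q_i}=z_i$ for all $i$ and $q_1+\cdots+q_k=n+\binom{k}{2}$: such an occurrence has $q_1\le n+k-1$, hence lies in the length-$(n+k)$ suffix of $\mathbf{u}$, so this count is already the stable value. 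If some letter of $z$ is absent from $u$ then $\mathfrak{s}_{\mathbf{u},z}=0$ and there is nothing to prove, so assume otherwise.

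Fix $r\in\{0,\ldots,N-1\}$ and decompose occurrences by residues. For a letter $c$ put $T_c=\{p\bmod N:x_p=c\}$, a nonempty subset of $\Z/N\Z$. Writing $q_i=Na_i+\rho_i$ with $a_i\ge 0$ and $\rho_i\in T_{z_i}$, the equation $q_1+\cdots+q_k=(r+iN)+\binom{k}{2}$ forces $\rho_1+\cdots+\rho_k\equiv r+\binom{k}{2}\pmod N$, and then $a_1+\cdots+a_k=i+\gamma_\rho$ for an explicit integer $\gamma_\rho$; moreover, since $|\rho_j-\rho_{j+1}|<N$, the chain $q_1>\cdots>q_k$ amounts to $a_j\ge a_{j+1}+\varepsilon_j$ $(j<k)$, where $\varepsilon_j=1$ if $\rho_j\le\rho_{j+1}$ and $\varepsilon_j=0$ otherwise. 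Set $\mathcal{R}_r=\{(\rho_1,\ldots,\rho_k)\in T_{z_1}\times\cdots\times T_{z_k}:\sum_i\rho_i\equiv r+\binom{k}{2}\ (\mathrm{mod}\ N)\}$. For $(\rho_1,\ldots,\rho_k)\in\mathcal{R}_r$ the map $b_k=a_k$, $b_j=a_j-a_{j+1}-\varepsilon_j$ is a bijection onto $\{(b_1,\ldots,b_k)\in\Z_{\ge0}^k:\sum_{l=1}^k l\,b_l=i+\gamma_\rho-E_\rho\}$ for an explicit $E_\rho\ge 0$, so
\[
c_{r+iN}=\sum_{(\rho_1,\ldots,\rho_k)\in\mathcal{R}_r}\#\Bigl\{(b_1,\ldots,b_k)\in\Z_{\ge0}^k:\ \textstyle\sum_{l=1}^k l\,b_l=i+\gamma_\rho-E_\rho\Bigr\}.
\]
Each summand counts partitions of $i+\gamma_\rho-E_\rho$ into parts of size at most $k$; this is the classical quasi-polynomial with generating function $\prod_{l=1}^{k}(1-x^l)^{-1}$, of degree $k-1$, whose leading coefficient is $\frac{1}{(k-1)!\,k!}$ — the same for every residue class and every $\rho$.

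Hence $c_{r+iN}=\frac{|\mathcal{R}_r|}{(k-1)!\,k!}\,i^{k-1}+\mathcal{O}(i^{k-2})$. If $\mathcal{R}_r=\varnothing$, no occurrence of $z$ in $\mathbf{u}$ has weight $\equiv r\pmod N$, so $c_{r+iN}=0$ for \emph{every} $i\ge 0$. If $\mathcal{R}_r\ne\varnothing$, all contributions are non-negative with the same positive leading term, so no cancellation occurs and $c_{r+iN}\sim\frac{|\mathcal{R}_r|}{(k-1)!\,k!}\,i^{k-1}\in\Theta(i^{|z|-1})$. Which alternative occurs is settled by testing whether $r+\binom{|z|}{2}$ lies in the sumset $T_{z_1}+\cdots+T_{z_k}$ of $\Z/N\Z$, a finite computation depending only on $u$ and $z$; this yields the last sentence of the statement. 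The only delicate point is that the inner partition counts are bona fide quasi-polynomials of degree exactly $k-1$ with $\rho$-independent leading coefficient, so that summing over $\mathcal{R}_r$ cannot drop the order; everything else is routine bookkeeping with the decomposition $q_i=Na_i+\rho_i$.
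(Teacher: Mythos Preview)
Your argument is correct, and it rests on the same underlying decomposition as the paper---writing each position in an occurrence as (copy index, residue within the period) and invoking the asymptotic $p_k(m)\sim m^{k-1}/(k!(k-1)!)$ for partitions into parts at most $k$ (the paper phrases this via the Gaussian binomial $\qbin{m+\ell}{\ell}$ and cites Stanley). The executions differ, however. The paper first takes the upper bound $\mathcal{O}(i^{|z|-1})$ from the partial-fraction lemma, and then, for the lower bound, restricts to occurrences whose letters lie in pairwise distinct copies of $u$; a separate claim shows that such occurrences exist whenever the coefficient is nonzero, after which a single fixed residue tuple already gives $\Omega(i^{|z|-1})$. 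You instead handle all occurrences at once: the substitution $b_k=a_k$, $b_j=a_j-a_{j+1}-\varepsilon_j$ turns the full count into a finite sum, over $\mathcal{R}_r$, of shifted partition numbers, from which the exact leading term $|\mathcal{R}_r|\,i^{k-1}/((k-1)!\,k!)$ drops out with no separate upper bound and no appeal to partial fractions. Your route is a bit more elementary and strictly sharper (it pins down the leading constant and shows the zero alternative holds for \emph{all} $i$, not just large $i$); the paper's distinct-copy claim, on the other hand, is a pleasant structural observation that your argument does not need.
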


\begin{proof}[Proof sketch]
The previous lemma already gives the upper bound, so it suffices to give the lower bound of correct order.
Let $z=z_{\ell}\cdots z_1$ with $\ell=|z|$. For $j\in\{1,\ldots,\ell\}$, we let
\begin{equation}
  \label{eq:ups-bis}
  u=p_j z_j s_j \quad \text{and}\quad |s_j|=t_j.
\end{equation}
(Here we assume that each letter of $z$ appears in $u$, as otherwise $\qbin{u^k}{z} = 0$ for all $k \geq 0$.)
These $\ell$ factorizations of $u$ are not necessarily unique. Let us consider one such $\ell$-tuple $(t_\ell,\ldots,t_1)$ of non-negative integers. We will discuss later on the possible choices: each such tuple will provide a periodic sequence of indices of period $|u|$ for which the corresponding coefficients growth polynomially.

\begin{claim}
For all large enough $n$, if $[q^n]\qbin{u^{n+\ell}}{z} \neq 0$, then there exists an occurrence of $z$ in $u^{n+\ell}$
such that the letters of $z$ appear in distinct copies of $u$.
\end{claim}
\begin{claimproof}
Let $t$ be the maximal over all possible $t_j$s as in \eqref{eq:ups-bis}, and assume that $n \geq \ell t + (|u|-1)\frac{\ell(\ell-1)}{2}$.
Let us write $u^{n+\ell} = u_{n+\ell} \cdots u_1$. Since the coefficient is assumed positive, there is an occurrence of $z$ in $u^{n+\ell}$
contributing the monomial $q^n$. Let $m_i$ indicate the copy $u_{m_i}$ in which the letter $z_i$ of this particular occurrence of $z$ appears in,
and let $(t_\ell,\ldots,t_1)$ be the corresponding set of factorizations as in \eqref{eq:ups-bis}.
Then we have that
\begin{equation}\label{eq:exponent-form}
n = \sum_{i=1}^\ell t_i + (|u|-1)\frac{\ell(\ell-1)}{2}  + |u|\cdot \sum_{i=1}^{\ell}(m_i-i).
\end{equation}
Indeed, for each $i=1$, \ldots, $\ell$, the number of letters to the right of $z_i$ no appearing in this occurrence of $z$
equals $t_i + (m_i-1)|u|-(i-1)$. Summing over all $i$ and rearranging yields the claimed form of $n$.
Since $n \geq \ell t + (|u|-1)\frac{\ell(\ell-1)}{2}$, we deduce that $\sum_{i=1}^{\ell}(m_i-i) \geq 0$. Therefore, $r:=\sum_{i=1}^\ell m_i \geq \ell(\ell+1)/2$. There thus exist integers $r_\ell > r_{\ell-1} > \cdots > r_1 \geq 1$ such that $r = \sum_{i=1}^{\ell} r_i$.
If we consider the occurrence of $z$ in $u^{n+\ell}$ such that the letter $z_i$ appears in the copy $u_{r_i}$,
we get a contribution of $q^n$ as in \eqref{eq:exponent-form}. This suffices for the claim.
\end{claimproof}

Assume that $[q^n]\qbin{u^{n+\ell}}{z}\neq 0$ with $n$ so large that the above claim holds.
Then there exist tuples $(t_{\ell},\ldots,t_1)$ and $(m_{\ell},\ldots,m_1)$, with the $m_i$ distinct, such that $n$ is as in \eqref{eq:exponent-form}.
Let $m = \sum_{i=1}(m_i-i)$ and let $\alpha_m$ denote the coefficient $[q^m]\qbin{m+\ell}{\ell}$.
Now $\alpha_m$ is the number of index-tuples $(m_{\ell},\ldots,m_{1})$ such that
$\sum_{i=1}^{\ell}m_i-i = m$.
For $\ell$ fixed, we have $\alpha_m \sim \frac{m^{\ell-1}}{\ell!(\ell-1)!} + \mathcal{O}(m^{\ell-2}) = \Omega(n^{\ell-1})$ by \cite[Thm.~2.4]{Stanley2015SomeAR} (since the $t_i$ are bounded and $|u|$ is constant). By \eqref{eq:exponent-form}, the $\alpha_m$ index tuples
$(m_{\ell},\ldots,m_1)$ give distinct occurrences of $z$ in $\qbin{u^{n+\ell}}{z}$, each of which contribute $q^n$ to the $q$-binomial coefficient. We have thus showed that $[q^{n}]\qbin{u^{n+\ell}}{z} \geq \alpha_m = \Omega(n^{k-1})$.

Let us now discuss which term $q^n$ of the series appear with a non-zero coefficient. Consider an arbitrary occurrence of $z$ as a subword of $u^n$ providing a non-zero coefficient for some $q^n$ with associated factorizations as in \eqref{eq:ups-bis}. By reducing \eqref{eq:exponent-form} modulo $|u|$,
we conclude that
\[n\equiv \sum_{j=1}^\ell t_j - \frac{\ell(\ell-1)}{2} \pmod{|u|}\]
for some admissible $\ell$-tuple $(t_1,\ldots,t_\ell)$. If this is not the case, then the corresponding coefficient in the series is vanishing.

As a conclusion, we have thus shown that the coefficient of every large enough power $q^n$ is non-zero if and only if there exists a $\ell$-tuple associated with some factorization of the form \cref{eq:ups-bis} such that $n$ is congruent to $\sum_{j=1}^\ell t_j - \ell(\ell-1)/2$ modulo $|u|$.
\end{proof}

\begin{example}\label{rem:vanish}
In \cref{exa:powunz}, with $z=01$ and $u=0110$, we have two choices for $t_1\in\{1,2\}$ (the positions of $z_1=1$ in $u$) and also two choices for $t_2\in\{0,3\}$ (the positions of $z_2=0$ in $u$). So the pairs $(t_1,t_2)$ are $(1,0)$, $(1,3)$, $(2,0)$ and $(2,3)$. Modulo~$4$, the sum $t_1+t_2-1$  may take the values $0,1,3$ and we see that the coefficient $\langle\mathfrak{s}_{(0110)^\omega,01},q^{2+4n}\rangle=0$ for all $n$.
\end{example}

\section{Extra properties of $q$-Parikh matrices}
In this section we consider some other properties of $q$-Parikh matrices, as well as some inequalities that follow.
\subsection{Properties of minors and other relations}

A {\em minor} of a matrix is the determinant of a square submatrix obtained by removing one (or more) of its rows and columns. The value of each minor of an arbitrary Parikh matrix is a non-negative integer \cite[Thm.~6]{tMaYu02a} and it still holds true for Parikh matrix induced by a word \cite[Cor.~21]{Serbanuta2004}. This is an easy application in linear algebra. We have a similar result in the $q$-deformed case and we provide the proof for the sake of completeness.
\begin{proposition}
  Any minor of $\mathcal{P}_z(u)$ is a polynomial with non-negative integer coefficients.
\end{proposition}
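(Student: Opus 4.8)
The plan is to combine the Cauchy--Binet formula with the rigid shape of the elementary factors $\mathcal{M}_{d,j}$. For index sets $I,K$ of the same size, write $A[I,K]$ for the submatrix of $A$ with rows in $I$ and columns in $K$. I would argue by induction on $|u|$. The base case $u=\varepsilon$ is trivial, since $\mathcal{P}_z(\varepsilon)$ is the identity and all of its minors are $0$ or $1$. For the inductive step write $u=dw$, so that $\mathcal{P}_z(u)=\mathcal{M}_{d,|w|}\,\mathcal{P}_z(w)$; the Cauchy--Binet formula gives
\[
\det\bigl(\mathcal{P}_z(u)[I,J]\bigr)=\sum_{\substack{K\subseteq\{1,\ldots,\ell+1\}\\ |K|=|I|}}\det\bigl(\mathcal{M}_{d,|w|}[I,K]\bigr)\cdot\det\bigl(\mathcal{P}_z(w)[K,J]\bigr).
\]
By the induction hypothesis each $\det(\mathcal{P}_z(w)[K,J])$ lies in $\N[q]$, and $\N[q]$ is closed under sums and products, so the whole matter reduces to one claim: every minor of every $\mathcal{M}_{d,j}$ lies in $\N[q]$. (Equivalently, one can iterate Cauchy--Binet along the entire product defining $\mathcal{P}_z(u)$ and express any of its minors as a sum of products of minors of the $\mathcal{M}_{d,j}$.)

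To prove the claim I would show the stronger statement that each minor $\det(\mathcal{M}_{d,j}[I,K])$, with $I=\{i_1<\cdots<i_p\}$ and $K=\{k_1<\cdots<k_p\}$, is either $0$ or a power of $q$. Recall that $(\mathcal{M}_{d,j})_{a,b}=0$ unless $a\le b\le a+1$, and $(\mathcal{M}_{d,j})_{a,a+1}\in\{0,q^j\}$. Hence, in the Leibniz expansion $\sum_{\pi\in S_p}\operatorname{sgn}(\pi)\prod_{s=1}^p(\mathcal{M}_{d,j})_{i_s,k_{\pi(s)}}$, a term can be non-zero only if $k_{\pi(s)}\in\{i_s,i_s+1\}$ for all $s$; but then for $s<s'$ we get $k_{\pi(s)}\le i_s+1\le i_{s'}\le k_{\pi(s')}$, which forces $\pi(s)<\pi(s')$ because $k_1<\cdots<k_p$ and $\pi$ is injective. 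So only $\pi=\operatorname{id}$ can survive, with sign $+1$, and the remaining value $\prod_{s=1}^p(\mathcal{M}_{d,j})_{i_s,k_s}$ is a product of $1$'s and $q^j$'s, as wanted. This proves the claim, hence the proposition.

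I expect the main, and rather mild, obstacle to be the bookkeeping: writing the (iterated) Cauchy--Binet expansion cleanly so that the factors really are minors of the individual $\mathcal{M}_{d,j}$, and the short monotonicity/sign argument that isolates the identity permutation. One could instead invoke the classical fact that a bidiagonal matrix with non-negative entries is totally non-negative and that a product of totally non-negative matrices is totally non-negative. That route, however, only shows $\det(\mathcal{P}_z(u)[I,J])$ is non-negative when evaluated at any $q_0>0$, which does not imply non-negative coefficients (for instance $q^2-q+1$ is positive on $(0,\infty)$). The monomial argument above is what actually yields membership in $\N[q]$, so I would keep the proof self-contained in this form.
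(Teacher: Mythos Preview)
Your proof is correct and follows the same inductive skeleton as the paper's: write $\mathcal{P}_z(dw)=\mathcal{M}_{d,|w|}\,\mathcal{P}_z(w)$ and reduce minors of the product to $\N[q]$-combinations of minors of $\mathcal{P}_z(w)$. The only difference is the tool used in the reduction step. The paper observes that left-multiplication by $\mathcal{M}_{d,|w|}$ replaces each row $i$ with $z_i=d$ by (row $i$) $+\,q^{|w|}\cdot$(row $i{+}1$), and then invokes multilinearity of the determinant directly; you instead apply Cauchy--Binet and analyze the minors of $\mathcal{M}_{d,j}$, showing via your permutation argument that each one is $0$ or a monomial. Both routes are short and equivalent in spirit; your version has the small bonus of isolating the reusable lemma that every bidiagonal matrix with entries in $\N[q]$ has all its minors in $\N[q]$, whereas the paper's multilinearity argument leaves this implicit.
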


\begin{proof}
  Proceed by induction on the length of $u$. The result trivially holds if $|u|=0,1$. Now assume that the result holds for words of length at most $n$ and consider the word $dw$ of length $n+1$ where $d$ is a letter and $|w|=n$. We have that
  \[\mathcal{P}_z(dw)=\mathcal{M}_{d,|w|}\mathcal{P}_z(w).\]
  Assume that $d$ occurs in $z$ in positions $i_1,\ldots,i_s$. This means that the row of index $i_j$ in $\mathcal{P}_z(dw)$, for $j\in\{1,\ldots,s\}$ is equal to the sum of the $i_j$th row of $\mathcal{P}_z(w)$ and $q^{|w|}$ times the $(i_j+1)$st row of $\mathcal{P}_z(w)$. By linearity of the determinant, any minor of $\mathcal{P}_z(dw)$ can thus be expressed as a linear combination of minors of $\mathcal{P}_z(w)$ with coefficients $1$ or $q^{|w|}$. By induction hypothesis, minors of $\mathcal{P}_z(w)$ are polynomials with non-negative integer coefficients.
\end{proof}

The following corollary is an immediate consequence.

\begin{corollary}
  Let $M=(\mathcal{P}_z(u))^{-1}$ and $1\le i\le j\le |z|+1$, $(-1)^{i+j}M_{i,j}$ is a polynomial with non-negative integer coefficients.
\end{corollary}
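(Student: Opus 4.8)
The plan is to deduce the statement directly from the preceding proposition via the classical adjugate formula for the inverse. First I would recall, using \cref{thm:first}, that $\mathcal{P}_z(u)$ is upper triangular with $1$'s on the diagonal, so that $\det \mathcal{P}_z(u) = 1$. Consequently the inverse coincides with the adjugate: $M = (\mathcal{P}_z(u))^{-1} = \operatorname{adj}(\mathcal{P}_z(u))$, where $\operatorname{adj}(A)$ is the transpose of the cofactor matrix.

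Next I would spell out the entrywise meaning of this identity. Writing $C_{i,j}$ for the $(i,j)$-cofactor of $\mathcal{P}_z(u)$, i.e.\ $(-1)^{i+j}$ times the minor obtained by deleting the $i$-th row and $j$-th column, the adjugate formula gives
\[
M_{i,j} = C_{j,i} = (-1)^{i+j}\,\mu_{j,i},
\]
where $\mu_{j,i}$ denotes that particular minor of $\mathcal{P}_z(u)$ (the determinant of the submatrix obtained by removing row $j$ and column $i$). Hence $(-1)^{i+j}M_{i,j} = \mu_{j,i}$.

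Finally I would invoke the previous proposition: every minor of $\mathcal{P}_z(u)$ is a polynomial with non-negative integer coefficients, so in particular $\mu_{j,i}\in\N[q]$, which is exactly the claim. The only point requiring a moment of care is the sign bookkeeping — keeping straight that the adjugate is the \emph{transpose} of the cofactor matrix and that $(-1)^{i+j}=(-1)^{j+i}$ — but this is entirely routine, and there is no real obstacle; the substance of the argument was already carried out in the proof of the preceding proposition.
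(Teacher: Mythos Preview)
Your proposal is correct and follows exactly the approach the paper intends: the corollary is stated as ``an immediate consequence'' of the proposition on minors, and the adjugate formula (using that $\det\mathcal{P}_z(u)=1$ from the triangular shape in \cref{thm:first}) is precisely the link that makes it immediate. Your sign bookkeeping is fine.
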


Let us focus on $2\times 2$ minors occurring above the main diagonal. They are of the form
\begin{align}
  \begin{split}
\label{eq:minor2}
  \begin{vmatrix}
    q^{\mathsf{s}(|vw|-1)}\qbin{u}{vw} & q^{\mathsf{s}(|vwx|-1)}\qbin{u}{vwx}\\
    q^{\mathsf{s}(|w|-1)}\qbin{u}{w}   & q^{\mathsf{s}(|wx|-1)}\qbin{u}{wx}
\end{vmatrix} =&
q^{\mathsf{s}(|vw|-1)+\mathsf{s}(|wx|-1)}  \qbin{u}{vw}\qbin{u}{wx}\\
& -q^{\mathsf{s}(|w|-1)+\mathsf{s}(|vwx|-1)}\qbin{u}{w}\qbin{u}{vwx}.
\end{split}
\end{align}
for some factors $v,w,x$ of $z$ such that $z=pvwxs$, $p,s\in A^*$.
\begin{remark}
  We observe that in the context of $q$-deformed rational numbers, it is shown in \cite[Thm.~2]{MRO} that if $r(q)/s(q)$ and $r'(q)/s'(q)$ are two $q$-rationals, then $rs'-r's$ is a polynomial in $q$ with positive integer coefficients.
\end{remark}
The fact that \cref{eq:minor2} belongs to $\mathbb{N}[q]$ is the $q$-analogue of what Salomaa calls the {\em Cauchy inequality} \cite{SalomaaCauchy2003}
\[\binom{u}{vw}\binom{u}{wx}\ge\binom{u}{w}\binom{u}{vwx}.\]
It would be interesting to see if such a polynomial \cref{eq:minor2} has a combinatorial interpretation, or corresponds to (the product of) other $q$-deformed binomial coefficients. The following example shows that this is at least not the product of some Gaussian binomial coefficients

\begin{example}
Take $u=ababba$ and $z=bba$, $v=w=b$ and $x=a$. One can verify that the minor associated to those factors is $q^{13}+q^{12}+q^{10}$. Since the power $q^{11}$ is missing, this cannot be a product of Gaussian coefficients (if it were the case, we would have consecutive powers of $q$ in the polynomial because of the unimodality of Gaussian binomials).
\end{example}

In \cite{SalomaaCauchy2003}, a ``dual'' of the Cauchy inequality is considered. In our setting, we get the following.
\begin{proposition}
  For all words $x,y,z,w$, the polynomial
  \[\qbin{xy}{w}\qbin{yz}{w}-\qbin{xyz}{w}\qbin{y}{w}\]
  has non-negative integer coefficients.
\end{proposition}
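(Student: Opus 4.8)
The plan is to prove the proposition coefficient by coefficient, by constructing an explicit, exponent-preserving injection from the set of pairs counted by $\qbin{xyz}{w}\qbin{y}{w}$ into the set of pairs counted by $\qbin{xy}{w}\qbin{yz}{w}$; this is a $q$-refinement of the classical combinatorial proof of the dual Cauchy inequality, with \cref{thm:powers} supplying the weights. Write $\alpha=|x|$, $\beta=|y|$, $\gamma=|z|$, $N=\alpha+\beta+\gamma$, $k=|w|$, and embed $xy$, $yz$ and $y$ into $xyz$ along the position sets $[1,\alpha+\beta]$, $[\alpha+1,N]$ and $[\alpha+1,\alpha+\beta]$ respectively. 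Then an occurrence of $w$ in any of the four words becomes a $k$-subset of $[1,N]$ that spells $w$ in $xyz$, the four cases differing only by an inclusion constraint on the subset. Call a \emph{first-kind pair} a pair $(P,Q)$ with $P$ an occurrence of $w$ in $xyz$ and $Q$ an occurrence of $w$ in $y$, and a \emph{second-kind pair} one of the form $(P',Q')$ with $P'$ in $xy$ and $Q'$ in $yz$; for a $k$-subset $R$ of $[1,N]$ let $\sigma(R)$ denote the sum of its elements.

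First I would pin down the weights. By \cref{thm:powers}, an occurrence of $w$ lying at positions $r_1<\dots<r_k$ \emph{of a word $v$} contributes $q^{\,k|v|-\binom{k}{2}-(r_1+\dots+r_k)}$ to $\qbin{v}{w}$. Carrying this through the three embeddings, a short computation gives that a first-kind pair $(P,Q)$ contributes to $\qbin{xyz}{w}\qbin{y}{w}$ the monomial $q^{E}$ with
\[E=k(2\alpha+2\beta+\gamma)-2\binom{k}{2}-\bigl(\sigma(P)+\sigma(Q)\bigr),\]
and that a second-kind pair $(P',Q')$ contributes $q^{E'}$ to $\qbin{xy}{w}\qbin{yz}{w}$ with $E'$ given by the very same formula and constant, $\sigma(P)+\sigma(Q)$ replaced by $\sigma(P')+\sigma(Q')$. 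Hence it suffices to build an injection $\Phi$ from first-kind pairs into second-kind pairs preserving the quantity $\sigma(P)+\sigma(Q)$.

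I would then define $\Phi$ by an ``uncrossing'' swap. Given a first-kind pair $(P,Q)$ with $P=\{p_1<\dots<p_k\}$ and $Q=\{q_1<\dots<q_k\}\subseteq[\alpha+1,\alpha+\beta]$, put $p_0=q_0=0$, $p_{k+1}=q_{k+1}=\infty$, and call $t\in\{0,\dots,k\}$ \emph{admissible} when $p_t\le\alpha+\beta$, $p_{t+1}\ge\alpha+1$, $p_t<q_{t+1}$ and $q_t<p_{t+1}$. An admissible $t$ always exists, by a discrete intermediate-value argument: with $a=\#\{i:p_i\le\alpha\}$ and $b=\#\{i:p_i\le\alpha+\beta\}$ (so $a\le b$), every $t\in\{a,\dots,b\}$ satisfies the first two conditions, $t=a$ satisfies $p_a<q_{a+1}$, $t=b$ satisfies $q_b<p_{b+1}$, and a failure of $q_t<p_{t+1}$ at some $t$ together with a failure of $p_t<q_{t+1}$ at the next index would yield $q_t\ge p_{t+1}\ge q_{t+2}$, impossible since the $q_i$ strictly increase; so some $t\in\{a,\dots,b\}$ satisfies all four. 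Let $t^\ast=t^\ast(P,Q)$ be the smallest admissible index and set
\[\Phi(P,Q)=\bigl(\{p_1,\dots,p_{t^\ast}\}\cup\{q_{t^\ast+1},\dots,q_k\},\ \{q_1,\dots,q_{t^\ast}\}\cup\{p_{t^\ast+1},\dots,p_k\}\bigr).\]
Admissibility ensures the first component is an occurrence of $w$ in $xy$ and the second an occurrence of $w$ in $yz$, so $\Phi$ lands in second-kind pairs; and since $\Phi$ merely redistributes the elements of $P\cup Q$, it preserves $\sigma(P)+\sigma(Q)$, hence the exponent.

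The one real obstacle is the injectivity of $\Phi$. Swapping two tails at a fixed index is an involution, so any first-kind preimage of a second-kind pair $(P',Q')$ is recovered by swapping $(P',Q')$ at some index $t$, and it must be shown that at most one such $t$ occurs. Suppose $t_1<t_2$ both occur, producing first-kind pairs $(P_i,Q_i)$ (each obtained from $(P',Q')$ by the swap at $t_i$) with $t^\ast(P_i,Q_i)=t_i$. Since $t_1+1\le t_2$, the entries of $P_2$ and $Q_2$ at indices $t_1$ and $t_1+1$ are precisely $p'_{t_1},p'_{t_1+1}$ and $q'_{t_1},q'_{t_1+1}$. From the fact that $P'$ is an occurrence in $xy$ we get $p'_{t_1}\le\alpha+\beta$, and from the fact that the swap of $(P',Q')$ at $t_1$ is a genuine first-kind pair we get $p'_{t_1+1}\ge\alpha+1$, $p'_{t_1}<q'_{t_1+1}$ and $q'_{t_1}<p'_{t_1+1}$; altogether these say that $t_1$ is admissible for $(P_2,Q_2)$, contradicting $t^\ast(P_2,Q_2)=t_2>t_1$. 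Thus $\Phi$ is an exponent-preserving injection, whence
\[\qbin{xy}{w}\qbin{yz}{w}-\qbin{xyz}{w}\qbin{y}{w}=\sum_{(P',Q')\notin\operatorname{im}\Phi}q^{E(P',Q')}\in\N[q],\]
the degenerate situations (for instance $w=\varepsilon$, or a letter of $w$ missing from one of the words so that one of the pair-sets is empty) being subsumed by the same identity. Everything apart from the injectivity check is routine position-counting.
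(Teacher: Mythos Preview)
Your proof is correct and follows the same high-level strategy as the paper---build an exponent-preserving injection from pairs (occurrence of $w$ in $xyz$, occurrence of $w$ in $y$) into pairs (occurrence in $xy$, occurrence in $yz$)---but the explicit injection you construct is genuinely different from the paper's, and in some respects cleaner.

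The paper splits into three cases according to whether the occurrence $P$ of $w$ in $xyz$ lies entirely in $yz$, entirely in $xy$ (with at least one letter in $x$), or straddles $x$ and $z$. In the first two cases it essentially uses the identity map (matching $(P,Q)\mapsto(Q,P)$ or $(P,Q)\mapsto(P,Q)$), and in the third case it uses a coordinate-wise $\min/\max$ swap on the $y$-segment: writing the $y$-part of $P$ as $(i_{p+q},\ldots,i_{p+1})$ and the corresponding part of $Q$ as $(j_{p+q},\ldots,j_{p+1})$, it sets $\gamma_{p+n}=\min(i_{p+n},j_{p+n})$ and $\delta_{p+n}=\max(i_{p+n},j_{p+n})$. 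The paper then leaves the verification that the resulting pairs are pairwise distinct to the reader.

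Your construction is instead a single uniform rule: swap the tails of $P$ and $Q$ at the smallest admissible index $t^\ast$. This subsumes the paper's easy cases automatically (for $P\subseteq[\alpha+1,N]$ one checks $t^\ast=0$, giving $(Q,P)$ exactly as in the paper), handles the hard case without a separate $\min/\max$ device, and---crucially---comes with an explicit injectivity argument via the minimality of $t^\ast$, which the paper does not supply. The cost is the intermediate-value lemma needed to show that an admissible $t$ exists, but that is short. Overall your argument is tighter and self-contained where the paper is somewhat sketchy; the two injections do not in general coincide (they can send the same pair to different images), but either suffices.
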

\begin{proof} We list all pairs of occurrences of $w$ as a subword of both $xyz$ and $y$ and injectively match these pairs of occurrences with distinct occurrences in $xy$ and $yz$. So every contribution to the second term is always compensated by a contribution to the first term. There are three cases.
  
  If we focus on an occurrence of $w$ in $xyz$ where all the selected letters appear in $yz$, then the same occurrence appears in $yz$ and they contribute equally to $\qbin{xyz}{w}$ and respectively $\qbin{yz}{w}$. Similarly, any occurrence of $w$ in $y$ appears in the same position in $xy$. They contribute equally to $\qbin{y}{w}$ and respectively $\qbin{xy}{w}$.

  If we focus on an occurrence of $w$ in $xyz$ where all the selected letters appear in $xy$ (and at least a letter of $x$ is selected, because the case where $w$ is a subword of $y$ has been treated above), then the same occurrence appears in $xy$. The contribution of the first to $\qbin{xyz}{w}$ has an extra factor $q^{|w|\, |z|}$ compared with the contribution of the second to $\qbin{xy}{w}$.   Similarly, any occurrence of $w$ in $y$ appears in $yz$. They contribute to $\qbin{y}{w}$ and respectively $\qbin{yz}{w}$ and the second one has an extra factor $q^{|w|\, |z|}$.

  Finally, the remaining occurrences of $w$ in $xyz$ are such that at least one letter is selected in both $x$ and $z$. We use the same notation as Salomaa. Let $xyz=x_m\cdots x_0y_\ell\cdots y_0z_k\cdots z_0$. Let us consider the occurrence given by \[w=x_{i_{|w|}}\cdots x_{i_{p+q+1}}y_{i_{p+q}}\cdots y_{i_{p+1}}z_{i_p}\cdots z_{i_1}\]
where $p\ge 1$, $p+q+1\le |w|$ and $k\ge i_p> \cdots >i_1\ge 0$, $\ell\ge i_{p+q}> \cdots >i_{p+1}\ge 0$, $m\ge i_{|w|}> \cdots >i_{p+q+1}\ge 0$. We also consider any occurrence $y_{j_{|w|}}\cdots y_{j_1}$ of $w$ within $y$, $\ell\ge j_{|w|}> \cdots >j_{1}\ge 0$. Let $\gamma_{p+n}=\min(i_{p+n},j_{p+n})$ and  $\delta_{p+n}=\max(i_{p+n},j_{p+n})$ for $n=1,\ldots,q$. Now $w$ occurs in $xy$ as \[x_{i_{|w|}}\cdots x_{i_{p+q+1}}y_{\gamma_{p+q+1}} \cdots y_{\gamma_{p+1}} y_{j_p}\cdots y_{j_1}\] and $w$ occurs in $yz$ as
\[y_{j_{|w|}}\cdots y_{j_{p+q+1}} y_{\delta_{p+q+1}} \cdots y_{\delta_{p+1}} z_{i_p}\cdots z_{i_1}.\]
As in the previous case a common power $q^{(|w|-p)}$ appears. To conclude with the proof, the reader may observe that the considered pairs of occurrences of $w$ within $xy$ and $yz$ and pairwise distinct. 
\end{proof}

\subsection{On expressing generalized $q$-Parikh matrices with E{\u{g}}ecio{\u{g}}lu's}
Recall the notation $\mathcal{E}_k$ from \cref{def:Ek} in which we consider the specific word $12\cdots k$. As pointed out by \c{S}erb\u{a}nu\c{t}\u{a} \cite{Serbanuta2004}, elements of a Parikh matrix $\mathcal{P}_z(u)$ associated with a word~$z$ can be related to elements of a classical\footnote{classical in the sense that it is a matrix like those studied initially in \cite{MateescuSalomaa2001}. Note here that this is achieved by adopting a larger alphabet to avoid the redundancies that can appear in $z$.} Parikh matrix $\mathcal{E}_{|z|}(\sigma_z(u))$ for which the author considers a particular morphism that we recall below. Hence algebraic properties of Parikh matrices can be transferred to matrices associated with a word. We investigate this question for our $q$-deformations. 

As a preliminary comment, for an arbitrary non-erasing morphism $\varphi:A^*\to B^*$ and words $w\in A^*$, $u\in B^*$, there is a formula to compute  $\qbin{\varphi(w)}{u}$ as 
\[
\sum_{\ell=1}^{|u|}
\sum_{\substack{ u_1,\ldots,u_\ell\in B^+\\ u=u_1\cdots u_\ell}}
\sum_{\substack{w=w_0a_1w_1\cdots a_\ell w_\ell\\ a_1,\ldots a_\ell\in A\\ w_1,\ldots,w_\ell \in A^*}}
\qbin{\varphi(a_1)}{u_1}\cdots \qbin{\varphi(a_\ell)}{u_\ell}
q^{\sum_{i=1}^\ell |u_i|(|\varphi(w_ia_{i+1}\cdots a_\ell w_\ell )|-|u_{i+1}\cdots u_\ell|)}.\]
In particular, if the morphism is $r$-uniform, then the exponent can be rewritten as
\[\sum_{i=1}^\ell |u_i|(r(\ell-i)+r|w_i\cdots w_\ell|-|u_{i+1}\cdots u_\ell|).\]
This is the $q$-analogue of \cite[Thm.~24]{LejeuneLeroy2020} and it can be deduced form \cref{thm:powers}. The idea is to highlight subwords $u_i$ forming an occurrence of $u$ within blocks of the form $\varphi(a_i)$ for some of the letters $a_i$ constituting $w$. The information about the exponent corresponding to such an occurrence of $u_i$ in $\varphi(a_i)$ is encoded by $\qbin{\varphi(a_i)}{u_i}$. It still needs to be corrected by the number of letters to the right of the block $\varphi(a_i)$, i.e., $|\varphi(w_ia_{i+1}\cdots a_\ell w_\ell )|$ and different from the last letters of $u$ that still needs to be taken into account.

\begin{definition}
Let $z=z_1\cdots z_\ell$ be a word of length~$\ell$ over $A$. For all $a\in A$, we define the morphism $\sigma_z:A^*\to\{1,\ldots,\ell\}^*$ by 
\[\sigma_z(a)=j_1\cdots j_r \quad \text{ whenever } z_{j_1}=\cdots=z_{j_r}=a,\]
i.e., $\sigma_z$ maps a letter to the word encoding the positions of its occurrences within $z$.\end{definition}
As an example, with $z=121323$, we have \[\sigma_z:1\mapsto 13,\ 2\mapsto 25,\quad 3\mapsto 46.\]
Notice that $\sigma_z(z_i)$ always contains $i$. Conversely, every $i\in\{1,\ldots,\ell\}$ appears in exactly one of the images $\sigma_z(j)$, the one such that $z_i=j$.

\begin{lemma}  Let $z$ be a word such that $z_i\neq z_{i+1}$ for all $1\le i<|z|$.
  Let $1\le i\le j\le |z|$, the $q$-binomial
  \[\qbin{\sigma_z(z_i\cdots z_j)}{i\cdots j}\] is a monomial of the form $q^n$ for some $n$. Otherwise stated, $i\cdots j$ appears exactly once in $\sigma_z(z_i\cdots z_j)$.
\end{lemma}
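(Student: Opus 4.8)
<br>

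The plan is to unwind the definition of $\sigma_z$ and argue that the target word $i\,(i{+}1)\cdots j$ occurs as a subword of $\sigma_z(z_i\cdots z_j)$ in exactly one way, so that $\binom{\sigma_z(z_i\cdots z_j)}{i\cdots j}$ is a single monomial $q^n$ (its coefficient being $1$ and hence, by \cref{thm:powers}, it is $q^{\mathsf{s}(Y)-\mathsf{s}(j-i+1)}$ for the unique witnessing tuple $Y$). First I would record the two bookkeeping facts stated just before the lemma: for each letter $a$, the word $\sigma_z(a)$ lists, in increasing order, exactly the positions $p$ with $z_p=a$; and every index $p\in\{1,\ldots,\ell\}$ occurs in precisely one image, namely $\sigma_z(z_p)$. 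Writing $w=z_i\cdots z_j$, this means $\sigma_z(w)=\sigma_z(z_i)\sigma_z(z_{i+1})\cdots\sigma_z(z_j)$ is a concatenation of $j-i+1$ blocks, the $m$-th block ($m=i,\ldots,j$) being the increasing list of all positions $p$ with $z_p=z_m$; note each index of $\{i,\ldots,j\}$ appears at least once among these blocks, since index $m$ lies in the $m$-th block.

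Next I would analyse occurrences of $i\,(i{+}1)\cdots j$ in $\sigma_z(w)$. Suppose $p_i<p_{i+1}<\cdots<p_j$ is a strictly increasing sequence of positions in $\sigma_z(w)$ spelling $i,i{+}1,\ldots,j$ in order; say the letter $m$ (for $i\le m\le j$) is read at position $p_m$, which lies inside some block, say block number $b(m)$. Since the symbol written there is the integer $m$, and block $b(m)$ consists only of positions $p$ with $z_p=z_{b(m)}$, we must have $z_m=z_{b(m)}$. The crucial point is that $b$ is weakly increasing (because $p_i<\cdots<p_j$ and blocks are laid out in order $i,i+1,\ldots,j$), and — this is where the hypothesis $z_i\neq z_{i+1}$ for all $i$ enters — consecutive blocks have distinct defining letters, so $b$ cannot repeat a value at two indices $m,m+1$ whose letters happen to coincide unless those blocks are non-adjacent; more carefully, I claim $b(m)=m$ for all $m$. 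Indeed $b$ is weakly increasing with $b(m)\in\{i,\ldots,j\}$, so if $b$ were not the identity it would skip some value and repeat another, i.e. $b(m)=b(m+1)$ for some $m$; but then $z_m=z_{b(m)}=z_{b(m+1)}=z_{m+1}$, contradicting $z_m\neq z_{m+1}$. Hence $b(m)=m$, meaning the letter $m$ is read inside the $m$-th block; but the $m$-th block contains the symbol $m$ exactly once (it is the sorted list of positions equal to $z_m$, and $m$ is itself such a position and appears once in that list). Therefore $p_m$ is forced to be the position of the symbol $m$ within its own block, so the occurrence is unique, proving $\binom{\sigma_z(z_i\cdots z_j)}{i\cdots j}$ is a monomial $q^n$.

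The main obstacle is the step showing $b(m)=m$, i.e. that one cannot "borrow" the symbol $m$ from a later block — this is precisely what fails without the no-equal-adjacent-letters hypothesis (if $z_m=z_{m+1}$ then block $m$ and block $m+1$ both contain $m$ and $m{+}1$ and one obtains several occurrences), so the argument must use that hypothesis, and I expect the cleanest route is the monotonicity-plus-pigeonhole argument above rather than a direct case analysis. A minor check worth spelling out is that an occurrence does exist at all (take $p_m=$ the position of symbol $m$ inside block $m$; these are increasing because blocks are concatenated in order), so the binomial is genuinely $q^n$ with $n\ge 0$ and not the zero polynomial.
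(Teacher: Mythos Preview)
Your proof is correct and follows essentially the same approach as the paper: both arguments hinge on the observation that, because $z_m\neq z_{m+1}$, the symbols $m$ and $m{+}1$ can never lie in a common block $\sigma_z(z_k)$, which forces each $m\in\{i,\ldots,j\}$ to be read from its own block $\sigma_z(z_m)$. Your write-up is simply a more detailed version of the paper's terse argument, making the weakly-increasing block map $b$ and the pigeonhole step explicit (and also verifying existence of at least one occurrence, which the paper leaves implicit).
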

\begin{proof}
By assumption $z_k\neq z_{k+1}$, $i \leq k < j$. Hence the letters $k$ and $k+1$ cannot both appear in the image of a letter. To get an occurrence of $i\cdots j$ in $\sigma_z(z_i\cdots z_j)$, we must have that
the occurrence of $k$ is the occurrence of $k$ in $\sigma_z(z_k)$. Since $k$ appears only once in $\sigma_z(z_k)$ by definition of the morphism $\sigma_z$, the claim follows.


\end{proof}
\begin{proposition}\label{prop:extraProperty}
  Let $A$ be an alphabet of size~$k$. Let $z$ be a word of length $k.r$ such that $|z|_a=r$ for all $a\in A$ and $z_i\neq z_{i+1}$ for all $1\le i<k.r$. We have
  \begin{align}
  \qbin{\sigma_z(u)}{i\cdots j} &= \qbin{\sigma_z(z_i\cdots z_j)}{i\cdots j}\cdot \qbin[q^r]{u}{z_i\cdots z_j}\label{eq:extraProperty1} \quad \text{ and }\\
    \qbin{\sigma_z(u)^{R}}{i\cdots j} &= \qbin{\sigma_z(z_i\cdots z_j)^R}{j \cdots i}\qbin[q^r]{\widetilde{u}}{z_i\cdots z_j} \label{eq:extraProperty2}
  \end{align}
  for all $1\le i\le j\le k.r$. Here we have $x^R = \widetilde{x}$ for the sake of readability.
\end{proposition}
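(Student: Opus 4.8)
The plan is to deduce both identities from the formula for $\qbin{\varphi(w)}{u}$ recalled above (the $q$-analogue of \cite[Thm.~24]{LejeuneLeroy2020}), in its $r$-uniform form. The standing hypotheses on $z$ give $|z|=kr$ and make $\sigma_z$ an $r$-uniform morphism, so that form applies.

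For \eqref{eq:extraProperty1}, apply that formula with $\varphi=\sigma_z$, input word $u$, and pattern $i\cdots j$; put $\ell=j-i+1$. The crucial point is that the assumption $z_m\neq z_{m+1}$ collapses the triple sum to a single one: if a factor of the pattern $i\cdots j$ appearing in the formula contained two (necessarily consecutive) integers $c,c+1$, then both would have to occur in a single image $\sigma_z(a)$, forcing $z_c=z_{c+1}=a$, which is impossible. Hence every such factor is a single letter, there are exactly $\ell$ of them, the $m$-th is $(i+m-1)$, and the corresponding letter of $u$ is forced to be $z_{i+m-1}$; thus the surviving terms are indexed precisely by the occurrences of $z_iz_{i+1}\cdots z_j$ as a subword of $u$, written $u=w_0z_iw_1z_{i+1}\cdots z_jw_\ell$. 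For such an occurrence, $\prod_{m=1}^{\ell}\qbin{\sigma_z(z_{i+m-1})}{(i+m-1)}$ is a power of $q$ with exponent depending only on $z,i,j$ (each integer $i+m-1$ occurs exactly once in $\sigma_z(z_{i+m-1})$), and using that each pattern-factor has length $1$ in the $r$-uniform form of the exponent, the term equals $q^{C}\cdot q^{r\alpha}$, where $C$ depends only on $z,i,j$ and $\alpha=\sum_{m=1}^{\ell}|w_mw_{m+1}\cdots w_\ell|$ is exactly the exponent that this occurrence of $z_i\cdots z_j$ contributes to $\qbin{u}{z_i\cdots z_j}$, by the combinatorial reading of \cref{thm:powers}. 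Summing over occurrences pulls out $q^{C}$ and leaves $\sum_{\mathrm{occ}}(q^r)^{\alpha}=\qbin[q^r]{u}{z_i\cdots z_j}$, so $\qbin{\sigma_z(u)}{i\cdots j}=q^{C}\qbin[q^r]{u}{z_i\cdots z_j}$. Since $C$ is independent of $u$ and $\qbin[q^r]{z_i\cdots z_j}{z_i\cdots z_j}=1$, evaluating at $u=z_i\cdots z_j$ identifies $q^{C}=\qbin{\sigma_z(z_i\cdots z_j)}{i\cdots j}$ (a monomial, as the preceding lemma shows), and \eqref{eq:extraProperty1} follows.

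For \eqref{eq:extraProperty2}, first observe that $\sigma_z(u)^R=\widetilde{\sigma_z(u)}=\psi(\widetilde u)$, where $\psi$ is the $r$-uniform morphism given on letters by $\psi(a)=\widetilde{\sigma_z(a)}$; indeed, for $u=u_1\cdots u_m$ one has $\widetilde{\sigma_z(u_1\cdots u_m)}=\widetilde{\sigma_z(u_m)}\cdots\widetilde{\sigma_z(u_1)}=\psi(u_m)\cdots\psi(u_1)=\psi(\widetilde u)$. Running the argument above verbatim for $\qbin{\psi(\widetilde u)}{i\cdots j}$ — the collapse to single-letter factors is unchanged, since an integer $c$ occurs in $\psi(a)=\widetilde{\sigma_z(a)}$ exactly when $z_c=a$ — yields $\qbin{\sigma_z(u)^R}{i\cdots j}=q^{C'}\qbin[q^r]{\widetilde u}{z_i\cdots z_j}$ with $C'$ depending only on $z,i,j$. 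Evaluating at $\widetilde u=z_i\cdots z_j$, i.e.\ $u=\widetilde{z_i\cdots z_j}$, gives $q^{C'}=\qbin{\psi(z_i\cdots z_j)}{i\cdots j}$, and a short computation of the single exponent involved — parallel to the proof of the preceding lemma, or via \cref{cor:reversal} applied to the monomial $\qbin{\sigma_z(z_i\cdots z_j)}{i\cdots j}$ — shows this equals $\qbin{\sigma_z(z_i\cdots z_j)^R}{j\cdots i}$, completing \eqref{eq:extraProperty2}.

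I expect the main difficulty to be bookkeeping rather than anything conceptual: verifying that the exponent produced by the formula splits cleanly as $r\alpha$ plus a constant — which is exactly where $r$-uniformity is used — and, for \eqref{eq:extraProperty2}, handling the reversals carefully so that the residual factor is $\qbin[q^r]{\widetilde u}{z_i\cdots z_j}$ (and not $\qbin[q^r]{u}{\widetilde{z_i\cdots z_j}}$) and the leading monomial appears with the reversed pattern $j\cdots i$, as in the statement.
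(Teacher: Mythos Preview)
Your argument is correct. For \eqref{eq:extraProperty1} it is essentially the paper's proof in different packaging: both set up the bijection between occurrences of $z_i\cdots z_j$ in $u$ and of $i\cdots j$ in $\sigma_z(u)$, and both observe that the exponent splits as a constant plus $r$ times the exponent coming from $\qbin{u}{z_i\cdots z_j}$. You route this through the general $r$-uniform morphism formula and identify the constant by specialising to $u=z_i\cdots z_j$, whereas the paper identifies it directly by reading off the occurrence inside $\sigma_z(z_i\cdots z_j)$; the content is the same.

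For \eqref{eq:extraProperty2} the two proofs diverge. The paper again sets up a direct bijection and then has to verify an auxiliary identity between two explicit exponents (its equation tagged ``annoyingEquality''), which is done by an algebraic manipulation using \cref{cor:reversal}. Your reduction via the auxiliary $r$-uniform morphism $\psi(a)=\widetilde{\sigma_z(a)}$ and the observation $\sigma_z(u)^R=\psi(\widetilde u)$ is cleaner: it lets you rerun the argument of \eqref{eq:extraProperty1} verbatim and pushes all the reversal bookkeeping into a single monomial identity $\qbin{\psi(z_i\cdots z_j)}{i\cdots j}=\qbin{\sigma_z(z_i\cdots z_j)^R}{j\cdots i}$. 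That identity is the one step you leave as ``a short computation''; for the record, writing $\sigma_z(z_m)=\gamma_m\, m\, \delta_m$, both sides have exponent $\sum_{m=i}^{j}|\gamma_m|+(r-1)\binom{j-i+1}{2}$, so it does go through. The trade-off: your route relies on the morphism formula stated just before the definition of $\sigma_z$, while the paper's proof is self-contained; on the other hand you avoid the somewhat opaque exponent juggling of the paper's final paragraph.
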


\begin{proof}
We first prove first \eqref{eq:extraProperty1}. There is a one-to-one correspondence between the occurrences of $z_i\cdots z_j$ in $u$ and $i\cdots j$ in $\sigma_z(u)$. Consider one such occurrence and write $u=x_{i-1}z_ix_i\cdots x_{j-1}z_jx_j$ where $x_k$ are words. From \cref{thm:powers}, the contribution of this occurrence to $\qbin{u}{z_i\cdots z_j}$ is
  \[
  q^{\sum_{m=i}^j (m-i+1) |x_m|}.
  \]
  This particular occurrence corresponds to an occurrence of $i\cdots j$ within $\sigma_z(u)$ as factorized below
  \[
  \sigma_z(u)=\sigma_z(x_{i-1})\alpha_i\, i\, \beta_i\sigma_z(x_i)\cdots\sigma_z(x_{j-1})\alpha_j\, j\, \beta_j\sigma_z(x_j)
  \]
  where we have highlighted the occurrence of $k\in\{i,\ldots,j\}$ within $\sigma_z(z_k)=\alpha_k k \beta_k$. The contribution of this occurrence to $\qbin{\sigma_z(u)}{i\cdots j}$ is
  \[
  q^{\sum_{m=i}^j (m-i+1) |\beta_m\sigma_z(x_m)\alpha_{m+1}|}=
  q^{r\sum_{m=i}^j (m-i+1) |x_m|}.q^{\sum_{m=i}^j (m-i+1) |\beta_m\alpha_{m+1}|},
  \]
  where we set $\alpha_{j+1}=\varepsilon$ and we used the fact that $\sigma_z$ is an $r$-uniform morphism. To conclude with the proof, observe that the second factor on the r.h.s.~is the contribution to
 $\qbin{\sigma_z(z_i \cdots z_j)}{i\cdots j}$ of the occurrence of $i\cdots j$ within $\alpha_i \, i\, \beta_i\cdots \alpha_j\, j\, \beta_j$.
  
We then prove \eqref{eq:extraProperty2} . There is a one-to-one correspondence between the occurrences of $z_j\cdots z_i$ in $u$ and $z_i\cdots z_j$ in $\widetilde{u}$ (and occurrences of $i\cdots j$ in $\sigma_z(u)^R$).
Consider one such occurrence and write $u=x_{j}z_j x_{j-1} \cdots x_{i}z_ix_{i-1}$ where the $x_k$ are words.
We thus have
\begin{align*}
\widetilde{u} &= \widetilde{x}_{i-1}z_i\widetilde{x}_{i}\cdots \widetilde{x}_{j-1}z_j \widetilde{x}_{j} \quad \text{and}\\
\sigma_z(u)^R &= \sigma_z(x_{i-1})^R \sigma_z(z_i)^R \sigma_z(x_{i})^R \cdots \sigma_z(x_{j-1})^R \sigma_z(z_j)^R \sigma_z(x_{j})^R.
\end{align*}
From \cref{thm:powers}, the contribution of the occurrence of $z_i\cdots z_j$ to $\qbin{\widetilde{u}}{z_i\cdots z_j}$ is
$q^{\sum_{m=i}^j (m-i+1) |\widetilde{x}_m|}$. For each $k \in \{i,\ldots,j\}$, write $\sigma(z_k)^R = \alpha_k k \beta_k$.
Then the corresponding occurrence of $i\cdots j$ in $\sigma_z(u)^R$ contributes
  \[
  q^{\sum_{m=i}^j (m-i+1) |\beta_m\sigma_z(x_m)^R\alpha_{m+1}|}=
	  q^{r\sum_{m=i}^j (m-i+1) |x_m|}.q^{\sum_{m=i}^j (m-i+1) |\beta_m \alpha_{m+1}|}
  \]
  to $\qbin{\sigma_z(u)^R}{i\cdots j}$. (Again we set $\alpha_{j+1}=\varepsilon$ and we have used the fact that $\sigma_z$ is $r$-uniform.)
  We thus find
  \[
  \qbin{\sigma_z(u)^R}{i\cdots j} = \qbin[q^r]{\widetilde{u}}{z_i\cdots z_j} \cdot q^{\sum_{m=i}^j (m-i+1) |\beta_m \alpha_{m+1}|}.
  \]
  It remains to show that
  \begin{equation}\label{eq:annoyingEquality}
  q^{\sum_{m=i}^j (m-i+1) |\beta_m \alpha_{m+1}|} =  \qbin{\sigma_z(z_i \cdots z_j)^R}{j \cdots i}.
  \end{equation}
  Notice that $\sigma_z(z_i \cdots z_j)^R = \alpha_j j \beta_j \cdots \alpha_i i \beta_i$,
  so that the right-hand-side of \cref{eq:annoyingEquality} is 
  \[
  q^{\sum_{m = i}^j (j-m+1)|\beta_m\alpha_{m-1}|}.
  \]
Here we set $\alpha_{i-1} = \varepsilon$.

On the other hand, we observe that the term on the left-hand-side of \cref{eq:annoyingEquality} is also equal to 
\begin{align*}
  \qbin{(\sigma_z((z_i \cdots z_j)^R))^R}{i\cdots j} &= q^{(r-1)(j+1-i)^2}\qbin[1/q]{\sigma_z((z_i \cdots z_j)^R)}{j \cdots i}\\
  &=
  q^{(r-1)(j+1-i)^2 - \sum_{m=i}^j (j-m+1)|\widetilde{\alpha}_{m}\widetilde{\beta}_{m-1}| },
  \end{align*}
  where we let $\beta_{i-1} = \varepsilon$ and the first equality comes from \cref{cor:reversal}.
  Now to show that \eqref{eq:annoyingEquality} is true, it is enough to show that
  \[
  \sum_{m=i}^j (j-m+1)|\widetilde{\alpha}_{m}\widetilde{\beta}_{m-1}|  + \sum_{m = i}^j (j-m+1)|\beta_m\alpha_{m-1}| = (r-1)(j+1-i)^2.
  \]
  Indeed, joining the sums on the left-hand-side, we simplify to
\begin{align*}   
\sum_{m=i}^j (j-m+1)|\alpha_m\beta_m\alpha_{m-1}\beta_{m-1}| &= (r-1)(j-i+1) + 2(r-1)\sum_{m=i+1}^j(j-m+1)\\
&= (r-1)(j-i+1) + 2(r-1)\sum_{k=1}^{j-i}k \\&= (r-1)(j - i+1 + (j-i)(j-i+1)) = (r-1)(j+1-i)^2.
  \end{align*}
  Thus the proof is complete.
\end{proof}

In the next statement if $M$ is a matrix whose entries are polynomials in $q$, an expression of the form $M(q^r)$ means that we have substituted $q$ by $q^r$ in every entry. 

\begin{corollary}\label{cor:new_expression}
Let $z$ be as in \cref{prop:extraProperty}. Let further $Z$ (resp. $C$) be the upper triangular matrix whose above-diagonal entries $Z_{i,j+1}$ (resp., $C_{i,j+1}$), $i \leq j \leq |z|+1$, are of the form
  $\qbin{\sigma_z(z_i\cdots z_j)}{i\cdots j}$ (resp. $q^{\mathsf{s}(j-i)}$).
  Then
  \[
  C(q^{r-1})\odot \mathcal{E}_{|z|}(\sigma_z(u)) = Z \odot \mathcal{P}_z(u)(q^r)
  \]
  and
   \[
   C(q^{r-1}) \odot \mathcal{E}_{|z|}(\sigma_z(u))^{-1} = Z \odot (\mathcal{P}_z(u)(q^r))^{-1}.
   \]
   \end{corollary}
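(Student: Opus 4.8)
The two identities have the same shape, so I would prove them in parallel, reading off both sides entry by entry. Fix $1 \le i \le j \le |z|+1$; for $i=j$ both sides give $1$ on the diagonal, and below the diagonal both sides are $0$, so only the case $i < j$ needs work. Write $j' = j-1$ so that the $(i,j)$ entry of $\mathcal{E}_{|z|}(\sigma_z(u))$ involves the word $i\,(i+1)\cdots j'$ in the alphabet $\{1,\dots,|z|\}$, while the $(i,j)$ entry of $\mathcal{P}_z(u)$ involves the factor $z_i\cdots z_{j'}$ of $z$.

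For the first identity I would compute the $(i,j)$ entry of the left-hand side using \cref{thm:first} applied to the canonical word $12\cdots|z|$: it equals $q^{(r-1)\mathsf{s}(j-i)} \cdot q^{\mathsf{s}(j-i)}\,\qbin{\sigma_z(u)}{i\cdots j'}$, where the first factor comes from $C(q^{r-1})$ and the second from the power-of-$q$ part of the Parikh matrix $\mathcal{E}_{|z|}$. On the right-hand side, the $(i,j)$ entry of $Z \odot \mathcal{P}_z(u)(q^r)$ is $\qbin{\sigma_z(z_i\cdots z_{j'})}{i\cdots j'} \cdot q^{r\,\mathsf{s}(j-i)}\,\qbin[q^r]{u}{z_i\cdots z_{j'}}$, again using \cref{thm:first} (and substituting $q \mapsto q^r$). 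Now \cref{eq:extraProperty1} of \cref{prop:extraProperty} says precisely $\qbin{\sigma_z(u)}{i\cdots j'} = \qbin{\sigma_z(z_i\cdots z_{j'})}{i\cdots j'}\cdot \qbin[q^r]{u}{z_i\cdots z_{j'}}$, so after canceling the matching $\qbin{\sigma_z(z_i\cdots z_{j'})}{i\cdots j'}$ and $\qbin[q^r]{u}{\cdots}$ factors it remains only to check the powers of $q$: $q^{(r-1)\mathsf{s}(j-i) + \mathsf{s}(j-i)} = q^{r\,\mathsf{s}(j-i)}$, which is immediate. This establishes the first equality.

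For the second identity I would proceed the same way, but now I invoke \cref{thm:reverse} (legitimate since $z$, and hence the canonical word $12\cdots|z|$, satisfies $z_i\neq z_{i+1}$): the inverse of a $q$-Parikh matrix is, up to the sign pattern $(-1)^{i+j}$, the antitranspose of the $q$-Parikh matrix associated to the reversed inducing word. Concretely, $\mathcal{E}_{|z|}(\sigma_z(u))^{-1}$ has $(i,j)$ entry $(-1)^{i+j}$ times the $(|z|+2-j,\,|z|+2-i)$ entry of the Parikh matrix of $\sigma_z(u)$ induced by $(12\cdots|z|)^{\sim} = |z|\cdots 21$, which by \cref{thm:first} is a power of $q$ times $\qbin{\sigma_z(u)^R}{j'\cdots i}$ — and by \cref{cor:reversal} this equals a power of $q$ times $\qbin{\sigma_z(u)^R}{i\cdots j'}$ evaluated appropriately. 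The analogous computation applies to $(\mathcal{P}_z(u)(q^r))^{-1}$ via \cref{thm:reverse} for $\mathcal{P}_z$ and $\mathcal{P}_{\widetilde z}$. Then \cref{eq:extraProperty2} of \cref{prop:extraProperty}, namely $\qbin{\sigma_z(u)^R}{i\cdots j'} = \qbin{\sigma_z(z_i\cdots z_{j'})^R}{j'\cdots i}\,\qbin[q^r]{\widetilde{u}}{z_i\cdots z_{j'}}$, provides exactly the factorization needed, and the same sign $(-1)^{i+j}$ appears on both sides and cancels.

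I expect the main obstacle to be purely bookkeeping: keeping straight the several shifts (between $j$ and $j'=j-1$, between the index-reversal $i \leftrightarrow |z|+2-i$ coming from the antitranspose, and the word-reversal inside the $q$-binomials), and verifying that the accumulated exponents of $q$ from $C(q^{r-1})$, from the power-of-$q$ parts of the Parikh matrices, from the $q \mapsto q^r$ substitution, and from the reversal identity \cref{cor:reversal} all balance. There is no conceptual difficulty once \cref{prop:extraProperty} is in hand; the proof is essentially the observation that \cref{prop:extraProperty} is the entrywise statement of which \cref{cor:new_expression} is the matrix-level repackaging, together with the fact that the "power-of-$q$ correction" matrices on the two sides differ exactly by the substitution $q \mapsto q^{r-1}$ versus the factor $q^r$ hidden in $\mathcal{P}_z(u)(q^r)$, which is what the factor $C(q^{r-1})$ and the exponent identity $(r-1)\mathsf{s}(j-i)+\mathsf{s}(j-i) = r\,\mathsf{s}(j-i)$ encode.
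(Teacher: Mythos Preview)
Your approach is essentially the same as the paper's: an entrywise comparison using \cref{thm:first} on both sides, then invoking \cref{eq:extraProperty1} for the first identity and \cref{eq:extraProperty2} for the second, with the inverse entries unpacked via the reversal machinery. The paper uses \cref{thm:inverse2} directly (getting a $1/q$ substitution and then applying \cref{cor:reversal}), whereas you go through \cref{thm:reverse}; these are equivalent since \cref{thm:reverse} is just \cref{thm:inverse2} composed with \cref{prop:reverse}. One small slip: after applying \cref{thm:reverse} and \cref{thm:first}, the entry of $\mathcal{P}_{|z|\cdots 1}(\sigma_z(u))$ involves $\qbin{\sigma_z(u)}{j'\cdots i}$, not $\qbin{\sigma_z(u)^R}{j'\cdots i}$ --- the reversal on $\sigma_z(u)$ only appears after you subsequently apply \cref{cor:reversal}, exactly as you say in the next clause. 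Also, at the $(i,j)$ entry the power should be $\mathsf{s}(j-i-1)$ rather than $\mathsf{s}(j-i)$; since the same shift occurs on both sides this does not affect the argument, and it is precisely the kind of bookkeeping you flag yourself.
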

   \begin{proof}
     We inspect the element at position $i$, $j+1$, the right-hand-side of the first equality is
     \[\qbin{\sigma_z(z_i\cdots z_j)}{i\cdots j} q^{r\cdot \mathsf{s}(j-i)} \qbin[q^r]{u}{z_i\cdots z_j}
     =q^{(r-1)\cdot \mathsf{s}(j-i)} q^{ \mathsf{s}(j-i)} \qbin{\sigma_z(u)}{i\cdots j}\]
     where we used \cref{eq:extraProperty1}.
  
For the second equality, inspecting the element at position $i$, $j+1$,
and using \cref{thm:inverse2},
the equality is equivalent to
\begin{multline*}
q^{(r-1)\cdot \mathsf{s}(j-i)}\cdot (-1)^{i+j+1}q^{(j+1-i)(r|u|-1)}q^{-\mathsf{s}(j-i)}\qbin[1/q]{\sigma_z(u)^R}{i \cdots j}\\
=
\qbin{\sigma_z(z_i\cdots z_j)}{i\cdots j}(-1)^{i+j+1}q^{r(j+1-i)(|u|-1)} q^{-r\cdot \mathsf{s}(j-i)} \qbin[1/q^r]{\widetilde{u}}{z_i \cdots z_j}.
\end{multline*}
Rearranging gives
\begin{align*}
\qbin[1/q]{\sigma_z(u)^R}{i \cdots j} &= q^{-(r-1)(j+1-i)^2}\qbin{\sigma_z(z_i\cdots z_j)}{i\cdots j}\qbin[1/q^r]{\widetilde{u}}{z_i\cdots z_j}\\
	&= \qbin[1/q]{(\sigma_z(z_i\cdots z_j))^R}{j \cdots i}\cdot \qbin[1/q^r]{\widetilde{u}}{z_i\cdots z_j}
\end{align*}
by \cref{cor:reversal}.
This is equivalent to \eqref{eq:extraProperty2} by replacing $1/q$ with $q$ in the argument.
\end{proof}

With $z=121323$, $u=1121323$, we have $\sigma_z(z)=132513462546$ and $\sigma_z(u) = 02021402351435$.
Verifying the above theorem, we have $r=2$,
  \[
  Z=\left(\begin{smallmatrix}
    1 & q & q^3 & q^5 & q^9 & q^{13} & q^{18}\\
    0 & 1 & q & q^2 & q^5 & q^8 & q^{12}\\
    0 & 0 & 1 & 1 & q^2 & q^4 & q^7\\
    0 & 0 & 0 & 1 & q & q^2 & q^4\\
    0 & 0 & 0 & 0 & 1 & 1 & q\\
    0 & 0 & 0 & 0 & 0 & 1 & 1\\
   0 & 0 &0 & 0 & 0 & 0 & 1 \\\end{smallmatrix}\right),
   \quad 
   \text{and}
   \quad
   C=\left(\begin{smallmatrix}
1 & 1 & q & q^{3} & q^{6} & q^{10} & q^{15} \\
0 & 1 & 1 & q & q^{3} & q^{6} & q^{10} \\
0 & 0 & 1 & 1 & q & q^{3} & q^{6} \\
0 & 0 & 0 & 1 & 1 & q & q^{3} \\
0 & 0 & 0 & 0 & 1 & 1 & q \\
0 & 0 & 0 & 0 & 0 & 1 & 1 \\
0 & 0 & 0 & 0 & 0 & 0 & 1
\end{smallmatrix}\right).
   \]
Further,
\[\mathcal{P}_z(u)=
\left(\begin{smallmatrix}
1 & q^{6} + q^{5} + q^{3} & q^{10} + q^{9} + q^{7} + q^{6} + q^{4} & q^{13} + q^{12} & q^{15} + q^{14} + q^{13} + q^{12} & q^{16} + q^{15} & q^{16} + q^{15} \\
0 & 1 & q^{4} + q & q^{7} & q^{9} + q^{7} & q^{10} & q^{10} \\
0 & 0 & 1 & q^{6} + q^{5} + q^{3} & q^{8} + q^{7} + q^{6} + 2q^{5} + q^{3} & q^{9} + q^{8} + q^{6} & q^{9} + q^{8} + q^{6} \\
0 & 0 & 0 & 1 & q^{2} + 1 & q^{3} & q^{3} \\
0 & 0 & 0 & 0 & 1 & q^{4} + q & q^{6} + q^{4} + q \\
0 & 0 & 0 & 0 & 0 & 1 & q^{2} + 1 \\
0 & 0 & 0 & 0 & 0 & 0 & 1
\end{smallmatrix}\right),
  \]
and finally
$\mathcal{E}_6(\sigma_z(u))$ is computed as
\[ \left(\!\!\begin{smallmatrix}
1 & q^{13} + q^{11} + q^{7} & q^{22} + q^{20} + q^{16} + q^{14} + q^{10} & q^{28} + q^{26} & q^{33} + q^{31} + q^{29} + q^{27} & q^{35} + q^{33} & q^{35} + q^{33} \\
0 & 1 & q^{9} + q^{3} & q^{15} & q^{20} + q^{16} & q^{22} & q^{22} \\
0 & 0 & 1 & q^{12} + q^{10} + q^{6} & q^{17} + q^{15} + q^{13} + 2q^{11} + q^{7} & q^{19} + q^{17} + q^{13} & q^{19} + q^{17} + q^{13} \\
0 & 0 & 0 & 1 & q^{5} + q & q^{7} & q^{7} \\
0 & 0 & 0 & 0 & 1 & q^{8} + q^{2} & q^{12} + q^{8} + q^{2} \\
0 & 0 & 0 & 0 & 0 & 1 & q^{4} + 1 \\
0 & 0 & 0 & 0 & 0 & 0 & 1
\end{smallmatrix}\!\!\!\!\right).
  \]
 One can then verify that $C(q)\odot \mathcal{E}_6(\sigma_z(u)) = Z\odot \mathcal{P}_z(u)(q^2)$.

\bibliographystyle{plainurl}
\bibliography{./bibliography}

\end{document}